\documentclass{amsart}

\usepackage{booktabs}
\usepackage{amsbsy}
\usepackage{amsaddr}
\usepackage{mathtools}
\usepackage{color,soul}

\usepackage{amsmath,amsthm}
\textwidth=14.5cm \oddsidemargin=1cm  \evensidemargin=1cm\setlength{\parskip}{10pt} \setlength{\headsep}{20pt}

\allowdisplaybreaks
\usepackage{mathrsfs}
\usepackage{amscd,amssymb, amsfonts, verbatim,subfigure, enumerate}
\usepackage[mathcal]{eucal}
\usepackage[super]{nth}

\usepackage[pdftex]{graphicx}

\usepackage{mathpazo}
\linespread{1.2}  
\usepackage{color,slashed}

\setcounter{tocdepth}{1}

\renewcommand{\sl}{\mathfrak{sl}}

\newcommand{\g}{\mathfrak{g}}

\newcommand{\br}{\overline}

\newcommand{\C}{\mathbb C}

\newcommand{\mbb}{\mathbb}

\newcommand{\R}{\mbb R}

\newcommand{\dbar}{\br{\partial}}

\DeclareMathOperator{\Tr}{Tr}

\newtheoremstyle{thm}
  {7pt}
  {7pt}
  {\itshape}
  {}
  {\bf}
  {.}
  {5pt}
  {\thmnumber{#2 }\thmname{#1}\thmnote{ (#3)}}

\newtheoremstyle{def}
  {7pt}
  {10pt}
  {\itshape}
  {}
  {\bf}
  {.}
  {5pt}
  {\thmnumber{#2} \thmname{#1}\thmnote{ (#3)}}

\newtheoremstyle{rem}
  {4pt}
  {10pt}
  {}
  {}
  {\itshape}
  {:}
  {3pt}
  {}

\newtheoremstyle{texttheorem}
  {8pt}
  {8pt}
  {\itshape}
  {}
  {\bf}
  {. \hspace{5pt}}
  {3pt}
  {}

\usepackage{tikz-cd}

\newtheorem*{theorem*}{Theorem}
\newtheorem*{lemma*}{Lemma}
\newtheorem*{corollary*}{Corollary}
\newtheorem*{proposition*}{Proposition}
\newtheorem*{definition*}{Definition}

\newtheorem*{conjecture}{Conjecture}

\newtheorem{theorem}{Theorem}[subsection]
\newtheorem{thm-def}{Theorem/Definition}[theorem]
\newtheorem{proposition}[theorem]{Proposition}

\newtheorem{lemma}[theorem]{Lemma}

\numberwithin{equation}{subsection}

\theoremstyle{def}

\theoremstyle{rem}
\newtheorem*{remark}{Remark}

\usepackage{hyperref}

\newcommand{\PT}{\mbb{PT}}
\newcommand{\CP}{\mbb{CP}}

\renewcommand{\i}{\mathrm{i}}

\title{Higher genus twistor spaces and the celestial torus}
\author{Seraphim Jarov}

\address{Perimeter Institute for Theoretical Physics}
\email{sjarov@perimeterinstitute.ca}

\begin{document}

\begin{abstract}
    This paper studies novel four-dimensional integrable field theories that are deformations of self-dual Yang-Mills. They are engineered by considering holomorphic Chern-Simons and BF type theories on covers of twistor space obtained by pulling back the vector bundle $\mathcal{O}(1)^2\to\CP^1$ to hyperelliptic or elliptic curves. Compactifying to 4d yields an integrable theory, which in the examples I study, are determined to leading order. The form of the higher-order corrections are bootstrapped, and I argue that the index structure and coefficients of these terms are fixed by integrability. The celestial chiral algebras of these theories are shown to live on hyper-elliptic and elliptic curves, respectively. Symmetry reducing these integrable deformations to 2d yields an example of a hyperelliptic and elliptic integrable model governing a deformation of Hitchin’s equations.
\end{abstract}

\maketitle

\tableofcontents

\section{Introduction}

Twistor theory provides a beautiful framework for constructing integrable theories in dimension 4 via the Penrose transform \cite{Penrose:1977in,Penrose:1976jq}. Most notable for this work is the correspondence between holomorphic BF theory on twistor space and self-dual Yang-Mills (SDYM) theory on $\R^4$ \cite{Witten:2003nn,Mason:2005zm,Boels:2006ir}. The significance of this example lies not only in its rich structure as a 4d integrable model (see eg. \cite{Mason:1991rf}), but also in the recent interest from the celestial holography community. In \cite{Costello:2022wso}, a framework was studied for computing form factors in self-dual gauge theory from chiral correlators living on the celestial sphere. This idea was promoted to top down constructions of celestial holography in \cite{Costello:2022jpg} and \cite{Bittleston:2024efo} by leveraging the twisted holography program \cite{Costello:2018zrm} on twistor spaces.

This paper is concerened with the following natural extensions of the above works:
\begin{itemize}
    \item \textit{Are there 4d theories whose celestial chiral algebra lives on the torus or other higher genus surfaces?}
    \item \textit{Are there deformations of SDYM theory that do not spoil integrability?}
\end{itemize}

I will show that the answer to both of the above questions is yes by building explicit examples. The construction follows a suggestion made by Costello in \cite{Costello:2021bah} to study twistor actions descending from theories defined on a new geometry - a double cover of twistor space. This geometry is constructed by starting with some higher genus curve $\Sigma\xrightarrow[]{\pi}\CP^1$ and then considering the pullback bundle
\begin{align*}
\PT_\Sigma=(\pi^*\mathcal{O}(1)\oplus\pi^*\mathcal{O}(1)\to\Sigma).    
\end{align*}
This looks like a twistor space where the twistor sphere has been replaced by $\Sigma$. This suggests that theories found by compactifying along the $\Sigma$ direction will have celestial chiral algebras living on $\Sigma$. Also, in analogy with the 4d Chern-Simons story \cite{Costello:2013zra,Costello:2019tri}, this setup provides a means of constructing 4d integrable models whose spectral parameter lives on $\Sigma$. In this paper, I provide a prescription for finding twistor actions that descend from holomorphic theories on $\PT_\Sigma$ and demonstrate that they exhibit the desirable properties mentioned above. 

I study two interesting cases in this paper. The first is holomorphic Chern-Simons theory on $\PT_\Sigma$ when $\Sigma$ is a genus 3 curve. The nice property of this example is that $\PT_\Sigma$ is Calabi-Yau, which makes for a natural setting to study the B-model as done in twisted holography. This is shown to descend to a twistorial theory which upon compactifying to 4d, reduces to a remarkably simple integrable deformation of SDYM whose algebra of collinear singularities is exactly the level 0 Kac-Moody algebra studied in the celestial holography literature\footnote{Sometimes referred to as the $\mathcal{S}$ algebra \cite{Strominger:2021mtt,Guevara:2021abz,Himwich:2021dau,Adamo:2021lrv}.} on the genus 3 curve $\Sigma$. While I was not able to explicitly write down the compactified theory to all orders, I derived the leading order lagrangian and bootstrapped the form of all high-order corrections. The coefficients and index structure of these corrections are fixed by integrability.

The second example is a holomorphic BF theory on $\PT_\Sigma$ when $\Sigma$ is a genus one curve (a torus). This is shown to descend to a deformation of self-dual gauge theory coupled to a complex scalar. Similarly to the genus 3 example, I determine the leading order Lagrangian and bootstrap the form of higher-order corrections. By studying the collinear limits of the theory, I also find a deformation of the level 0 Kac-Moody algebra studied in the celestial holography literature on the torus $\Sigma$. One could thus regard this as a construction of the dual theory of the so-called celestial torus. A rather different approach to studying the celestial torus was done in \cite{Atanasov:2021oyu}.

In both cases, I symmetry reduce the 4d integrable deformations to find deformations of Hitchin's equations \cite{Hitchin:1986vp}. These deform the Hitchin's equations into a hyperelliptic/elliptic model describing the moduli of $G$-bundles on the given curve $\Sigma$. It seems that the 4d deformations of BF theory that give rise to these Hitchin-type models could be related to the 3d BF description of Hitchin's equations presented in \cite{Vicedo_2022}.

\subsection{Main results and future work}

The main findings of this work are outlined below:
\begin{itemize}
    \item Section \ref{sec:model_build} explains a new method to compute twistor actions for 4d integrable models whose spectral parameters take values in a higher genus Riemann surface $\Sigma$. Two specific examples are worked through: one with $\Sigma$ having genus 3 and one with $ \ Sigma$ having genus 1. Using symmetry to write down leading-order Penrose transforms for these examples, I find that they produce two different novel integrable deformations of SDYM theory.
    \item Section \ref{sec:celestial} computes the celestial chiral algebras for the deformations of SDYM. I show that these deformations modify the $\mathcal{S}$ algebra into a level 0 Kac-Moody algebra living on the Riemann surfaces used in the constructions of the theories in both cases. In the case where $\Sigma$ has genus 1, this deformation gives a dual theory for the so-called celestial torus.
    \item Sections \ref{sec:bootstrap} and \ref{sec:genus1_bootstrap} show that for both deformations of SDYM I consider, they exhibit basic properties of an integrable model with the vanishing of certain amplitudes. These amplitudes are computed using standard Feynman diagrams and generalized using the chiral algebra methods of \cite{Costello:2022wso}. I also bootstrap the form of all higher-order corrections in the 4d theories descending from the twistor actions presented in section \ref{sec:model_build}. The index structure and coefficients of these corrections are shown to be fixed by integrability.
    \item Sections \ref{sec:sym_red} and \ref{sec:genus1_symred} compute the symmetry reduction of both 4d deformations and recovers a deformed Hitchin's system \cite{Hitchin:1986vp} which described the moduli of $G$-bundles on $\Sigma$. In the genus 3 case, I also relate this construction to a 4d Chern-Simons theory on $\Sigma\times\mathbb{R}^2$ and make a conjectural correspondence between holomorphic Chern-Simons theory on $\PT_\Sigma$, 4d Chern-Simons theory, and higher genus integrability in dimensions 2 and 4.
\end{itemize}

\subsection*{Future directions}

In sections \ref{sec:bootstrap} and \ref{sec:genus1_bootstrap}, I bootstrapped the form of all vertices in a new example of a 4d hyperelliptic and elliptic integrable model. I also argue that one can recursively fix the index contractions and coefficients of the higher-order corrections using integrability. This amounts to computing amplitudes to all orders and tuning the correction terms so that all tree-level amplitudes vanish. It would be interesting to compute some of these amplitudes to see if there is a pattern governing how the coefficients and indices behave at higher order. Perhaps something analogous to the recursion relations found in \cite{Gabai:2018tmm} can be found for the theories in this paper. The authors of \cite{Ren:2022sws} also have an interesting perspective on constraining bulk theories that admit a celestial dual. Alternatively, performing an explicit Penrose reduction of the twistor actions by gauge fixing would be the most satisfying derivation of the 4d model, but it seems to be quite difficult.

Finding the full 4d integrable models would lead to a better understanding of the 2d deformations of Hitchin's equations found via symmetry reduction in sections \ref{sec:sym_red} and \ref{sec:genus1_symred}. These deformations are profound as they suggest that Hitchin's equations can be deformed from describing the moduli of $G$-bundles on the spacetime $\R^2$ to the curve $\Sigma$ just by adding simple vertices that depend on a constant background tensor. In the genus 3 example, there is a 4d Chern-Simons description of the 2d hyperelliptic model. It would be interesting if one could use 4d Chern-Simons methods to get an explicit description of the 2d model. I am also interested in how the 4d integrable models I find can be related to the 3d BF description of Hitchin's system presented in \cite{Vicedo_2022}.

One of the most interesting features of SDYM is that it can be deformed to full Yang-Mills by adding a $\Tr(B^2)$ term to the action. This feature was discussed in detail in \cite{Costello:2022wso} as they computed some Yang-Mills amplitudes using the celestial chiral algebras to compute SDYM form factors in the presence of the $\Tr(B^2)$ operator. It is thus natural to ask what happens to the integrable deformations of SDYM discussed in this paper if we add the $\Tr(B^2)$ operator. A study on this would uncover any properties that the hyperelliptic/elliptic deformations might pass on to full Yang-Mills theory. The first step in this direction would be to find the new amplitudes this deformation adds to Yang-Mills. In the genus 3 deformation, the simplest is the introduction of a 4-minus amplitude that can be computed using the chiral algebra. I found this to be proportional to the $--++$ Parke-Taylor formula with an extra factor of the polynomial defining the genus 3 curve. It would be interesting to see what this deformation does to more complicated amplitudes.

\subsection*{Acknowledgements}

I would like to thank my PhD advisor, Kevin Costello, for suggesting this project and guiding me along the way. I also sincerely thank Roland Bittleston for our many discussions about this project and for teaching me most of what I know about twistor theory. Research at Perimeter Institute is supported in part by the Government of Canada through the Department of Innovation, Science and Economic Development and by the Province of Ontario through the Ministry of Colleges and Universities.

\section{Background}\label{sec:bgrnd}

\subsection{Review of the twistor space of $\mathbb{R}^4$}\label{subsec:twistor_review}

With this subsection, I review twistor basics using inhomogeneous coordinates to fix some notation for the rest of the paper. For a pedagogical review of twistors in homogeneous coordinates, see the lecture notes \cite{Adamo:2017qyl}. For examples of other works that provide some background on and use twistors in inhomogeneous coordinates, see \cite{Costello:2022wso,Costello:2023hmi}.

The twistor space of $\mathbb{R}^4$, denoted by $\mathbb{PT}$, is the total space of the holomorphic vector bundle 
\begin{align*}
    \mathcal{O}(1)\oplus\mathcal{O}(1)\to\mathbb{CP}^1.
\end{align*}
Intuitively, we can picture this bundle via the real diffeomorphism $\mathbb{CP}^1\times\mathbb{R}^4\cong\mathbb{PT}$. The $\CP^1$ coordinate I will denote by the inhomogeneous coordinate $z$ and on the $\mathcal{O}(1)$ fibres, I call the coordinates $v^{\dot{1}}$ and $v^{\dot{2}}$ which I choose to each have a first order pole at $z=\infty$ (they are sections of $\mathcal{O}(1)$). To get an explicit form of the coordinates $v^{\dot{1}}$ and $v^{\dot{2}}$, let $x^\mu$ be the standard coordinates on $\R^4$ with the Euclidean metric and define
\begin{align}\label{eq:R4_coord}
    x^{\alpha\dot{\alpha}} = \frac{1}{\sqrt{2}}\begin{pmatrix}
        x^0+\i x^3 & x^2+\i x^1\\
        -x^2+\i x^1 & x^0-\i x^3
    \end{pmatrix}
\end{align}
where $\alpha,\dot{\alpha} \in\{1,2\},\{\dot{1},\dot{2}\}$ are $SU(2)$ spinor indices with opposite chirality. It is then convenient to use the following complex coordinates
\begin{align*}
    u^{\dot{\alpha}} = x^{1\dot{\alpha}}\quad\hat{u}^{\dot{\alpha}} = x^{2\dot{\alpha}}.
\end{align*}
on $\R^4$. Using this, the fibre coordinates can be defined as $v^{\dot{\alpha}} = u^{\dot{\alpha}}+z\hat{u}^{\dot{\alpha}}$. These coordinates enjoy the transformation law
\begin{align*}
    z\mapsto1/z\implies v^{\dot{\alpha}}\mapsto v^{\dot{\alpha}}/z
\end{align*}
with the conjugation rules $\hat{v}^{\dot{\alpha}} = -(-\bar{v}^{\dot{2}},\bar{v}^{\dot{1}})/\bar{z}$ and $\hat{z} = -1/\bar{z}$. These new coordinates provide a bijection between the points in $\R^4$ and complex lines $\CP^1\subset \PT$ in twistor space. Moreover, the conformal structure of the spacetime is encoded in the complex structure of $\PT$ \cite{Penrose:1976jq}.

The last object we need is the choice of a volume form. It is well known that $\PT$ is not Calabi-Yau as the canonical bundle is $\mathcal{O}(-4)$. While this is not an issue to define theories such as holomorphic BF theory, holomorphic Chern-Simons theory is not immediately compatible with this geometry. To get around this issue, Costello \cite{Costello:2021bah} chooses the meromorphic volume form expressed in the coordinates that were just introduced
\begin{align}\label{eq:PT_Omega}
    \Omega = z^{-2}d zd v^{\dot{1}}d v^{\dot{2}}
\end{align}
which has second order poles at $z=0,\infty$ and then imposes boundary conditions on the gauge field so that the action is well defined.

\subsection{Examples of holomorphic theories on twistor space}

A remarkable feature of the twistor space just introduced, due to the work of Penrose and Ward \cite{Penrose:1977in,Ward:1977ta}, is that certain holomorphic field theories on $\PT$ are equivalent to massless theories on $\R^4$ via the Penrose transform. Examples of theories and their Penrose transform are presented in this subsection.

\subsection*{Holomorphic Chern-Simons theory}

In \cite{Costello:talk}, it was conjectured by Costello and later verified by Bittleston and Skinner \cite{Bittleston:2020hfv} that holomorphic Chern-Simons theory on twistor space reduces to a 4d WZW model on $\R^4$. To discuss holomorphic Chern-Simons theory on twistor space, it will be useful first to see what happens on a Calabi-Yau 3-fold $C$.

The Calabi-Yau condition guarantees the existence of a holomorphic 3-form $\Omega_C$, so the action of holomorphic Chern-Simons is given by
\begin{align*}
    \int_C\Omega_C\wedge\Tr\left(a\dbar a+\frac{2}{3}a^3\right)
\end{align*}
where $a\in\Omega^{0,1}(C,\g)$ is the gauge field. The action describes the flatness of the partial connection $a$. The gauge transformation one has with this theory is
\begin{align*}
    a\mapsto h^{-1}ah+h^{-1}\dbar h
\end{align*}
for a group valued field $h$.

The story is not as simple on twistor space as there is no choice of holomorphic 3-form. There does exist, however, a choice of meromorphic 3-form $\Omega$, introduced in equation \ref{eq:PT_Omega}. To have a well-defined action, boundary conditions must be imposed on the gauge field at the poles of $\Omega$, which are located at $z = 0,\infty$ ($\infty$ is the point at infinity on $\CP^1$). One choice, as presented in \cite{Bittleston:2020hfv}, which yields the 4d WZW model \cite{Donaldson:1985zz} upon reducing to $\R^4$ is to demand that $a$ vanishes to first order at both poles. It is also important to note that gauge transformations must obey the same boundary conditions.

With these boundary conditions, Bittleston and Skinner \cite{Bittleston:2020hfv} showed that
\begin{align*}
    \int_{\PT}\Omega\wedge\Tr\left(a\dbar a+\frac{2}{3}a^3\right)\leftrightarrow\frac{1}{2}\int_{\R^4}\Tr(J\wedge*J)+\frac{1}{3}\int_{\R^4\times[0,1]}\mu\wedge\Tr(J^3)
\end{align*}
where $\mu = dx^{0\dot{0}}\wedge dx^{1\dot{1}}-dx^{0\dot{1}}\wedge dx^{1\dot{0}}$ (these coordinates are defined in equation \ref{eq:R4_coord}.) and the field $J = -d\sigma\sigma^{-1}$ is the Maurer-Cartan current for the group valued field $\sigma$.

\subsection*{Holomorphic BF theory}\label{subsec:sdym_review}

On twistor space, the action one studies for the holomorphic BF theory is
\begin{align*}
    \int_{\mathbb{PT}}\Tr(bF^{0,2}(a))
\end{align*}
with gauge fields $a\in\Omega^{0,1}(\PT,\g\otimes\mathcal{O})$ and $b\in \Omega^{3,1}(\PT,\g\otimes\mathcal{O}(-4))$. The gauge transformations are
\begin{align*}
    \delta a = \dbar\chi+[a,\chi]\quad \delta b = \dbar\nu+[a,\nu]+[b,\chi]
\end{align*}
with $\chi\in \Omega^{0,0}(\PT,\g\otimes\mathcal{O})$ and $\nu\in \Omega^{3,0}(\PT,\g\otimes\mathcal{O}(-4))$. Note that this makes sense since the Lagrangian must live in the canonical bundle $K_\mathbb{PT} = \mathcal{O}(-4)$ so that it can be integrated. The equations of motion are easily seen to be $F^{0,2}(a) = 0$ and $\dbar_ab=0$.

On $\R^4$, the theory one finds is SDYM \cite{Witten:2003nn,Mason:2005zm,Ward:1977ta,Boels:2006ir}. The field content for SDYM is
\begin{align*}
    A\in\Omega^1(\R^4,\g)\quad B\in\Omega^2_-(\R^4,\g)
\end{align*}
and the action is
\begin{align*}
    \int_{\R^4}\Tr(BF(A)_-).
\end{align*}
The equation of motion is the vanishing of the self-dual part of the field strength. Note that the field $B$ is a negative helicity gluon while $A$ is a positive helicity gluon.

\subsection{Amplitudes in the spinor-helicity formalism}

I will make use of the spinor-helicity formalism, where massless states can be decomposed as
\begin{align*}
    p_{\alpha\dot{\alpha}} = \lambda_\alpha\tilde{\lambda}_{\dot{\alpha}}
\end{align*}
for a pair of spinors $(\lambda_\alpha,\tilde{\lambda}_{\dot{\alpha}})$. Notice that these are the same dotted and undotted $SU(2)$ spinor indices introduced in the twistor review \ref{subsec:twistor_review}. We can also fix 
\begin{align*}
    \lambda_\alpha = \begin{pmatrix}1&z\end{pmatrix}
\end{align*}
for $z\in\CP^1$ by scaling both $\lambda_\alpha$ and $\tilde{\lambda}_{\dot{\alpha}}$ while keeping $p_{\alpha\dot{\alpha}}$ fixed. Notice that this allows us to define polynomials in $z\in\CP^1$ using tensors. A degree $4$ polynomial $H(z)$, for example, corresponds to a rank $4$ symmetric tensor $H^{\alpha\beta\gamma\delta}$ by defining
\begin{align*}
    H(z) = H^{\alpha\beta\gamma\delta}\lambda_\alpha \lambda_\beta \lambda_\gamma \lambda_\delta.
\end{align*}
This notation will be used in the rest of the paper.

Since scattering states with left-handed spinor-helicity $\lambda_\alpha$ lift to twistor representatives supported at the point $z\in\CP^1$, the coordinate $z$ is identified with the $\CP^1$ coordinate on twistor space.

Indices can be raised and lowered using the Levi-Cevita symbols $\epsilon_{\alpha\beta},\epsilon_{\dot{\alpha}\dot{\beta}}$ with the conventions
\begin{align*}
    \lambda_\alpha = \epsilon_{\alpha\beta}\lambda^\beta\quad \epsilon^{\alpha\gamma}\epsilon_{\gamma\beta} = \delta^\alpha_\beta
\end{align*}
and similarly for dotted indices. I will also make use of the angle and square bracket notation for $\text{SL}(2,\C)$ invariant spinor contractions in the following manner:
\begin{align*}
    a_\alpha b^\alpha = \langle ab\rangle\quad a_{\dot{\alpha}}b^{\dot{\alpha}} = [ab].
\end{align*}
I will often label momentum states by integers, in which case I use the notation 
\begin{align*}
    \lambda_{1\alpha}\lambda_2^\alpha = \langle12\rangle = z_1-z_2\quad \tilde{\lambda}_{1\dot{\alpha}}\tilde{\lambda}_2^{\dot{\alpha}} = [12].
\end{align*}

\subsection*{Useful identities}

Momentum conservation will be used throughout this paper. This amounts to applying the relation
\begin{align*}
    \sum_{i=1}^np_i^{\alpha\dot{\alpha}} = \sum_{i=1}^n\lambda_i^\alpha\tilde{\lambda}_i^{\dot{\alpha}} = 0
\end{align*}
for an $n$ point diagram. By contracting with dotted and undotted spinors, the above gives several algebraic relations on the variables $\langle ij\rangle$ and $[kl]$. For the sum of two momenta, we also have
\begin{align*}
    p_{12}^2 = (p_1+p_2)^2 = 2\langle12\rangle[12].
\end{align*}
I will also make extensive use of the Schouten identity
\begin{align*}
    \lambda_1^\alpha\langle23\rangle+\lambda_3^\alpha\langle12\rangle+\lambda_2^\alpha\langle31\rangle = 0
\end{align*}
and similarly for dotted spinors.

\subsection{Celestial chiral algebras and form factors}

I will take the perspective on celestial chiral algebras pioneered by Costello and Paquette \cite{Costello:2022wso}. For illustrative purposes, I will review the celestial chiral algebra for self-dual gauge theory. 

Massless gauge theory states of positive and negative helicity are denoted
\begin{align*}
    J_a(\tilde{\lambda};z) \quad \tilde{J}_a(\tilde{\lambda};z) 
\end{align*}
where $a$ is a colour index and the energy has been absorbed into $\tilde{\lambda}$. Note that I chose to describe the state using $z$ while one could equivalently use the spinor $\lambda$. The spacetime interpretation of these states is to identify $J$ with the positive helicity gluon $A$ and $\tilde{J}$ with the negative helicity gluon $B$ of SDYM introduced in section \ref{subsec:sdym_review}. The OPEs of the chiral algebra can then be computed from collinear limits of the spacetime theory. They are \cite{Costello:2022wso}
\begin{align*}
    J_a(\tilde{\lambda}_{\dot{1}};z_1)J_b(\tilde{\lambda}_{\dot{2}};z_2)\sim\frac{f_{ab}^c}{\langle12\rangle}J_c(\tilde{\lambda}_{\dot{1}}+\tilde{\lambda}_{\dot{2}};z_2)\qquad J_a(\tilde{\lambda}_{\dot{1}};z_1)\tilde{J}_b(\tilde{\lambda}_{\dot{2}};z_2)\sim\frac{f_{ab}^c}{\langle12\rangle}\tilde{J}_c(\tilde{\lambda}_{\dot{1}}+\tilde{\lambda}_{\dot{2}};z_2).
\end{align*}
The positive helicity states form the Kac-Moody algebra at level zero for the Lie algebra $\mathfrak{sl}(N_c)[\tilde{\lambda}^{\dot{1}},\tilde{\lambda}^{\dot{2}}]$, which is also known as the $\mathcal{S}$-algebra \cite{Strominger:2021mtt}. 

This can also be framed in terms of soft generators $J_a[p,q](z)$ so that the hard generator is written
\begin{align*}
    J_a(\tilde{\lambda};z) = \sum_{m,n\geq0}\frac{\tilde{\lambda}_i^m\tilde{\lambda}_i^n}{m!n!}J_a[m,n](z)
\end{align*}
and the OPEs are
\begin{align*}
    J_a[p,q](z_1)J_b[r,s](z_2)\sim \frac{f_{ab}^c}{\langle12\rangle}J_c[p+r,q+s](z_2).
\end{align*}
There is a similar formulation for the negative helicity states. The conformal dimensions of the soft states are outlined in table \ref{tb:soft_states}.

\begin{table}[h]\label{tb:soft_states}
    \centering
    \renewcommand{\arraystretch}{1.3}
    \begin{tabular}{c | c | c}
        \textbf{States} & \textbf{Labels} & \textbf{Conformal dimension} \\ 
        \midrule
        $J_a[m,n]$ & $m,n \geq 0$ & $1 - \frac{(m+n)}{2}$ \\
        $\tilde{J}_a[m,n]$ & $m,n \geq 0$ & $-1-\frac{(m+n)}{2}$ \\
    \end{tabular}
    \caption{The states of the celestial chiral algebra for SDYM.}
\end{table}

\subsection*{Form factors}

Costello and Paquette recently developed a program for computing gauge theory amplitudes from self-dual form factors using the celestial chiral algebra \cite{Costello:2022wso}. The basic idea is to choose a conformal block for the chiral algebra corresponding to the insertion of a local operator in the spacetime theory. One can then compute an $n$-point form factor by repeatedly performing chiral algebra OPEs in the appropriate conformal block. As described in \cite{Costello:2022wso}, these form factors compute scattering integrands in full Yang-Mills theory when the correct operator insertions are chosen. This theory is developed and explained in detail in the beautiful paper \cite{Costello:2022wso}.

Let's review the simple example of the colour-ordered $--+$ amplitude. The correct conformal block to use is
\begin{align*}
    \langle\Tr(B^2)|\tilde{J}^a(\tilde{\lambda}_1;z_1)\tilde{J}^b(\tilde{\lambda}_2;z_2)\rangle = \Tr(ab)(z_1-z_2)^2 = \Tr(ab)\langle12\rangle^2
\end{align*}
which corresponds to an insertion of $\Tr(B^2)$ in the spacetime theory as explained in \cite{Costello:2022wso}. To compute the amplitude from the chiral correlator, we then insert positive helicity states and compute OPEs:
\begin{align*}
    \langle\Tr(B^2)|\tilde{J}^a(\tilde{\lambda}_1;z_1)\tilde{J}^b(\tilde{\lambda}_2;z_2)J^c(\tilde{\lambda}_3;z_3)J^d(\tilde{\lambda}_4;z_4)\rangle & = f_d^{cb}\frac{1}{\langle23\rangle}\langle\Tr(B^2)|\tilde{J}^a(\tilde{\lambda}_1;z_1)\tilde{J}^d(\tilde{\lambda}_2+\tilde{\lambda}_3;z_2)\rangle\\
    &\quad +f_d^{ca}\frac{1}{\langle13\rangle}\langle\Tr(B^2)|\tilde{J}^d(\tilde{\lambda}_1+\tilde{\lambda}_3;z_1)\tilde{J}^b(\tilde{\lambda}_2;z_2)\rangle\\
    & = \frac{\langle12\rangle^3}{\langle13\rangle\langle23\rangle}f^{abc}.
\end{align*}
Where I have ignored colour factors above and only computed the colour ordered term. This matches the famous Parke-Taylor formula at 3 points \cite{Parke:1986gb}. One can then easily see iteratively that the $n$ point Parke-Taylor formula can be derived by inserting arbitrarily many positive helicity states as done in \cite{Costello:2022wso}.

\section{Building theories on the double cover}\label{sec:model_build}

In this section, I build a double cover of the twistor space of $\R^4$, which I denote as $\PT_\Sigma$ for $\Sigma$ an (hyper)elliptic curve. I then define the holomorphic Chern-Simons and BF theory on $\PT_\Sigma$. The cases of interest for this work are holomorphic Chern-Simons theory when $\Sigma$ is a genus 3 curve and holomorphic BF theory when $\Sigma$ is a genus 1 curve. These theories are shown to pushforward to twistor theories with twice the number of fields. The examples I chose both Penrose transform to deformations of SDYM theory. I write down these deformations up to third-order interactions using symmetry.

\subsection{Building the double cover}

In section \ref{sec:bgrnd}, it was stated that the twistor space of $\R^4$ is the total space of the rank 2 complex vector bundle $\mathcal{O}(1)\oplus\mathcal{O}(1)\to\CP^1$. If we define an elliptic curve $\Sigma$ by the relation $w^2=H(z)$ for $H(z)$ a polynomial on $\CP^1$, then this defines a double cover $\Sigma\xrightarrow[2:1]{\pi}\CP^1$. In \cite{Costello:2021bah}, Costello proposed building a double cover of twistor space by considering the total space of
\begin{align*}
    \pi^*\mathcal{O}(1)\oplus\pi^*\mathcal{O}(1)\to\Sigma 
\end{align*}
which we will denote by $\PT_\Sigma$. As a real manifold, there is a diffeomorphism 
\begin{align*}
    \PT_\Sigma\cong\Sigma\times\R^4.
\end{align*}
In this paper, I focus on the cases where $H(z)$ is a degree $4$ and $8$ polynomial (so $\Sigma$ has genus 1 and 3, respectively). The degree $8$ case produces a Calabi-Yau 3-fold, which was Costello's original motivation for considering $\PT_\Sigma$ as the topological string can be studied directly without demanding special conditions on the gauge fields. 

As discussed in section \ref{sec:bgrnd}, the poles in the twistor space volume form $\Omega$ require fixing boundary conditions on the gauge field of the holomorphic Chern-Simons theory. On the double cover, we denote the volume form by $\Omega_\Sigma$ which is holomorphic when $\PT_\Sigma$ is Calabi-Yau. In the genus 1 case, $\PT_\Sigma$ is not Calabi-Yau and we will need a choice of meromorphic volume form as we had in the $\PT$ example. Recall that on $\PT$, we had defined the coordinates $v_{\dot{\alpha}}\in\mathcal{O}(1)$. By abusing notation, we can fix a set of coordinates on $\PT_\Sigma$ as $v_{\dot{\alpha}}\in\pi^*\mathcal{O}(1)$. A standard fact in the theory of elliptic curves is that there is a canonical, holomorphic 1-form $\gamma$ on $\Sigma$. With this, we can define the meromorphic volume form
\begin{align*}
    \Omega_\Sigma = \gamma dv_{\dot{1}}dv_{\dot{2}}\in \Omega^{3,0}(\PT_\Sigma,\pi^*\mathcal{O}(2))
\end{align*}
which has two second-order poles at the preimages of the point at infinity $\pi^{-1}(\infty)$ and we note that we used the fact that pullbacks commute with the tensor product $\pi^*\mathcal{O}(1)\otimes\pi^*\mathcal{O}(1) = \pi^*\mathcal{O}(2)$.

Note that in the rest of the paper, these boundary issues do not appear since only holomorphic BF theory is studied in the genus 1 setting. In \cite{Jarov}, I discuss holomorphic Chern-Simons theory in the genus 1 setting where the meromorphic volume form is crucial.

\subsection{Holomorphic Chern-Simons theory on the genus 3 double cover}

In this section, I compute the twistor theory corresponding to holomorphic Chern-Simons on $\PT_\Sigma$ when $\Sigma$ has genus 3. The genus 3 case is easier since $\PT_\Sigma$ is Calabi-Yau, the genus 1 case requires special boundary conditions and is discussed in a companion paper \cite{Jarov}.

In section \ref{sec:bgrnd}, I asserted that holomorphic Chern-Simons theory can be defined on any Calabi-Yau 3-fold with a gauge field free to live in the trivial bundle. We can perform a similar analysis on $\PT_\Sigma$ when our curve has genus 3. The existence of a holomorphic 3-form $\Omega_\Sigma$ allows us to define
\begin{align}\label{eq:hCS_X}
    \int_{\PT_\Sigma}\Omega_\Sigma\Tr\left(\alpha\dbar\alpha+\frac{2}{3}\alpha^3\right)
\end{align}
with $\alpha\in \Omega^{0,1}(\PT_\Sigma,\g\otimes\mathcal{O}_{\PT_\Sigma})$. On the double cover, I use Greek letters for the gauge fields to avoid confusion. In appendix \ref{app:line_bdl}\footnote{This type of calculation is standard for those with an algebraic geometry background and I have added it only to make this work more widely accessible.}, I showed that $\pi_*\mathcal{O}_{\PT_\Sigma} = \mathcal{O}\oplus\mathcal{O}(-\deg H/2)$. So, our field transforms as
\begin{align*}
    \alpha\xrightarrow[]{\pi_*}a+\tilde{a}\sqrt{H}
\end{align*}
by using the basis $\{1,\sqrt{H}\}$ for the pushforward bundle. Note that we interpret $a\in\mathcal{O}$ as the even field under crossing branches of $\Sigma$ and $\tilde{a}\in\mathcal{O}(-\deg H/2)$ as the odd field. We can then pushforward the action and keep the even terms:
\begin{align}\label{eq:hCS_twistor_action}
     \nonumber\int_{\PT_\Sigma}\Omega_\Sigma\Tr\left(\alpha\dbar\alpha+\frac{2}{3}\alpha^3\right) & \xrightarrow[]{\pi_*} \int_\PT \frac{\Omega}{\sqrt{H}}\Tr\left(2\tilde{a}\dbar a\sqrt{H}+2(\tilde{a}a^2\sqrt{H}+\tilde{a}^3H^{3/2})\right)\\
    &= 2\int_\PT \Omega\Tr\left(\tilde{a}F(a)+\tilde{a}^3H\right).
\end{align}
Where we used the fact that $\Omega_\Sigma\xrightarrow[]{\pi_*}\Omega/\sqrt{H}$. Plugging in $\deg H = 8$, we can find that the 4d field content is a pair of gauge fields\footnote{Fields twisted by $\mathcal{O}(2s-2)$ correspond to spin $|s|$ and helicity $\text{sign}(s)$ fields on spacetime.}
\begin{align*}
    a\in\mathcal{O}\mapsto A\quad \tilde{a}\in\mathcal{O}(-4)\mapsto B.
\end{align*}
The only leading order Lagrangian compatible with the symmetries is given by
\begin{align}\label{eq:genus1_deformation}
    \int_{\R^4}\Tr(BF(A)+B_{\alpha\beta}D_\gamma^{\dot{\alpha}}B_{\delta\rho}D_{\sigma\dot{\alpha}}B_{\zeta\eta}H^{\alpha\beta\gamma\delta\rho\sigma\zeta\eta})+\text{higher order terms}
\end{align}
where $D$ is a covriant derivative $D = d+[A,\cdot]$. The constant symmetric tensor $H^{\alpha\beta\gamma\delta\rho\sigma\zeta\eta}$ defines the polynomial corresponding to the curve by
\begin{align*}
    H(z) = ^{\alpha\beta\gamma\delta\rho\sigma\zeta\eta}\lambda_\alpha...\lambda_\eta.
\end{align*}

Notice that this is a deformation of the SDYM theory by the cubic vertex $B_{\alpha\beta}d_\gamma^{\dot{\alpha}}B_{\delta\rho}d_{\sigma\dot{\alpha}}B_{\zeta\eta}H^{\alpha\beta\gamma\delta\rho\sigma\zeta\eta}$ which I have made gauge invariant by replacing the derivatives with covariant derivatives. In section \ref{sec:bootstrap} I constrain the form of all higher-order terms that contribute to the action, hence determining the 4d lagrangian up to index contractions and coefficients.

\subsection{Holomorphic BF theory on the genus 1 double cover}

Recall in section \ref{sec:bgrnd} we defined holomorphic BF theory on twistor space by imposing the condition that $b$ be a section of $\mathcal{O}(-4)$ so that the action could be integrated against the volume form $\Omega\in\Omega^{3,0}(\PT,\mathcal{O}(4))$. The field $a$ was then allowed to live in the trivial bundle.

On the double cover, the analog of the field $b$, which I call $\beta$, plays a similar role in that we use it to cancel the poles in the integration measure. The analog of $a$, denoted by $\alpha$, is then free to live in the trivial bundle on the double cover. The field content is then
\begin{align*}
    \alpha\in\Omega^{0,1}(\PT_\Sigma,\g\otimes\mathcal{O}_{\PT_\Sigma})\quad\beta\in \Omega^{3,1}(\PT_\Sigma,\g\otimes K_\Sigma\otimes\pi^*\mathcal{O}(-2))
\end{align*}
and the action is
\begin{align*}
    \int_{\PT_\Sigma}\Tr(\beta F^{0,2}(\alpha)).
\end{align*}
In appendix \ref{app:line_bdl}, I compute the pushforward of the trivial and canonical bundle on $\Sigma$. Using those calculations, we have that 
\begin{align*}
    \pi_*(\mathcal{O}_\Sigma) = \mathcal{O}\oplus\mathcal{O}(-\deg H/2)\quad \pi_*(K_\Sigma\otimes\pi^*\mathcal{O}(-2)) = \mathcal{O}(\deg H/2-4)\oplus\mathcal{O}(-4).
\end{align*}
Concretely, we can think of each of the fields $\alpha$ and $\beta$ as splitting into even and odd fields: a component that does not change sign when crossing branches and one that picks up a minus sign. If we take $\{1,\sqrt{H(z)}\}$ as a basis for the functions on the pushforward bundle, then we can write 
\begin{align*}
    \alpha \xrightarrow[]{\pi_*} a+\tilde{a}\sqrt{H}\quad \beta\xrightarrow[]{\pi_*} b+\tilde{b}\sqrt{H}
\end{align*}
where $a\in \mathcal{O}$, $\tilde{a}\in\mathcal{O}(-\deg H/2)$, $b\in\mathcal{O}(-4)$, and $\tilde{b}\in\mathcal{O}(\deg H/2-4)$. We think of $a,b$ as the even fields and $\tilde{a},\tilde{b}$ as the odd fields. Plugging these expressions into the holomorphic BF action and keeping the even terms, we get
\begin{align}\label{eq:genus1_twistor_action}
    \int_{\PT_\Sigma}\Tr(\beta F^{0,2}(\alpha))\xrightarrow[]{\pi_*} \int_{\PT}\Tr(bF(a) +\tilde{b}\dbar\tilde{a} + a\tilde{a}\tilde{b}+b\tilde{a}\tilde{a}H(z)).
\end{align}
We have thus determined the twistor theory corresponding to holomorphic BF theory on $\PT_\Sigma$. Plugging in $\deg H = 4$ (since $\Sigma$ is genus 1), we can also determine the field content of the corresponding theory on $\R^4$:
\begin{align*}
    a\in\mathcal{O}\mapsto A\quad b\in\mathcal{O}(-4)\mapsto B\quad \tilde{a}\in\mathcal{O}(-2)\mapsto\Phi\quad \tilde{b}\in\mathcal{O}(-2)\mapsto\tilde{\Phi}.
\end{align*}
Where $A,B$ are gauge fields and $\Phi,\tilde{\Phi}$ are scalars. Using symmetry, we can constrain the $\R^4$ lagrangian up to cubic interactions to be
\begin{align*}
\int_{\mathbb{R}^4}\Tr(BF(A)+\tilde{\Phi}D*D\Phi+B_{\alpha\beta}D_{\dot{\alpha}\delta}\Phi D^{\dot{\alpha}}_\gamma\Phi H^{\alpha\beta\delta\gamma})+\text{higher order terms}  
\end{align*}
where $H^{\alpha\beta\gamma\delta}$ is the constant symmetric tensor defining the polynomial $H(z)$ and $D = d+[A,\cdot]$. Our polynomial is defined as
\begin{align*}
    H(z) = H^{\alpha\beta\gamma\delta}\lambda_\alpha \lambda_\beta \lambda_\gamma \lambda_\delta\quad \lambda_\alpha = \begin{pmatrix}
        1&z
    \end{pmatrix}.
\end{align*}

To conclude our discussion of this genus 1 setting, we note that the theory we arrived at on $\R^4$ is a deformation of the SDYM theory coupled to a complex massless scalar. I leave the task of determining the higher order interactions for future work.

\subsection{Remarks on the section}

In this section, we have constructed two examples of twistorial theories from theories living on $\PT_\Sigma$ for different choices of the curve $\Sigma$. This method should apply more generally to 'nice' theories by following the steps:
\begin{enumerate}
    \item Fix the curve $\Sigma$, then make a choice of holomorphic/meromorphic volume form $\Omega_\Sigma$ and identify which bundle it is a section of. When $\PT_\Sigma$ is not Calabi-Yau, this bundle will typically be $\mathcal{O}_\Sigma\otimes\pi^*\mathcal{O}(n)$ for $n$ an integer depending on $\Sigma$.
    \item Demand that the field content on twistor space be such that the kinetic term lives in the canonical bundle of $\PT_\Sigma$. This can be thought of as imposing boundary conditions on the fields.
    \item Compute the pushforward of the bundles in which the fields live and write the fields in terms of even and odd components using the basis $\{1,\sqrt{H}\}$ on the pushforward bundle.
    \item Plug the pushforward of the fields into the $\PT_\Sigma$ action.
\end{enumerate}

As mentioned earlier, this works for 'nice' theories. Let's go over an example of a theory that I have not understood how to treat using these methods. The Kodaira-Spencer theory \cite{Bershadsky:1993cx} which describes the closed string sector of the topological string is described by the action 
\begin{align*}
    \int_C\Omega_C\wedge\left(\dbar\mu\partial^{-1}\mu+\frac{1}{3}\mu^3\right)\lrcorner\Omega_C
\end{align*}
for $C$ a Calabi-Yau 3-fold and Beltrami differential $\mu\in\Omega^{0,1}(C,T^{1,0}C)$ which is divergence free. This action is of particular interest as it was shown in \cite{Costello:2019jsy} that coupling $SO(8)$-holomorphic Chern-Simons to Kodaira-Spencer theory cancels the anomalies to all loop orders. This allows one to study holomorphic Chern-Simons at the quantum level. One motivation for us to explore the quantization of theories on $\PT_\Sigma$ is to find a holographic correspondence similar to the one studied in \cite{Costello:2023hmi}. For now, we remark that the unusual kinetic term in the Kodaira-Spencer action makes it unclear what the corresponding twistor action will look like.

\section{Celestial chiral algebras}\label{sec:celestial}

Celestial chiral algebras dual to twistor theories have been studied in recent works \cite{Costello:2022wso,Costello:2023hmi,Bittleston:2023bzp} as the twistor approach provides a clear top-down construction of celestial dualities. The purpose of this section is not to build the full chiral algebra dual to a new twistor theory, but to provide supporting evidence for the following main result:

\begin{conjecture}
The celestial chiral algebra of a holomorphic theory defined on $\PT_\Sigma$ lives on the curve $\Sigma$.
\end{conjecture}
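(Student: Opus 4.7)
The plan is to substantiate the conjecture by directly computing the celestial OPEs in the two examples constructed above, and by identifying the structural reason why the resulting chiral algebra should be supported on $\Sigma$ for any holomorphic theory on $\PT_\Sigma$. Since the Costello-Paquette formalism \cite{Costello:2022wso} identifies celestial chiral algebra generators with holomorphic states on twistor space, my starting point is the observation that the pushforward decompositions of Section \ref{sec:model_build} organize each twistor field into even and odd components with respect to the basis $\{1,\sqrt{H}\}$. Packaging these components together gives hard generators labeled by a point $(z,w)\in\Sigma$ rather than $z\in\CP^1$, so the chiral algebra is already $\Sigma$-graded at the kinematic level before any interactions are turned on.

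For each example I would then compute the positive-positive and positive-negative helicity OPEs from the collinear limit of the leading 4d tree amplitudes extracted from the Lagrangians written down in Section \ref{sec:model_build}. Equivalently these can be obtained as the residue of the $\dbar$ propagator on $\PT_\Sigma$, which differs from the standard twistor propagator by a factor involving $\sqrt{H}$. I expect the simple pole $1/\langle 12\rangle$ of the $\mathcal{S}$-algebra to be replaced by a pole on the diagonal of $\Sigma\times\Sigma$: in the genus 3 case it should take the form $1/(w_1-w_2)$ with $w^2=H(z)$, and in the genus 1 case it should become an elliptic kernel in the natural uniformizing coordinate on $\Sigma$. The resulting algebra should be a level-0 Kac-Moody algebra for $\g[\tilde{\lambda}^{\dot{1}},\tilde{\lambda}^{\dot{2}}]$ valued in holomorphic functions on $\Sigma$, deforming the $\mathcal{S}$-algebra exactly as announced in the introduction.

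A useful internal consistency check is to match these OPEs against the cubic deformations derived in Section \ref{sec:model_build}: the vertex $B_{\alpha\beta}D_\gamma^{\dot{\alpha}}B_{\delta\rho}D_{\sigma\dot{\alpha}}B_{\zeta\eta}H^{\alpha\beta\gamma\delta\rho\sigma\zeta\eta}$ (and its genus 1 analog involving the scalar $\Phi$) is precisely what upgrades the collinear singularity from $\CP^1$ to $\Sigma$, because the symmetric tensor $H^{\alpha\beta\cdots}$ is the tensor avatar of the defining polynomial of the curve. This cross-checks the chiral algebra derived from the propagator against the soft limits of the spacetime theory.

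The main obstacle is that the conjecture is stated theory-independently, while the explicit OPE computation requires a fixed action and its leading-order 4d reduction. A fully general proof would most naturally proceed by arguing that the factorization algebra of holomorphic operators on $\PT_\Sigma$ is intrinsically $\Sigma$-graded, since holomorphic data on $\PT_\Sigma$ descends from $\Sigma$ rather than $\CP^1$; the celestial chiral algebra, being a reduction of this factorization algebra along the $\R^4$ directions, then inherits the grading automatically. My plan is therefore to verify the conjecture in the two concrete examples and to indicate how this mechanism generalizes, leaving an abstract proof for future work.
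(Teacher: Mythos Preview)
Your approach matches the paper's: since the statement is a conjecture, the paper offers supporting evidence rather than a proof, verifying it in the two examples by computing tree-level collinear OPEs and showing the resulting algebras live on $\Sigma$. The mechanism you anticipate via the $\{1,\sqrt{H}\}$ basis is exactly the one the paper exploits.

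One correction to your expectations: the pole does \emph{not} become $1/(w_1-w_2)$ or an elliptic kernel. The paper works entirely on spacetime, computing splitting functions from the 4d cubic vertices (not from a modified $\dbar$ propagator on $\PT_\Sigma$), and finds that the new vertex contributes an OPE of the form
\[
\tilde{J}_a(\tilde\lambda_1;z_1)\tilde{J}_b(\tilde\lambda_2;z_2)\sim \frac{H(z_2)}{\langle 12\rangle}f_{ab}^c\,J_c(\tilde\lambda_1+\tilde\lambda_2;z_2)
\]
in the genus 3 case (and the analogous $\mathcal{O}\mathcal{O}\sim\tfrac{H}{\langle 12\rangle}J$, $\mathcal{O}\tilde{J}\sim\tfrac{H}{\langle 12\rangle}\tilde{\mathcal{O}}$ in the genus 1 case): the ordinary $1/\langle 12\rangle$ pole with $H(z)$ in the numerator. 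The passage to $\Sigma$ is then achieved by the change of basis $J_\pm=J\pm\tilde{J}/\sqrt{H}$ (and similarly for the other generators), which diagonalizes the algebra into two decoupled copies of the level-0 Kac-Moody algebra, one per branch, swapped under the deck transformation. This is then reinterpreted as a single copy on $\Sigma$ with local coordinate $z$. So the $\Sigma$ structure emerges from the diagonalization rather than from a deformed propagator kernel.
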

To support this claim, I demonstrate that the tree-level collinear singularities in the two deformations of SDYM theory computed in section \ref{sec:model_build} form chiral algebras that live on the curve $\Sigma$.

\subsection{The genus three deformation}\label{subsec:genus3_OPE}

Recall that holomorphic Chern-Simons theory on $\PT_\Sigma$ for $\Sigma$ a genus 3 curve descended to the following spacetime theory
\begin{align*}
    \int_{\R^4}\Tr(BF(A)+B_{\alpha\beta}D_\gamma^{\dot{\alpha}}B_{\delta\rho}D_{\sigma\dot{\alpha}}B_{\zeta\eta}H^{\alpha\beta\gamma\delta\rho\sigma\zeta\eta})+\text{higher order terms}.
\end{align*}

I will show that its celestial chiral algebra is precisely the level 0 Kac-Moody algebra for $\mathfrak{sl}(N_c)[\tilde{\lambda}_{\dot{1}},\tilde{\lambda}_{\dot{2}}]$ on $\Sigma$. 

\begin{figure}
    \centering
    \includegraphics[width=0.5\linewidth]{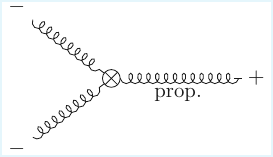}
    \caption{The new $--\to+$ splitting function introduced by the $BDBDBH$ vertex (denoted by the $\otimes$ symbol).}
    \label{fig:genus3_splitfn}
\end{figure}

\subsection*{Building the chiral algebra}

The states of the chiral algebra are organized in table \ref{tb:soft_states}. Let's forget about the interaction term involving the tensor $H^{\alpha\beta\gamma\delta\rho\sigma\zeta\eta}$ in the Lagrangian for a moment. The simplest OPE is
\begin{align}\label{eq:OPE}
    J_a[p,q](z_1)J_b[r,s](z_2)\sim \frac{f_{ab}^c}{\langle12\rangle}J_c[p+r,q+s](z_2).
\end{align}
where $\langle12\rangle=z_1-z_2$. We also make use of hard generators
\begin{align*}
    J_a(\tilde{\lambda};z) = \sum_{m,n\geq0}\frac{\tilde{\lambda}_i^m\tilde{\lambda}_i^n}{m!n!}J_a[m,n](z)
\end{align*}
so that the OPE in \ref{eq:OPE} becomes
\begin{align*}
    J_a(\tilde{\lambda}_1;z_1)J_b(\tilde{\lambda}_2;z_2)\sim\frac{f_{ab}^c}{\langle12\rangle}J_c(\tilde{\lambda}_1+\tilde{\lambda}_2;z_2).
\end{align*}
As explained in the background section \ref{sec:bgrnd}, this forms the Kac-Moody algebra at level 0 for the Lie algebra $\mathfrak{sl}(N_c)[\tilde{\lambda}^{\dot{1}},\tilde{\lambda}^{\dot{2}}]$. The rest of the OPEs\footnote{Note that in the celestial chiral algebra construction, only singular terms are considered in the OPE.} are
\begin{align*}
    J_a(\tilde{\lambda}_1;z_1)\tilde{J}_b(\tilde{\lambda}_2;z_2)\sim\frac{f_{ab}^c}{\langle12\rangle}\tilde{J}_c(\tilde{\lambda}_1+\tilde{\lambda}_2;z_2)\qquad \tilde{J}_a(\tilde{\lambda}_1;z_1)\tilde{J}_b(\tilde{\lambda}_2;z_2)\sim 0.
\end{align*}
Next, if we deform the SDYM action by adding the vertex $B_{\alpha\beta}D_\gamma^{\dot{\alpha}}B_{\delta\rho}D_{\sigma\dot{\alpha}}B_{\zeta\eta}H^{\alpha\beta\gamma\delta\rho\sigma\zeta\eta}$, then we introduce a new splitting function and hence OPE to the chiral algebra. The new $--\to+$ splitting function is given by inserting this vertex and plane waves on external legs along with a propagator (see Fig. \ref{fig:genus3_splitfn})
\begin{align*}
    \lambda_{1\delta}\lambda_{1\gamma}\lambda_{2\rho}\lambda_{2\sigma}&H^{\alpha\beta\delta\gamma\rho\sigma\epsilon}_\eta p_{1\alpha}^{\dot{\alpha}}p_{2\beta\dot{\alpha}}(p_1+p_2)_{\dot{\beta}}^{(\eta}\epsilon_\epsilon^{\chi)}\frac{e^{i(p_1+p_2)\cdot x}}{p_{12}^2} \\
    & = \frac{\langle1^32^4|H\rangle^\chi \langle q2\rangle(\tilde{\lambda}_1+\tilde{\lambda}_2)_{\dot{\beta}}e^{i\lambda_2(\tilde{\lambda}_1+\tilde{\lambda}_2)x}}{\langle q2\rangle\langle12\rangle} \\
    & = \frac{\langle1^32^5|H\rangle q^\chi(\tilde{\lambda}_1+\tilde{\lambda}_2)_{\dot{\beta}}  e^{i\lambda_2(\tilde{\lambda}_1+\tilde{\lambda}_2)x}}{\langle12\rangle\langle q2\rangle}+\frac{\langle q1^32^4|H\rangle \lambda_2^\chi (\tilde{\lambda}_1+\tilde{\lambda}_2)_{\dot{\beta}}  e^{i\lambda_2(\tilde{\lambda}_1+\tilde{\lambda}_2)x}}{\langle12\rangle\langle q2\rangle}\\
    & = \frac{H(z_2)}{\langle12\rangle}f_{ab}^cJ_c(\tilde{\lambda}_1+\tilde{\lambda}_2;z_2).
    \end{align*}
I used the Schouten identity in the second equality. In the third equality, I discarded the pure gauge term $\sim \lambda_2^\chi (\tilde{\lambda}_1+\tilde{\lambda}_2)_{\dot{\beta}}  e^{i(p_1+p_2)\cdot x}$ (derivative of a plane wave) and the fact that we are working in the collinear limit $\lambda_1\sim\lambda_2$. So, the deformed chiral algebra has the following OPEs:
\begin{align*}
    \tilde{J}_a(\tilde{\lambda}_1;z_1)\tilde{J}_b(\tilde{\lambda}_2;z_2)&\sim \frac{H(z_2)}{\langle12\rangle}f_{ab}^cJ_c(\tilde{\lambda}_1+\tilde{\lambda}_2;z_2)\\
    J_a(\tilde{\lambda}_1;z_1)J_b(\tilde{\lambda}_2;z_2)&\sim\frac{f_{ab}^c}{\langle12\rangle}J_c(\tilde{\lambda}_1+\tilde{\lambda}_2;z_2)\\
    J_a(\tilde{\lambda}_1;z_1)\tilde{J}_b(\tilde{\lambda}_2;z_2)&\sim\frac{f_{ab}^c}{\langle12\rangle}\tilde{J}_c(\tilde{\lambda}_1+\tilde{\lambda}_2;z_2).
\end{align*}
We can then define new states
\begin{align}\label{eq:genus3_OPE}
    J_{\pm,a}(\tilde{\lambda}_1;z_1) = J_a(\tilde{\lambda}_1;z_1)\pm\frac{1}{\sqrt{H}}\tilde{J}_a(\tilde{\lambda}_1;z_1)
\end{align}
which have OPEs
\begin{align*}
    J_{\pm,a}(\tilde{\lambda}_1;z_1)J_{\pm,b}(\tilde{\lambda}_2;z_2) & \sim 2\frac{f_{ab}^c}{\langle12\rangle}J_{\pm,c}(\tilde{\lambda}_1+\tilde{\lambda}_2;z_2)
\end{align*}
and
\begin{align*}
    J_{\pm,a}(\tilde{\lambda}_1;z_1)J_{\mp,b}(\tilde{\lambda}_2;z_2) & \sim 0.
\end{align*}
These are computed in appendix \ref{app:OPE}. We thus see that our chiral algebra is formed by two states that form the level-0 Kac-Moody algebra when the states are on the same branch of $\Sigma$ (their $\pm$ label match) and commute when on different branches. Moreover, their labels are swapped upon crossing branches. It is then easy to see that $z_i$ is simply the local variable on $\Sigma$ so this chiral algebra is equivalent to a single state $\mathcal{J}_a(\tilde{\lambda}_1;z_1)$ where we can now take $z\in\Sigma$ with OPE
\begin{align*}
    \mathcal{J}_{a}(\tilde{\lambda}_1;z_1)\mathcal{J}_{b}(\tilde{\lambda}_2;\gamma_2) & \sim \frac{f^c_{ab}}{z_1-z_2}\mathcal{J}_{c}(\tilde{\lambda}_1+\tilde{\lambda}_2;z_2).
\end{align*}
This is the Kac-Moody algebra for $\mathfrak{sl}(N_c)[\tilde{\lambda}_{\dot{1}},\tilde{\lambda}_{\dot{2}}]$ at level 0 on $\Sigma$. I have illustrated this transformation of the algebra in Fig. \ref{fig:genus3_CA}.

\begin{figure}
    \centering
    \includegraphics[width=\linewidth]{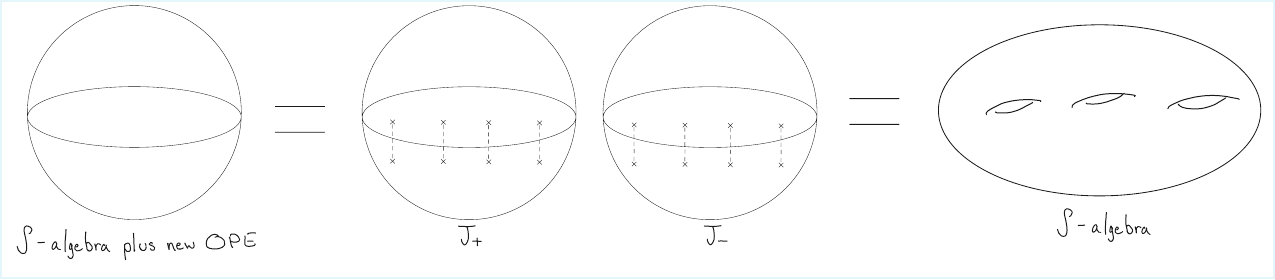}
    \caption{The transformation of the celestial chiral algebra for SDYM when we add the genus 3 deformation. The chiral algebra starts life as the ordinary SDYM chiral algebra with the added $\tilde{J}(0)\tilde{J}(z)\sim \frac{H(z)}{z}J(z)$ OPE. After defining the new states $J_\pm$, we recast this deformed algebra as copies of the $\mathcal{S}$-algebra on two copies of the celestial sphere with branch cuts. Finally, we see that this is the $\mathcal{S}$-algebra for a single state on the genus 3 curve $\Sigma$.}
    \label{fig:genus3_CA}
\end{figure}

\subsection{The genus one deformation}\label{subsec:genus1_OPE}

Recall the action introduced in section \ref{sec:model_build}
\begin{align*}
\int_{\mathbb{R}^4}\Tr(BF(A)+\tilde{\Phi}D_A*D_A\Phi+B_{\alpha\beta}D_{\dot{\alpha}\delta}\Phi D^{\dot{\alpha}}_\gamma\Phi H^{\alpha\beta\delta\gamma})+\text{higher order terms}.
\end{align*}

\subsection*{Building the chiral algebra}

The states of the chiral algebra are organized in table \ref{tb:chiral_algebra}. The simplest non-vanishing OPEs (ignoring the new $BD\Phi D\Phi H$ vertex) are
\begin{align*}
    J_a(\tilde{\lambda}_1;z_1)\tilde{J}_b(\tilde{\lambda}_2;z_2) & \sim \frac{f_{ab}^c}{\langle12\rangle}\tilde{J}_c(\tilde{\lambda}_1+\tilde{\lambda}_2;z_2)\quad J_a(\tilde{\lambda}_1;z_1)\mathcal{O}_b(\tilde{\lambda}_2;z_2) \sim \frac{f_{ab}^c}{\langle12\rangle}\mathcal{O}_c(\tilde{\lambda}_1+\tilde{\lambda}_2;z_2)\\
    J_a(\tilde{\lambda}_1;z_1)\tilde{\mathcal{O}}_b(\tilde{\lambda}_2;z_2) & \sim \frac{f_{ab}^c}{\langle12\rangle}\tilde{\mathcal{O}}_c(\tilde{\lambda}_1+\tilde{\lambda}_2;z_2)\quad \mathcal{O}_a(\tilde{\lambda}_1;z_1)\tilde{\mathcal{O}}_b(\tilde{\lambda}_2;z_2) \sim \frac{f_{ab}^c}{\langle12\rangle}\tilde{J}_c(\tilde{\lambda}_1+\tilde{\lambda}_2;z_2)\\
    J_a(\tilde{\lambda}_1;z_1)J_b(\tilde{\lambda}_2;z_2)&\sim\frac{f_{ab}^c}{\langle12\rangle}J_c(\tilde{\lambda}_1+\tilde{\lambda}_2;z_2).
\end{align*}

\begin{table}[h]\label{tb:chiral_algebra}
    \centering
    \renewcommand{\arraystretch}{1.3}
    \begin{tabular}{c | c | c}
        \textbf{States} & \textbf{Labels} & \textbf{Conformal dimension} \\ 
        \midrule
        $J_a[m,n]$ & $m,n \geq 0$ & $1 - \frac{(m+n)}{2}$ \\
        $\mathcal{O}_a[m,n]$ & $m,n \geq 0$ & $-\frac{(m+n)}{2}$ \\
        $\tilde{J}_a[m,n]$ & $m,n \geq 0$ & $-1-\frac{(m+n)}{2}$ \\
        $\tilde{\mathcal{O}}_a[m,n]$ & $m,n \geq 0$ & $ - \frac{(m+n)}{2}$ \\
    \end{tabular}
    \caption{The states of the celestial chiral algebra for SDYM coupled to two massless scalars. $J,\tilde{J}$ correspond to the gauge fields $A,B$ and $\mathcal{O},\tilde{\mathcal{O}}$ correspond to the scalars $\Phi,\tilde{\Phi}$. Here $a$ is a Lie algebra index.}
\end{table}

Adding the vertex $B_{\alpha\beta}D_{\dot{\alpha}\delta}\Phi D^{\dot{\alpha}}_\gamma\Phi H^{\alpha\beta\delta\gamma}$ introduces new OPEs to the chiral algebra. I will now explain that the presence of this vertex deforms the Kac-Moody type algebra into a Kac-Moody type algebra on $\Sigma$!

The chiral algebra OPEs are exactly the tree-level splitting functions of the 4d theory. The new splitting functions coming from the $BD\phi D\phi H$ vertex are depicted in Fig. \ref{fig:feyn}. Let's compute the splitting functions for the new vertex.

First, we compute the $\mathcal{O}_a(\tilde{\lambda}_1;z_1)\tilde{J}_b(\tilde{\lambda}_2;z_2)$ OPE in the presence of this vertex. Plugging in plane waves $\Phi = e^{ip_1\cdot x}, B_{\gamma\delta} = \lambda_{2\gamma}\lambda_{2\delta}e^{ip_2\cdot x}$, we can compute the OPE
\begin{align*}
    H_{\alpha\beta\gamma\delta}\frac{\lambda_1^\alpha\tilde{\lambda}_1^{\dot{\alpha}}\lambda_2^\gamma\lambda_2^\delta p_{12\dot{\alpha}}^\beta}{p_{12}^2}e^{i(p_1+p_2)\cdot x} = H_{\alpha\beta\gamma\delta}\frac{\lambda_1^\alpha\lambda_2^\beta\lambda_2^\gamma\lambda_2^\delta}{\langle12\rangle}e^{i(\tilde{\lambda}_1+\tilde{\lambda}_2)^{\dot{\alpha}}\lambda_2^\beta x_{\dot{\alpha}\beta}} = \frac{H(z_2)}{\langle12\rangle}f_{ab}^c\tilde{\mathcal{O}}_c(\tilde{\lambda}_1+\tilde{\lambda}_2;z_2)
\end{align*}
where $p_{12\dot{\alpha}}^\beta=(\lambda_1\tilde{\lambda}_1+\lambda_2\tilde{\lambda}_2)_{\dot{\alpha}}^\beta$ and we work in the collinear limit $\lambda_1\sim\lambda_2$.

To compute the $\mathcal{O}_a(\tilde{\lambda}_1;z_1)\mathcal{O}_b(\tilde{\lambda}_2;z_2)$ OPE, first notice that in the gauge $\partial^{\alpha\dot{\alpha}}A_{\alpha\dot{\alpha}}=0$, we can parameterize $A^{\alpha\dot{\alpha}} = \frac{\zeta^\alpha\tilde{\lambda}_1^{\dot{\alpha}}}{\langle\zeta1\rangle}e^{ip_1\cdot x}$ and the propagator $\partial_{(\alpha}^{\dot{\alpha}}\langle A_{\beta)\dot{\alpha}}(x)B^{\gamma\delta}(y)\rangle(p) = \delta^4(x-y)\epsilon_{(\alpha}^\gamma\epsilon_{\beta)}^\delta$  is given by
\begin{align*}
    \langle A_{\alpha\dot{\alpha}}B^{\beta\gamma}\rangle(p) = \frac{p_{\dot{\alpha}}^{(\beta}\epsilon_\alpha^{\gamma)}}{p^2}.
\end{align*}
So the OPE which corresponds to the diagram on the right in Fig. \ref{fig:feyn} can be computed as
\begin{align*}
    e^{ip_1x}e^{ip_2x}H^{\alpha\beta\delta}_\gamma p_{1\dot{\alpha}\alpha}p_{2\beta}^{\dot{\alpha}}&\frac{(p_1+p_2)_{\dot{\beta}}^{(\gamma}\epsilon_\delta^{\rho)}}{p_{12}^2}\\ & = \frac{\langle 12^2|H\rangle^\rho \langle q2\rangle(\tilde{\lambda}_1+\tilde{\lambda}_2)_{\dot{\beta}} e^{i(p_1+p_2)x}}{\langle q2\rangle\langle12\rangle}\\
    & = \frac{\langle12^3|H\rangle q^\rho (\tilde{\lambda}_1+\tilde{\lambda}_2)_{\dot{\beta}} e^{i\lambda_2(\tilde{\lambda}_1+\tilde{\lambda}_2)x}}{\langle q2\rangle\langle12\rangle} + \frac{\langle q12^2|H\rangle \lambda_2^\rho (\tilde{\lambda}_1+\tilde{\lambda}_2)_{\dot{\beta}} e^{i\lambda_2(\tilde{\lambda}_1+\tilde{\lambda}_2)x}}{\langle q2\rangle\langle12\rangle}\\
    & = \frac{H(z_2)}{\langle12\rangle}f_{ab}^cJ_c(\tilde{\lambda}_1+\tilde{\lambda}_2;z_2).
\end{align*}

\begin{figure}
    \centering
    \includegraphics[width=0.95\linewidth]{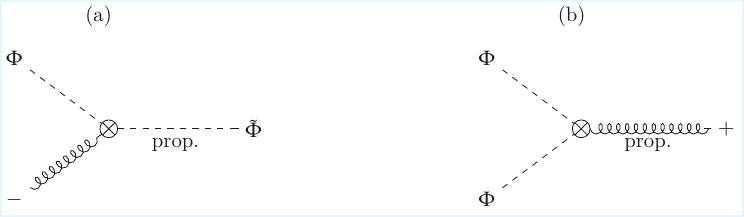}
    \caption{The new splitting functions coming from the introduction of the $BD\Phi D\Phi H$ vertex (denoted by $\otimes$).}
    \label{fig:feyn}
\end{figure}
Where we used the Schouten identity in the second equality. In the last equality, I discarded the pure gauge term proportional to $\lambda_2^\rho (\tilde{\lambda}_1+\tilde{\lambda}_2)_{\dot{\beta}} e^{i\lambda_2(\tilde{\lambda}_1+\tilde{\lambda}_2)x}$ (total derivative of a plane wave) and used the fact that we are working in the collinear limit $\lambda_1\sim\lambda_2$. Altogether, the following OPEs have been added to the theory
\begin{align*}
    \mathcal{O}_a(\tilde{\lambda}_1;z_1)\tilde{J}_b(\tilde{\lambda}_2;z_2)\sim \frac{H(z_2)}{\langle12\rangle}f_{ab}^c\tilde{\mathcal{O}}_c(\tilde{\lambda}_1+\tilde{\lambda}_2;z_2)\quad \mathcal{O}_a(\tilde{\lambda}_1;z_1)\mathcal{O}_b(\tilde{\lambda}_2;z_2)\sim \frac{H(z_2)}{\langle12\rangle}f_{ab}^cJ_c(\tilde{\lambda}_1+\tilde{\lambda}_2;z_2).
\end{align*}

We can form new states for our chiral algebra in terms of $H(z)$
\begin{align}\label{eq:chiral_alg_states}
    J_{\pm,a}(\tilde{\lambda}_1;z_1) = J_{a}(\tilde{\lambda}_1;z_1) \pm \frac{1}{\sqrt{H}}\mathcal{O}_{a}(\tilde{\lambda}_1;z_1) \quad \mathcal{O}_{\pm,a}(\tilde{\lambda}_1;z_1) = \tilde{\mathcal{O}}_{a}(\tilde{\lambda}_1;z_1)\pm\frac{1}{\sqrt{H}}\tilde{J}_{a}(\tilde{\lambda}_1;z_1)
\end{align}
with conformal weights 1 and 0, respectively.

The OPEs of the states defined in equation \ref{eq:chiral_alg_states} take the form:
\begin{align*}
    J_{\pm,a}(\tilde{\lambda}_1;z_1)J_{\pm,b}(\tilde{\lambda}_2;z_2)&\sim \frac{2}{\langle12\rangle}f_{ab}^cJ_{\pm,c}(\tilde{\lambda}_1+\tilde{\lambda}_2;z_2)\quad J_{+,a}(\tilde{\lambda}_1;z_1)J_{-,b}(\tilde{\lambda}_2;z_2)\sim 0\\
    J_{\pm,a}(\tilde{\lambda}_1;z_1)\mathcal{O}_{\pm,b}(\tilde{\lambda}_2;z_2)&\sim \frac{2}{\langle12\rangle}f_{ab}^c\mathcal{O}_{\pm,c}(\tilde{\lambda}_1+\tilde{\lambda}_2;z_2)\quad J_{+,a}(\tilde{\lambda}_1;z_1)\mathcal{O}_{-,b}(\tilde{\lambda}_2;z_2)\sim0.
\end{align*}
The explicit computations of the OPEs are presented in appendix \ref{app:OPE}. I have just shown that the states $J_{\pm,a}(\tilde{\lambda}_1;z_1),\tilde{\mathcal{O}}_{\pm,b}(\tilde{\lambda}_2;z_2)$ define two independent chiral algebras that are swapped under crossing the branch cuts of $\Sigma$. In other words, our states and OPEs above locally define a chiral algebra on the torus $\Sigma$.

\section{Bootstrapping vertices part I: The genus 3 deformation}\label{sec:bootstrap}

Since the deformation of SDYM theory coming from the genus 3 case was so simple, it seems plausible that the higher-order vertices can be determined. This section constrains the form of all vertices in the theory and proves that the coefficients are uniquely determined by integrability.

A useful feature due to the integrability of theories descending from twistor space is that they exhibit no tree-level scattering. To see why this is the case, recall that plane wave scattering states are localized at points on the twistor $\CP^1$. To compute the amplitude on twistor space, we can fix a Lorenz gauge with a metric making the radius of the twistor $\CP^1$ arbitrarily large. This then suppresses any propagators that depend on two points $z,z'\in \CP^1$ in the large radius limit. For generic kinematics, plane waves are supported at different points, so they cannot interact. It is possible to leverage this fact to constrain interaction terms in the 4d action.

Throughout this section, I provide supporting evidence towards the integrability of the action proposed in equation \ref{eq:genus1_deformation}. This is in the form of vanishing amplitudes via standard Feynman diagram calculations and generalizations done via the chiral algebra. I will then use symmetry, gauge invariance, and twistor arguments to fix the types of vertices that can descend from the twistor action. The key result from this analysis is proven in subsection \ref{subsec:full_theory}, where I apply my findings in the proof of 

\begin{theorem}\label{thm:4d_action}
    The only integrable and gauge invariant 4d action that is consistent with the symmetries of the twistor action given in equation \ref{eq:hCS_twistor_action} is
    \begin{align*}
        \int_{\R^4}\Tr(BF+BDBDBH+\text{terms of order $H^n$})
    \end{align*}
    for $n>1$ and the higher order terms take the form
    \begin{align*}
        B^{2n-1}DBDBH^{n}
    \end{align*}
    where the form of the index contractions and coefficients of these higher-order vertices are fixed by integrability.
\end{theorem}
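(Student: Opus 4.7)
The plan is to separate the theorem into two assertions: a structural constraint that restricts the possible schematic form of each vertex from the properties of the twistor action, and a recursive integrability argument that fixes the remaining index contractions and coefficients order by order.

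For the structural part, I would proceed as follows. On $\PT_\Sigma$ the only fields are $\tilde{a}$ (a section of $\mathcal{O}(-4)$) and $a$ (a section of the trivial bundle), and the Lagrangian must land in the canonical bundle, i.e.\ in $\mathcal{O}(-4)$. Since $H$ is a section of $\mathcal{O}(8)$, the only monomials in $\tilde{a}, a, H$ with this property are of the form $\tilde{a}^{2k+1} a^{m} H^{k}$. In the twistor action given by equation \ref{eq:hCS_twistor_action}, $a$ enters only through the curvature $F(a)$, so after pushforward every explicit $A$ on $\R^{4}$ can be absorbed into a covariant derivative $D = d + [A, \cdot]$ acting on the $B$'s. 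Consequently the admissible spacetime vertices are schematically $B^{2n+1} H^{n}$ with covariant derivatives distributed among the $B$'s. Matching the undotted spinor indices from $2n+1$ copies of $B_{\alpha\beta}$ against the $8n$ indices coming from $H^{n}$, together with the requirement that the dotted indices on the covariant derivatives pair off in an $SU(2)_R$-invariant way, fixes the total number of derivatives to two. This yields the schematic form $B^{2n-1}(DB)(DB)H^{n}$ claimed in the theorem.

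For the integrability part, I would run an induction on $n$. At order $n$ the space of vertices of the allowed schematic form is finite dimensional, spanned by the admissible undotted index contractions with a priori arbitrary coefficients. Suppose the vertices at orders $1,\dots,n-1$ have been uniquely fixed. I would then consider tree-level amplitudes with $2n+1$ external $B$-legs. Such amplitudes receive contributions from internal-propagator diagrams built from lower-order vertices together with a contact term from the order-$n$ vertex. The contact term depends linearly on the undetermined coefficients. Twistor localisation, reviewed at the start of this section, demands that the full amplitude vanish for generic kinematics; this provides a linear system for the coefficients at order $n$, and tuning them to kill the amplitude determines the new vertex, provided the system has a unique solution modulo field redefinitions.

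The principal obstacle is exactly this uniqueness statement at every order. Existence of a solution is expected on general grounds because the twistor action is a consistent BV master action whose pushforward must produce a consistent $\R^{4}$ theory. Uniqueness, however, is subtle. I would attack it by translating the bootstrap conditions into rigidity statements for the celestial chiral algebra constructed in subsection \ref{subsec:genus3_OPE}: the deformed OPE $\tilde{J}\tilde{J}\sim H(z)\,J/\langle 12\rangle$ on the genus-3 curve $\Sigma$, combined with the Jacobi identities of the resulting level-zero Kac--Moody structure on $\Sigma$, should recursively pin down all higher-order contact terms. An alternative — a direct gauge-fixed Penrose reduction of equation \ref{eq:hCS_twistor_action} — would give the vertices explicitly and make uniqueness manifest, but is identified in the introduction as technically forbidding.
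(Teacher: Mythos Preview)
Your overall architecture---a structural constraint on the vertex shape followed by a recursive integrability argument---is the same two-step plan the paper uses, and your inductive integrability argument in the second half is essentially identical to the paper's. You also correctly flag the uniqueness of the linear system as the delicate point, which the paper itself treats informally.

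The structural argument, however, has a genuine gap. Your claim that ``matching the undotted spinor indices from $2n+1$ copies of $B_{\alpha\beta}$ against the $8n$ indices coming from $H^{n}$, together with the requirement that the dotted indices on the covariant derivatives pair off in an $SU(2)_R$-invariant way, fixes the total number of derivatives to two'' does not hold. The $SU(2)_R$ condition only forces the number of derivatives to be even, and undotted index matching does not pin it down either: the indices of the $H$'s can contract among themselves, and the indices of the $B$'s and $D$'s can contract among themselves, so there is no balance equation forcing $d=2$. For instance, nothing in your argument excludes $B^{2n+1}H^{n}$ with zero derivatives (all $B$ and $H$ indices self-contracting in pairs) or $B^{2n+1}D^{4}H^{n}$.

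What the paper actually uses here is four-dimensional conformal dimension counting: with $|B|=2$, $|D|=1$, $|H|=-4$ and a dimension-$4$ Lagrangian, the operator accompanying $H^{n}$ must have dimension $4+4n$, and $B^{2n+1}$ already supplies $4n+2$, forcing exactly two covariant derivatives. The paper then separately rules out replacing the two $D$'s by a single $F_{\dot\alpha\dot\beta}$ (same dimension) because the dotted indices cannot contract. Your twistor bundle-grading argument correctly recovers the count of $2n+1$ copies of $B$ at order $H^{n}$ (and is a clean alternative to the paper's twistor Feynman-diagram counting), but to finish the structural step you must supplement it with this dimension argument rather than the index-matching claim.
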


\subsection{The 4 point amplitude}\label{subsec:4pt_amp}

Let me start the analysis by showing that the leading order lagrangian derived in section \ref{sec:model_build} exhibits basic integrability: The vanishing of the $---+$ amplitude introduced by the new vertex. I will use standard Feynman diagram techniques in this section, and in the next section, I show that one can reproduce and generalize the calculation to 3 minus, $n$ plus amplitudes using the form factor techniques developed in \cite{Costello:2022wso}.

\subsection*{The channel diagrams}

\begin{figure}
    \centering
    \includegraphics[width=0.95\linewidth]{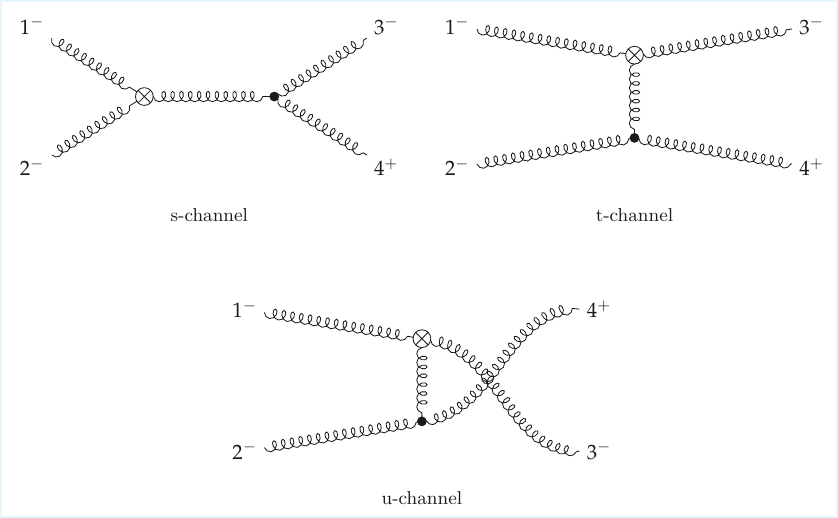}
    \caption{The channel diagrams contributing to the $---+$ amplitude. Note that the $---$ vertex is labelled by the $\otimes$ symbol while the $++-$ vertex is labelled by the $\bullet$ symbol.}
    \label{fig:channel_diagrams}
\end{figure}

We consider the 4-point $---+$ amplitude which includes the three Feynman diagrams shown in Fig. \ref{fig:channel_diagrams}. The basic tools we need are the following: 
\begin{itemize}
    \item Our only propagator is
    \begin{align*}
        \langle A_{\alpha}^{\dot{\alpha}}B_{\beta\delta}\rangle(p) = \frac{\epsilon_{\alpha(\gamma}p_{\beta)}^{\dot{\alpha}}}{p^2}.
    \end{align*}
    \item By gauge fixing our positive helicity gluon $A$, using a reference spinor $q$, external leg contractions with momentum $p_i$ are $\frac{q^{\alpha}\tilde{\lambda}_i^{\dot{\alpha}}}{\langle qi\rangle}$. While external leg contractions for the negative helicity gluons with momentum $p_j$ take the form $\lambda_j^\alpha\lambda_j^\beta$.
    \item The only nontrivial vertex contributions comes from the 3-minus term which adds a factor of $p_{i\gamma}^{\dot{\alpha}}p_{j\sigma\dot{\alpha}}H^{\alpha\beta\gamma\delta\rho\sigma\zeta\eta}$ and the 3-minus, 1-plus term which contributes $p_{i\gamma}^{\dot{\alpha}}H^{\alpha\beta\gamma\delta\rho\sigma\zeta\eta}$.
\end{itemize}

Let's compute the s-channel:
\begin{align*}
    3\frac{q_\alpha\tilde{\lambda}_4^{\dot{\alpha}}\lambda_3^\alpha\lambda_3^\beta}{\langle4q\rangle}\frac{\epsilon_{\beta(\gamma}p_{43\delta)\dot{\alpha}}}{p_{43}^2}H^{\gamma\delta\epsilon\zeta\eta\theta\tau\kappa}p^{\dot{\beta}}_{2\epsilon }p_{1\dot{\beta}\theta}\lambda_{2\zeta}\lambda_{2\eta}\lambda_{1\tau}\lambda_{1\kappa}f_{ab}^ef_{cde} & = -3\frac{\langle q3\rangle[12]}{\langle4q\rangle\langle43\rangle}\langle 1^32^33^2|H\rangle f_{ab}^ef_{cde}.
\end{align*}
using the fact that $p_{43}^2 = 2[43]\langle43\rangle$ and 
\begin{align*}
    \tilde{\lambda}_4^{\dot{\alpha}} \lambda_3^\beta \epsilon_{\beta(\gamma}p_{43\delta)\dot{\alpha}} = \tilde{\lambda}_{4}^{\dot{\alpha}} \lambda_{3(\gamma}(\lambda_{4\delta)}\tilde{\lambda}_4^{\dot{\alpha}}+\lambda_{3\delta)}\tilde{\lambda}_{3\dot{\alpha}}) = [34]\lambda_{3(\gamma}\lambda_{3\delta)} = 2[34]\lambda_{3\gamma}\lambda_{3\delta} 
\end{align*}
since $\tilde{\lambda}_4^{\dot{\alpha}}\tilde{\lambda}_{4\dot{\alpha}}=0$. Similarly, the t-channel is
\begin{align*}
    3\frac{\langle q1\rangle[32]}{\langle 4q\rangle\langle41\rangle}\langle 1^22^33^3|H\rangle f_{ad}^ef_{bce}
\end{align*}

and the u-channel is
\begin{align*}
    3\frac{\langle q2\rangle[31]}{\langle4q\rangle\langle42\rangle}\langle 1^32^23^3|H\rangle f_{ac}^ef_{bde}.
\end{align*}
Applying the Schouten identity $\langle pq \rangle \langle rs \rangle + \langle pr \rangle \langle sq \rangle + \langle ps \rangle \langle qr \rangle = 0
$ gives
\begin{align*}
    -3\frac{\langle q3\rangle[12]}{\langle4q\rangle\langle43\rangle}\langle 1^32^33^2|H\rangle f_{ab}^ef_{cde} & = -3\frac{\langle q3\rangle\lambda_1[12]}{\langle4q\rangle\langle43\rangle}\langle 1^22^33^2|H\rangle f_{ab}^ef_{cde}\\
    & = 3\frac{[12]}{\langle4q\rangle\langle43\rangle}(\langle31\rangle\langle q1^22^33^2|H\rangle +3\langle1q\rangle\langle 1^22^33^3|H\rangle)f_{ab}^ef_{cde}
\end{align*}
and momentum conservation implies $[12]\langle14\rangle+[32]\langle34\rangle=0\implies [12]/\langle43\rangle=-[32]/\langle41\rangle$. So, with the Jacobi identity ($f_{ad}{}^{e} f_{bc}{}^{d} + f_{bd}{}^{e} f_{ca}{}^{d} + f_{cd}{}^{e} f_{ab}{}^{d} = 0
$ $f_{ad}{}^ef_{bce}=-f_{ab}{}^ef_{cde}-f_{ac}{}^ef_{dbe}$), we see that 
\begin{align*}
    s+t = 3\frac{[12]\langle31\rangle}{\langle4q\rangle\langle43\rangle}\langle q1^22^33^2|H\rangle f_{ab}^ef_{cde}+3\frac{[12]\langle1q\rangle}{\langle4q\rangle\langle43\rangle}\langle 1^22^33^3|H\rangle f_{ac}^ef_{bde}.
\end{align*}
We can then Schouten the u-channel term to get
\begin{align*}
    3\frac{\langle q2\rangle[31]}{\langle4q\rangle\langle42\rangle}\langle 1^32^23^3|H\rangle f_{ac}^ef_{bde} & = 3\frac{\langle q2\rangle\lambda_1[31]}{\langle4q\rangle\langle42\rangle}\langle 1^22^23^3|H\rangle f_{ac}^ef_{bde}\\
    & = 3\frac{[31]}{\langle4q\rangle\langle42\rangle}(\langle12\rangle\langle q1^22^23^3|H\rangle + 3\langle q1\rangle\langle 1^22^33^3|H\rangle )f_{ac}^ef_{bde}.
\end{align*}
Momentum conservation gives $[12]\langle42\rangle+[13]\langle43\rangle=0\implies [12]/\langle43\rangle=[31]\langle42\rangle$ which means that
\begin{align*}
    s+t+u = 3\frac{[12]\langle31\rangle}{\langle4q\rangle\langle43\rangle}\langle q1^22^33^2|H\rangle f_{ab}^ef_{cde}+3\frac{[31]\langle12\rangle}{\langle4q\rangle\langle42\rangle}\langle q1^22^23^3 |H\rangle f_{ac}^ef_{bde}.
\end{align*}
We can finally apply momentum conservation $[12]\langle13\rangle+[42]\langle43\rangle=[13]\langle12\rangle+[43]\langle42\rangle=0$ to see that the above is a contact term
\begin{align*}
    s+t+u = 3\frac{[42]}{\langle4q\rangle}\langle q1^22^33^2|H\rangle f_{ab}^ef_{cde}+3\frac{[43]}{\langle4q\rangle}\langle q1^22^23^3|H\rangle f_{ac}^ef_{bde}.
\end{align*}
In what follows, I will show that this is exactly cancelled by the contact term contributions.

\subsection*{The contact terms}

\begin{figure}
    \centering
    \includegraphics[width=0.55\linewidth]{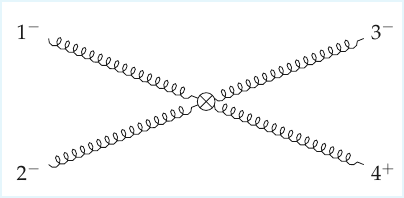}
    \caption{The 4-point contact term contributing to the $---+$ amplitude.}
    \label{fig:contact_term}
\end{figure}
The contact term Feynman diagrams included in this amplitude are presented in Fig. \ref{fig:contact_term}. The vertex generating these terms is of the form 
\begin{align*}
    \Tr(A_\gamma^{\dot{\alpha}}B_{\delta\rho}d_{\sigma\dot{\alpha}}B_{\zeta\eta}H^{\alpha\beta\gamma\delta\rho\sigma\zeta\eta}B_{\alpha\beta}).
\end{align*}
Computing the contact term gives
\begin{align*}
    \frac{[42]\lambda_2-[41]\lambda_1}{\langle4q\rangle}&\langle q1^22^23^2 |H\rangle f_{ab}^ef_{cde}+\frac{[43]\lambda_3-[41]\lambda_1}{\langle4q\rangle}\langle q1^22^23^2|H\rangle f_{ac}^ef_{bde}\\
    &\quad +\frac{[43]\lambda_3-[42]\lambda_2}{\langle 4q\rangle}\langle q1^22^23^2|H\rangle f_{ad}^ef_{bce}\\
    & = \frac{2[42]\lambda_2-[41]\lambda_1-[43]\lambda_3}{\langle4q\rangle}\langle q1^22^23^2|H\rangle f_{ab}^ef_{cde}\\
    &\quad +\frac{2[43]\lambda_3-[41]\lambda_1-[42]\lambda_2}{\langle4q\rangle}\langle q1^22^23^2|H\rangle f_{ac}^ef_{bde}\\
    & = 3\frac{[42]}{\langle4q\rangle}\langle q1^22^33^2|H\rangle f_{ab}^ef_{cde}+3\frac{[43]}{\langle4q\rangle}\langle q1^22^23^3|H\rangle f_{ac}^ef_{bde}.
\end{align*}
Which cancels exactly what we found from the channel diagrams. Therefore, the $---+$ amplitude vanishes as implied by integrability.

\subsection{Form factors}\label{subsec:form_factor}

While the vanishing of the $---+$ amplitude in the last subsection was a concrete check of integrability, we can do much better by appealing to a more abstract method leveraging the dual celestial chiral algebra derived in section \ref{sec:celestial}. In all form factor calculations, I only consider colour-ordered terms (terms whose colour factors contract with $\Tr(a_1...a_n)$) to ease the load since we only need to show that the form factors vanish when turning on momentum conservation.

The idea is to compute 3-minus, $n$-plus amplitudes by computing form factors in the presence of the operator $B_{\alpha\beta}D_\gamma^{\dot{\alpha}}B_{\delta\rho}D_{\sigma\dot{\alpha}}B_{\zeta\eta}H^{\alpha\beta\gamma\delta\rho\sigma\zeta\eta}$ by choosing the conformal block
\begin{align*}
    \langle B_{\alpha\beta}D_\gamma^{\dot{\alpha}}B_{\delta\rho}D_{\sigma\dot{\alpha}}B_{\zeta\eta}H^{\alpha\beta\gamma\delta\rho\sigma\zeta\eta}&|\tilde{J}^a(\tilde{\lambda}_1;z_1)\tilde{J}^b(\tilde{\lambda}_2;z_2)\tilde{J}^c(\tilde{\lambda}_3;z_3)\rangle\\
    &= [23]\langle11222333|H\rangle + [12]\langle11122233|H\rangle + [31]\langle11122333|H\rangle.
\end{align*}
This was computed by contracting the external legs of the operator $B_{\alpha\beta}D_\gamma^{\dot{\alpha}}B_{\delta\rho}D_{\sigma\dot{\alpha}}B_{\zeta\eta}H^{\alpha\beta\gamma\delta\rho\sigma\zeta\eta}$ by the negative helicity states labeled by the $\tilde{J}$ insertions. We can then insert an arbitrary number of positive helicity insertions and use the chiral algebra to reduce the form factor to the conformal block above, accompanied by poles arising from the chiral algebra. For an in-depth study of this computational method, see \cite{Costello:2022jpg}.

\subsection*{4-point form factor}

Let me start by performing a simple example using the chiral algebra correlators to compute the $---+$ amplitude.

Using the OPEs defined in subsection \ref{subsec:genus3_OPE} and only keeping colour ordered terms,
\begin{align*}
    \langle B_{\alpha\beta}D_\gamma^{\dot{\alpha}}&B_{\delta\rho}D_{\sigma\dot{\alpha}}B_{\zeta\eta}H^{\alpha\beta\gamma\delta\rho\sigma\zeta\eta}|\tilde{J}^a(\tilde{\lambda}_1;z_1)\tilde{J}^b(\tilde{\lambda}_2;z_2)\tilde{J}^c(\tilde{\lambda}_3;z_3)J^d(\tilde{\lambda}_4;z_4)\rangle\\
    & = \frac{1}{\langle34\rangle}\langle B_{\alpha\beta}D_\gamma^{\dot{\alpha}}B_{\delta\rho}D_{\sigma\dot{\alpha}}B_{\zeta\eta}H^{\alpha\beta\gamma\delta\rho\sigma\zeta\eta}|\tilde{J}^a(\tilde{\lambda}_1;z_1)\tilde{J}^b(\tilde{\lambda}_2;z_2)\tilde{J}^c(\tilde{\lambda}_3+\tilde{\lambda}_4;z_3)\rangle \\
    &\quad + \frac{1}{\langle41\rangle}\langle B_{\alpha\beta}D_\gamma^{\dot{\alpha}}B_{\delta\rho}D_{\sigma\dot{\alpha}}B_{\zeta\eta}H^{\alpha\beta\gamma\delta\rho\sigma\zeta\eta}|\tilde{J}^a(\tilde{\lambda}_1+\tilde{\lambda}_4;z_1)\tilde{J}^b(\tilde{\lambda}_2;z_2)\tilde{J}^c(\tilde{\lambda}_3;z_3)\rangle\\
    & = \frac{[12]\langle 1^32^33^2|H\rangle+[23]\langle 1^22^33^3|H\rangle+[24]\langle 1^22^33^3|H\rangle+[31]\langle 1^32^23^3|H\rangle+[41]\langle 1^32^23^3|H\rangle}{\langle34\rangle} \\ 
    &\quad + \frac{[12]\langle1^32^33^2|H\rangle+[42]\langle1^32^33^2|H\rangle+[23]\langle1^22^33^3|H\rangle+[31]\langle1^32^23^3|H\rangle+[34]\langle1^32^23^3|H\rangle}{\langle41\rangle} \\
    & = \frac{[24]\langle 1^22^33^3|H\rangle+[41]\langle 1^32^23^3|H\rangle}{\langle34\rangle} + \frac{[42]\langle1^32^33^2|H\rangle+[34]\langle1^32^23^3|H\rangle}{\langle41\rangle} \\
    &\quad -\frac{[23]\langle13\rangle}{\langle34\rangle\langle41\rangle}\langle1^22^33^3|H\rangle-\frac{[12]\langle13\rangle}{\langle34\rangle\langle41\rangle}\langle1^32^33^2|H\rangle-\frac{[31]\langle13\rangle}{\langle34\rangle\langle41\rangle}\langle1^32^23^3|H\rangle.
\end{align*}
This can be easily massaged into the expression derived for the form factor using Feynman diagrams in appendix \ref{app:4pt_ff}. It was already shown in subsection \ref{subsec:4pt_amp} that the 4-point amplitude vanishes on the support of momentum conservation, but it is also  easy to see that the above expression vanishes as well. One simply needs to apply the momentum conservation relations 
\begin{align*}
    [23]\langle13\rangle+[24]\langle14\rangle = 0, \quad [12]\langle13\rangle+[42]&\langle43\rangle=0,\quad -[31]\langle13\rangle = [24]\langle24\rangle,\\
    \text{and}\quad [41]\langle41\rangle+[34]\langle34\rangle &+ [24]\langle24\rangle=0.
\end{align*}

\subsection*{The $n$-point 3 minus form factor}

To find the general formula of the trace ordered $n$-point, 3-minus form factor, I will break the computation up into pieces by adopting the notation $J[1](\tilde{\lambda}_{\dot{\alpha}},z) = J[0,1](z)\tilde{\lambda}_{\dot{0}}+J[1,0](z)\tilde{\lambda}_{\dot{1}}, J(z)=J[0,0](z)$ and similarly for negative helicity states. We then have that
\begin{align*}
    \langle B_{\alpha\beta}&D_\gamma^{\dot{\alpha}}B_{\delta\rho}D_{\sigma\dot{\alpha}}B_{\zeta\eta}H^{\alpha\beta\gamma\delta\rho\sigma\zeta\eta}|\tilde{J}^{a_1}[1](z_1)\tilde{J}^{a_2}[1](z_2)\tilde{J}^{a_3}(z_3)J^{a_4}(z_4)...J^{a_n}(z_n)\rangle\\
    & = \frac{\langle31\rangle\langle B_{\alpha\beta}D_\gamma^{\dot{\alpha}}B_{\delta\rho}D_{\sigma\dot{\alpha}}B_{\zeta\eta}H^{\alpha\beta\gamma\delta\rho\sigma\zeta\eta}|\tilde{J}^{a_1}[1](z_1)\tilde{J}^{a_2}[1](z_2)\tilde{J}^{a_3}(z_3)\rangle}{\langle34\rangle...\langle n-1,n\rangle\langle n1\rangle} = \frac{\langle31\rangle[12]\langle1^32^33^2|H\rangle}{\langle34\rangle...\langle n-1,n\rangle\langle n1\rangle}\\
    \langle B_{\alpha\beta}&D_\gamma^{\dot{\alpha}}B_{\delta\rho}D_{\sigma\dot{\alpha}}B_{\zeta\eta}H^{\alpha\beta\gamma\delta\rho\sigma\zeta\eta}|\tilde{J}^{a_1}[1](z_1)\tilde{J}^{a_2}(z_2)\tilde{J}^{a_3}[1](z_3)J^{a_4}(z_4)...J^{a_n}(z_n)\rangle\\
    & = \frac{\langle31\rangle[31]\langle1^32^23^3|H\rangle}{\langle34\rangle...\langle n-1,n\rangle\langle n1\rangle}\\
    \langle B_{\alpha\beta}&D_\gamma^{\dot{\alpha}}B_{\delta\rho}D_{\sigma\dot{\alpha}}B_{\zeta\eta}H^{\alpha\beta\gamma\delta\rho\sigma\zeta\eta}|\tilde{J}^{a_1}[1](z_1)\tilde{J}^{a_2}(z_2)\tilde{J}^{a_3}(z_3)J^{a_4}(z_4)...J^{a_i}[1](z_i)...J^{a_n}(z_n)\rangle\\
    & = \frac{\langle3i\rangle\langle i1\rangle\langle B_{\alpha\beta}D_\gamma^{\dot{\alpha}}B_{\delta\rho}D_{\sigma\dot{\alpha}}B_{\zeta\eta}H^{\alpha\beta\gamma\delta\rho\sigma\zeta\eta}|\tilde{J}^{a_1}[1](z_1)\tilde{J}^{a_2}(z_2)\tilde{J}^{a_3}(z_3)J^{a_i}[1](z_i)\rangle}{\langle 34\rangle...\langle i-1,i\rangle\langle i,i+1\rangle...\langle n-1,n\rangle\langle n1\rangle}\\
    & = \frac{\langle i1\rangle[i1]\langle1^32^23^3|H\rangle}{\langle 34\rangle...\langle i-1,i\rangle\langle i,i+1\rangle...\langle n-1,n\rangle\langle n1\rangle}\\
    \langle B_{\alpha\beta}&D_\gamma^{\dot{\alpha}}B_{\delta\rho}D_{\sigma\dot{\alpha}}B_{\zeta\eta}H^{\alpha\beta\gamma\delta\rho\sigma\zeta\eta}|\tilde{J}^{a_1}(z_1)\tilde{J}^{a_2}[1](z_2)\tilde{J}^{a_3}[1](z_3)J^{a_4}(z_4)...J^{a_n}(z_n)\rangle\\
    & = \frac{[23]\langle31\rangle\langle1^22^33^3|H\rangle}{\langle34\rangle...\langle n1\rangle}\\
    \langle B_{\alpha\beta}&D_\gamma^{\dot{\alpha}}B_{\delta\rho}D_{\sigma\dot{\alpha}}B_{\zeta\eta}H^{\alpha\beta\gamma\delta\rho\sigma\zeta\eta}|\tilde{J}^{a_1}(z_1)\tilde{J}^{a_2}[1](z_2)\tilde{J}^{a_3}(z_3)J^{a_4}(z_4)...J^{a_i}[1](z_i)...J^{a_n}(z_n)\rangle\\
    & = \frac{[2i]\langle i1\rangle\langle1^22^33^3|H\rangle}{\langle34\rangle...\langle n1\rangle}+\frac{[i2]\langle 3i\rangle\langle1^32^33^2|H\rangle}{\langle34\rangle...\langle n1\rangle}\\
    \langle B_{\alpha\beta}&D_\gamma^{\dot{\alpha}}B_{\delta\rho}D_{\sigma\dot{\alpha}}B_{\zeta\eta}H^{\alpha\beta\gamma\delta\rho\sigma\zeta\eta}|\tilde{J}^{a_1}(z_1)\tilde{J}^{a_2}(z_2)\tilde{J}^{a_3}[1](z_3)J^{a_4}(z_4)...J^{a_i}[1](z_i)...J^{a_n}(z_n)\rangle\\
    & = \frac{[3i]\langle 3i\rangle\langle1^32^23^3|H\rangle}{\langle34\rangle...\langle n1\rangle}\\
    \langle B_{\alpha\beta}&D_\gamma^{\dot{\alpha}}B_{\delta\rho}D_{\sigma\dot{\alpha}}B_{\zeta\eta}H^{\alpha\beta\gamma\delta\rho\sigma\zeta\eta}|\tilde{J}^{a_1}(z_1)\tilde{J}^{a_2}(z_2)\tilde{J}^{a_3}(z_3)J^{a_4}(z_4)...J^{a_i}[1](z_i)...J^{a_j}[1](z_j)...J^{a_n}(z_n)\rangle\\
    & = \frac{[ij]\langle ij\rangle\langle1^32^23^3|H\rangle}{\langle34\rangle...\langle n1\rangle}.
\end{align*}
The full form factor is then given by summing the above seven expressions over $i = 4,...,n$ and $j>i$. We thus have the formula
\begin{align}\label{eq:gen_formfactor}
    \nonumber\left(\langle31\rangle[12]+\sum_{i=4}^n\langle 3i\rangle[i2]\right)&\frac{\langle1^32^33^2|H\rangle}{\langle34\rangle...\langle n1\rangle}+\left(\langle31\rangle[23]+\sum_{i=4}^n\langle i1\rangle[2i]\right)\frac{\langle1^22^33^3|H\rangle}{\langle34\rangle...\langle n1\rangle}\\
    &+\left(\langle31\rangle[31]+\sum_{i=4}^n\left(\langle i1\rangle[i1]+\langle3i\rangle[3i]+\sum_{j>i}^n\langle ij\rangle[ij]\right)\right)\frac{\langle1^32^23^3|H\rangle}{\langle34\rangle...\langle n1\rangle}.
\end{align}
Let me give a more formal presentation of the above formula with
\begin{proposition}\label{prop:genus3_npt_formfactor}
    The colour-ordered $n$-point, $3$-minus form factor in the theory is given by equation \ref{eq:gen_formfactor}.
\end{proposition}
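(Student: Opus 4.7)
The plan is to reduce the $n$-point correlator to the three-point conformal block by iterated use of the chiral algebra OPEs from Section~\ref{subsec:genus3_OPE}, classify which soft-mode configurations of the external states can survive, and sum the resulting contributions. Because the operator $B_{\alpha\beta}D_\gamma^{\dot{\alpha}}B_{\delta\rho}D_{\sigma\dot{\alpha}}B_{\zeta\eta}H^{\alpha\beta\gamma\delta\rho\sigma\zeta\eta}$ carries exactly two derivatives whose dotted indices contract into a single bracket $[ij]$, the conformal block $\langle O|\cdots\rangle$ is non-vanishing only when precisely two external states sit in the linear soft mode ``$[1]$'' and all other states sit in the bare mode ``$[0]$''; any other distribution either fails to saturate the two dotted indices or introduces surplus $\tilde{\lambda}$'s that the operator cannot absorb.

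First I would apply the standard OPEs $J \cdot J \sim f/\langle ij\rangle\, J$ and $J \cdot \tilde{J} \sim f/\langle ij\rangle\, \tilde{J}$ iteratively to collapse the $n-3$ positive-helicity insertions. Colour ordering forces only nearest-neighbour collisions in the cyclic order $1,2,\ldots,n,1$ to contribute, producing the universal Parke-Taylor denominator $\langle 34\rangle\langle 45\rangle\cdots\langle n1\rangle$, while each OPE step transfers the $\tilde{\lambda}$ of the absorbed state onto its merged neighbour. Once all $J$'s have been consumed, the remaining correlator is a three-point conformal block of the form computed at the start of Section~\ref{subsec:form_factor}, evaluated on the appropriately-shifted $\tilde{\lambda}$'s of the surviving $\tilde{J}$'s.

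Next, I enumerate the seven structural classes of dressing configurations: two ``$[1]$''-dressings on the $\tilde{J}$'s at one of the pairs $\{1,2\}$, $\{1,3\}$, $\{2,3\}$; one on a $\tilde{J}$ and one on a $J^{a_i}$ for $i\in\{4,\ldots,n\}$ (three sub-classes depending on which $\tilde{J}$); or two on $J^{a_i}, J^{a_j}$ for $i<j$ in $\{4,\ldots,n\}$. Each class evaluates to a specific combination of the three $H$-contracted factors $\langle 1^32^33^2|H\rangle$, $\langle 1^22^33^3|H\rangle$, $\langle 1^32^23^3|H\rangle$ appearing in the three-point block, weighted by $[ij]\langle kl\rangle$-style factors that track which legs carry the two derivative momenta and which carry extra $\lambda$-dressing inherited from the collapsed positive-helicity states.

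Summing the seven contributions and grouping by the three $H$-factors should produce equation~\eqref{eq:gen_formfactor} directly: the seeds $\langle 31\rangle[12]$, $\langle 31\rangle[23]$, $\langle 31\rangle[31]$ coming from the pure-$\tilde{J}$ classes combine with the $\langle 3i\rangle[i2]$, $\langle i1\rangle[2i]$, $\langle i1\rangle[i1]$, $\langle 3i\rangle[3i]$, and $\langle ij\rangle[ij]$ sums coming from the mixed and pure-$J$ classes to fill out the three parenthesised expressions. The main obstacle I anticipate is the bookkeeping in the mixed classes, where the dressed insertion $J^{a_i}[1]$ at an interior position splits the OPE chain in two; one must verify that the auxiliary $\langle 3i\rangle\langle i1\rangle/(\langle i-1,i\rangle\langle i,i+1\rangle)$-type factors produced when re-gluing the two sub-chains reconstitute the universal denominator $\langle 34\rangle\cdots\langle n1\rangle$ without leftovers, so that the corrections to each seed term really do aggregate into the sums inside the parentheses.
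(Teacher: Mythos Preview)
Your approach is sound and essentially reproduces what the paper does \emph{informally} just before stating the proposition: the seven structural classes you enumerate are exactly the seven correlators the paper writes out explicitly, and summing them over $i$ and $j$ is precisely how equation~\eqref{eq:gen_formfactor} is assembled in the text. However, the paper's actual \emph{proof} of the proposition takes a different route: it proceeds by induction on $n$, checking the base case $n=4$ against the explicit four-point computation and then showing that adding one more positive-helicity insertion $J^{a_{n+1}}(z_{n+1})$ and performing its OPEs against the existing states produces exactly the five new contributions needed to turn $F_n$ into $F_{n+1}$.

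The virtue of the inductive argument is that it completely sidesteps the bookkeeping obstacle you flag at the end of your proposal. You correctly identify that the delicate point in a direct enumeration is verifying that a dressed interior insertion $J^{a_i}[1]$ splits the OPE chain and that the resulting $\langle 3i\rangle\langle i1\rangle$-type factors recombine into the universal Parke--Taylor denominator. In the inductive step this never arises: one only ever appends a single new state at the end of the chain, so the denominator extends by the single factor $\langle n,n+1\rangle\langle n+1,1\rangle/\langle n1\rangle$ and there is no chain-splitting to re-glue. Your direct-enumeration approach is more conceptual in that it explains \emph{why} the formula has its three-term structure, but to make it a complete proof you would still need to carry out the interior bookkeeping you left as ``should produce''; the paper's induction trades that insight for a shorter, mechanical verification.
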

\begin{proof}
    The proof is by induction. The $n = 4$ case produces the formula
    \begin{align*}
        \left(\langle31\rangle[12]+\langle 34\rangle[42]\right)\frac{\langle1^32^33^2|H\rangle}{\langle34\rangle...\langle n1\rangle}+(\langle31\rangle[23]+&\langle 41\rangle[24])\frac{\langle1^22^33^3|H\rangle}{\langle34\rangle...\langle n1\rangle}\\
    &+(\langle31\rangle[31]+\langle 41\rangle[41]+\langle34\rangle[34])\frac{\langle1^32^23^3|H\rangle}{\langle34\rangle...\langle n1\rangle}
    \end{align*}
    which matches the calculations of the 4-point form factor done in subsection \ref{subsec:form_factor}. 

    Next, assume the formula holds at $n$ points and call the formula I proposed for the $n$-point form factor $F_n$. Then the $n+1$-point formula has contributions coming from
    \begin{align*}
        \langle B_{\alpha\beta}D_\gamma^{\dot{\alpha}}B_{\delta\rho}D_{\sigma\dot{\alpha}}B_{\zeta\eta}H^{\alpha\beta\gamma\delta\rho\sigma\zeta\eta}|...J^{a_{n+1}}(z_{n+1})\rangle& = \frac{F_n\langle n1\rangle}{\langle n,n+1\rangle\langle n+1,1\rangle}\\
        \langle B_{\alpha\beta}D_\gamma^{\dot{\alpha}}B_{\delta\rho}D_{\sigma\dot{\alpha}}B_{\zeta\eta}H^{\alpha\beta\gamma\delta\rho\sigma\zeta\eta}|...J^{a_{i}}[1](z_{i})...J^{a_{n+1}}[1](z_{n+1})\rangle
        & = \frac{[i,n+1]\langle i,n+1\rangle\langle1^32^23^3|H\rangle}{\langle34\rangle...\langle n+1,1\rangle}\\
        \langle B_{\alpha\beta}D_\gamma^{\dot{\alpha}}B_{\delta\rho}D_{\sigma\dot{\alpha}}B_{\zeta\eta}H^{\alpha\beta\gamma\delta\rho\sigma\zeta\eta}|\tilde{J}^{a_{1}}[1](z_{1})...J^{a_{n+1}}[1](z_{n+1})\rangle&= \frac{[n+1,1]\langle1^32^23^3|H\rangle}{\langle34\rangle...\langle n,n+1\rangle}\\
        \langle B_{\alpha\beta}D_\gamma^{\dot{\alpha}}B_{\delta\rho}D_{\sigma\dot{\alpha}}B_{\zeta\eta}H^{\alpha\beta\gamma\delta\rho\sigma\zeta\eta}|\tilde{J}^{a_{1}}(z_{1})\tilde{J}^{a_{2}}[1](z_{2})...J^{a_{n+1}}[1](z_{n+1})\rangle&=\frac{[2,n+1]\langle1^22^33^3|H\rangle}{\langle34\rangle...\langle n,n+1\rangle}\\
        &\quad +\frac{[n+1,2]\langle3,n+1\rangle\langle1^32^33^2|H\rangle}{\langle34\rangle...\langle n,n+1\rangle\langle n+1,1\rangle}\\
        \langle B_{\alpha\beta}D_\gamma^{\dot{\alpha}}B_{\delta\rho}D_{\sigma\dot{\alpha}}B_{\zeta\eta}H^{\alpha\beta\gamma\delta\rho\sigma\zeta\eta}|\tilde{J}^{a_{1}}(z_{1})\tilde{J}^{a_{2}}(z_{2})\tilde{J}^{a_{3}}[1](z_{3})...J^{a_{n+1}}[1](z_{n+1})\rangle&=\frac{[3,n+1]\langle3,n+1\rangle\langle1^32^23^3|H\rangle}{\langle34\rangle...\langle n,n+1\rangle\langle n+1,1\rangle}.
    \end{align*}
    Summing up the above gives
    \begin{align*}
        \Biggl(\langle31\rangle[12]&+\langle 3,n+1\rangle[n+1,2]+\sum_{i=4}^n\langle 3i\rangle[i2]\Biggr)\frac{\langle1^32^33^2|H\rangle}{\langle34\rangle...\langle n+1,1\rangle}\\
        &+\left(\langle31\rangle[23]+\langle n+1,1\rangle[2,n+1]+\sum_{i=4}^n\langle i1\rangle[2i]\right)\frac{\langle1^22^33^3|H\rangle}{\langle34\rangle...\langle n+1,1\rangle}\\
    &+\Biggl(\langle31\rangle[31]+\langle n+1,1\rangle[n+1,1]+\langle 3,n+1\rangle[3,n+1]\\
    & +\sum_{i=4}^n\left(\langle i1\rangle[i1]+\langle3i\rangle[3i]+\langle i,n+1\rangle[i,n+1]+\sum_{j>i}^n\langle ij\rangle[ij]\right)\Biggl)\frac{\langle1^32^23^3|H\rangle}{\langle34\rangle...\langle n+1,1\rangle}
    \end{align*}
    which equals $F_{n+1}$ by rewriting it as a sum up to $n+1$. This completes the proof.
\end{proof}
I conclude this subsection on form factors with another check of integrability in
\begin{proposition}\label{prop:genus3_integrability}
    The $n$-point, $3$-minus form factor in the theory vanishes on the support of momentum conservation.
\end{proposition}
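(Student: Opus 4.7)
The plan is to use the closed-form expression from Proposition \ref{prop:genus3_npt_formfactor}, which decomposes the form factor as a sum of three pieces, each proportional to one of the three distinct contractions $\langle 1^32^33^2|H\rangle$, $\langle 1^22^33^3|H\rangle$, $\langle 1^32^23^3|H\rangle$ of the symmetric tensor $H^{\alpha\beta\gamma\delta\rho\sigma\zeta\eta}$. I would prove the stronger statement that each of the three scalar coefficients vanishes separately on the support of momentum conservation $\sum_{i=1}^n \lambda_i^\alpha \tilde{\lambda}_i^{\dot{\alpha}}=0$.

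For the coefficient of $\langle 1^32^33^2|H\rangle$, namely $\langle 31\rangle[12]+\sum_{i=4}^n\langle 3i\rangle[i2]$, I would contract momentum conservation against $\lambda_{3\alpha}\tilde{\lambda}_{2\dot{\alpha}}$ to get $\sum_{i=1}^n\langle 3i\rangle[i2]=0$, and then drop the $i=2,3$ terms using $[22]=\langle 33\rangle=0$. The coefficient of $\langle 1^22^33^3|H\rangle$, namely $\langle 31\rangle[23]+\sum_{i=4}^n\langle i1\rangle[2i]$, is dispatched by the entirely analogous contraction against $\lambda_{1\alpha}\tilde{\lambda}_{2\dot{\alpha}}$.

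The third coefficient is slightly more involved. Using $\langle ab\rangle[ab]=\langle ba\rangle[ba]$, I would rewrite it as
\begin{equation*}
\sum_{\substack{1\leq i<j\leq n\\ 2\notin\{i,j\}}}\langle ij\rangle[ij],
\end{equation*}
i.e.\ the sum over all unordered pairs drawn from $\{1,3,4,\dots,n\}$. The full sum $\sum_{i<j}\langle ij\rangle[ij]$ vanishes because squaring momentum conservation gives $0=\bigl(\sum_i p_i\bigr)^2=2\sum_{i<j}\langle ij\rangle[ij]$ by the massless condition $p_i^2=0$. The complementary contribution from pairs involving the index $2$ is $\sum_{j\neq 2}\langle 2j\rangle[2j]$, which itself vanishes upon contracting momentum conservation against $\lambda_{2\alpha}\tilde{\lambda}_{2\dot{\alpha}}$. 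Subtracting yields the desired vanishing.

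The main obstacle is essentially bookkeeping: verifying that the collection $\{(1,3)\}\cup\{(1,i):4\leq i\leq n\}\cup\{(3,i):4\leq i\leq n\}\cup\{(i,j):4\leq i<j\leq n\}$ is precisely the set of unordered pairs from $\{1,\dots,n\}$ that avoid the label $2$. Once this enumeration is in hand, the proof reduces to two applications of momentum conservation: a linear one for the first two coefficients and a quadratic one for the third.
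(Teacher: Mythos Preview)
Your proposal is correct and follows essentially the same approach as the paper: both reduce to showing the three scalar coefficients vanish separately, handling the first two by a single linear contraction of momentum conservation and the third by a quadratic consequence of $\bigl(\sum_i p_i\bigr)^2=0$. Your treatment of the third coefficient, recognizing it as the sum over all unordered pairs avoiding the label $2$ and then subtracting the complementary piece, is a slightly cleaner reorganization of the paper's computation (which instead passes through $-(p_1+p_2+p_3)^2/2$), but the underlying argument is the same.
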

\begin{proof}
    It suffices to check this for the colour-ordered part. Momentum conservation gives us the following relation on the kinematic variables
    \begin{align*}
        \sum_{i=1}^n| i\rangle[i| = 0.
    \end{align*}
    We need to show that the coefficients of each of the 3 different $H$ contractions vanish. These coefficients are
    \begin{align*}
        \langle31\rangle[12]&+\sum_{i=4}^n\langle 3i\rangle[i2]\\ \langle31\rangle[23]&+\sum_{i=4}^n\langle i1\rangle[2i]\\\langle31\rangle[31]&+\sum_{i=4}^n\left(\langle i1\rangle[i1]+\langle3i\rangle[3i]+\sum_{j>i}^n\langle ij\rangle[ij]\right).
    \end{align*}
    The first expression vanishes on shell since
    \begin{align*}
        0 = \langle3|\left(\sum_{i=1}^n| i\rangle[i|\right)|2] = \sum_{i\ne2,3}^n\langle3 i\rangle[i2].
    \end{align*}
    Similarly, the second expression vanishes by contracting the sum of momenta with $|1\rangle$ and $[2|$. Finally, for the third expression, contracting the sum of momenta with $|1\rangle$ and $|1]$ gives
    \begin{align*}
        \sum_{i=4}^n\langle i1\rangle[i1] = -\langle 21\rangle[21]-\langle 31\rangle[31]
    \end{align*}
    while contracting with $\langle3|$ and $[3|$ gives
    \begin{align*}
        \sum_{i=4}^n\langle 3i\rangle[3i] = -\langle31\rangle[31]-\langle32\rangle[32]
    \end{align*}
    So, the first three terms add up in the following manner
    \begin{align*}
        \langle31\rangle[31]+\sum_{i = 4}^n\left(\langle i1\rangle[i1]+\langle3i\rangle[3i]\right) = -\langle 21\rangle[21]-\langle31\rangle[31]-\langle32\rangle[32] = -(p_1+p_2+p_3)^2/2.
    \end{align*}
    Momentum conservation then gives us
    \begin{align*}
        (p_1+p_2+p_3)^2/2 = \frac{1}{2}\left(\sum_{i = 4}^np_i\right)^2 = \sum_{i = 4}^n\sum_{j>i}^n\langle ij\rangle[ij].
    \end{align*}
    So, the last coefficient also vanishes.
\end{proof}

\subsection{Finding the full theory}\label{subsec:full_theory}

So far, I have built an action whose chiral algebra lives on the hyperelliptic curve defined by the polynomial $H(z)$ and exhibits no 3-minus, $n$-plus scattering. I will now bootstrap the form of all higher-order interactions. This provides a rare example of a higher-genus integrability whose higher-order coefficients can be determined by imposing vanishing tree-level scattering.

\subsection*{Fixing all terms linear in $H$}

The section on form factors showed that there cannot exist any $3$-minus vertices with linear dependence on $H$ without spoiling integrability. I will now extend this result by fixing \textit{any} linear dependence on $H$ by using 4d symmetry and dimensional arguments.

A useful property of the SDYM lagrangian is that it has dimension zero. In other words, it is a 4d CFT. To constrain the 4d action further, I leverage this fact in
\begin{lemma}\label{lem:lin_H}
    The only integrable, gauge-invariant, dimensionless vertex that is compatible with the symmetries and linear in $H$ is
    \begin{align*}
        \Tr(B_{\alpha\beta}D_\gamma^{\dot{\alpha}}B_{\delta\rho}D_{\sigma\dot{\alpha}}B_{\zeta\eta}H^{\alpha\beta\gamma\delta\rho\sigma\zeta\eta}).
    \end{align*}
\end{lemma}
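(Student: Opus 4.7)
The plan is to narrow the space of candidate operators by a sequence of kinematic and symmetry constraints, then to invoke the integrability check of Proposition~\ref{prop:genus3_integrability} to eliminate the last surviving alternatives. First I would enumerate candidates on kinematic grounds. Since $H^{\alpha\beta\gamma\delta\rho\sigma\zeta\eta}$ is totally symmetric in eight undotted indices and carries no dotted ones, any vertex linear in $H$ must supply exactly eight undotted indices to contract with $H$, and every dotted index in the vertex must pair internally via $\epsilon_{\dot{\alpha}\dot{\beta}}$. Gauge invariance forces the connection to appear only through the covariant derivative $D = d + [A,\cdot]$, so the admissible building blocks are $B_{\alpha\beta}$ (symmetric, two undotted indices) and $D_{\alpha\dot{\alpha}}$ (one undotted plus one dotted index). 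The index count gives $2n_B + n_D = 8$ with $n_D$ even, leaving only $(n_B, n_D) \in \{(4,0),(3,2),(2,4),(1,6),(0,8)\}$; marginality of the coupling further restricts one to operators of the same mass dimension as the canonical vertex $BDBDBH$.

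Next I would dispose of the extreme cases on physical grounds. A $\Tr(B^4 H)$ contact term generates a tree-level $----$ amplitude already at four points, forbidden by integrability and the basic selection rules of SDYM-type theories. For $n_B \le 2$ and $n_D \ge 4$, every such vertex contains at least one pair of covariant derivatives contracted in their dotted indices. Using the commutator identity $[D_{\alpha\dot{\alpha}}, D_{\beta\dot{\beta}}] = \epsilon_{\alpha\beta}\widetilde{F}_{\dot{\alpha}\dot{\beta}} + \epsilon_{\dot{\alpha}\dot{\beta}} F_{\alpha\beta}$ together with integration by parts and the leading-order SDYM equation of motion $F_{\alpha\beta}=0$, these vertices either reduce to total derivatives, become proportional to the free equations of motion (removable by field redefinitions), or collapse to higher-order $H^n$ interactions already accommodated by Theorem~\ref{thm:4d_action}.

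Uniqueness within the $(n_B, n_D) = (3,2)$ class then follows from the full symmetry of $H$ in its undotted indices. After applying cyclicity of the trace, integration by parts (redistributing derivatives among the $B$'s modulo boundary terms), and the Schouten identity, every admissible index assignment collapses to the displayed vertex $\Tr(B_{\alpha\beta}D_\gamma^{\dot{\alpha}}B_{\delta\rho}D_{\sigma\dot{\alpha}}B_{\zeta\eta}H^{\alpha\beta\gamma\delta\rho\sigma\zeta\eta})$ up to on-shell trivial corrections absorbable into the higher-order $H^n$ terms. The overall coefficient is fixed by matching to the leading Penrose transform of the twistor interaction $\Tr(\tilde{a}^3 H)$ in equation~\eqref{eq:hCS_twistor_action}.

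The principal obstacle is the second step: a careful case analysis of the $n_D \ge 4$ candidates showing that each such vertex really does reduce to a removable term or a higher-order interaction, rather than a genuinely new marginal deformation. The bookkeeping is combinatorial once the commutator identity and the equations of motion are invoked, but it requires some organization to verify that no essentially new contraction escapes the enumeration.
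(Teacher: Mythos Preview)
Your approach is essentially correct and in fact more systematic than the paper's own argument, though the two differ in emphasis. The paper argues directly from dimensional analysis: since $|B|=2$, $|D|=1$, and the Lagrangian must have dimension $4$, the tensor $H$ carries dimension $-4$ and the accompanying operator must have dimension $8$ with exactly $8$ free undotted indices. The paper then simply asserts that the only such gauge-invariant combinations are $B^3D^2$ and a five-$B$ term, eliminates the latter by integrability (it produces an uncancellable $5$-minus contact term), and dismisses anything containing $F_{\dot\alpha\dot\beta}$ because the dimension would exceed $8$. That is the whole proof.

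Your enumeration $(n_B,n_D)\in\{(4,0),(3,2),(2,4),(1,6),(0,8)\}$ is the more careful one: you correctly identify $B^4H$ (not $B^5$) as the purely algebraic candidate with eight indices and dimension $8$, and you kill it the same way, by noting it would generate a tree-level $----$ amplitude that nothing else at order $H$ can cancel. Where you diverge from the paper is in the handling of the $n_D\ge 4$ cases. The paper does not mention these at all; you instead propose to reduce them via the commutator identity, integration by parts, and the on-shell condition $F_{\alpha\beta}=0$. That strategy is sound, and your honest flagging of it as the ``principal obstacle'' is appropriate: the argument works, but it is genuinely more bookkeeping than the paper's one-line dimensional dismissal of the $F_{\dot\alpha\dot\beta}$ terms. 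In short, your route is longer but cleaner in its enumeration; the paper's is faster but relies on an implicit exclusion of the higher-derivative cases that you make explicit.
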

\begin{proof}
    Gauge invariance is manifest in the fact that the derivatives are covariant. Integrability was proven in subsection \ref{subsec:form_factor} when the vertex was added to the full theory with the correct coefficient.

    The fields of our theory have dimensions $|B| = 2$ and $|A| = 1$ so that the $BF$ term has dimension 4 to cancel against the dimension of the $\R^4$ integration measure. Since $|BDBDB|=8$, we see that $H$ must have dimension $-4$. So, the only operators that can accompany $H$ have dimension $8$. Let's organize all such operators. On shell, $F_{\alpha\beta} = 0$, so the only gauge invariant quantities we can use are $B_{\alpha\beta}$, $D_\alpha^{\dot{\alpha}}$, and $F_{\dot{\alpha}\dot{\beta}}$. Moreover, the only operators that can pair with $H$ must have $8$ free undotted indices to pair with the indices of $H$. The only possibilities are thus
    \begin{align*}
        B_{\alpha\beta}D_\gamma^{\dot{\alpha}}B_{\delta\rho}D_{\sigma\dot{\alpha}}B_{\zeta\eta} \quad\text{and}\quad B_{\alpha\beta}B_{\gamma\kappa}B_{\delta\rho}B_{\sigma\nu}B_{\zeta\eta}.
    \end{align*}
    The first vertex is the one we have included in the theory, and the second can be eliminated by noticing that it violates integrability. It introduces a 5 minus contact term with a factor of $H$ that cannot be cancelled against any Feynman diagrams built from the other vertices of the theory. Finally, notice that any vertex containing a factor of $F_{\dot{\alpha}\dot{\beta}}$ and $8$ free undotted indices will have dimension $>8$.
\end{proof}

The careful reader might wonder whether terms can be added to the action that do not depend on $H$. This is impossible:
\begin{lemma}\label{lem:BF_only}
    The only term that does not contain a factor of $H$ is the SDYM term.
\end{lemma}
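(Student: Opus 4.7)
The strategy is to exploit the fact that the 4d Lagrangian is obtained by a Penrose transform of the twistor action \ref{eq:hCS_twistor_action}, in which the tensor $H$ enters only through the single cubic vertex $\tilde{a}^{3}H$. Any 4d vertex with no factor of $H$ must therefore come from Feynman diagrams on $\PT$ that never use this vertex, and my plan is to show that what remains is exactly holomorphic BF theory on twistor space.

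Concretely, I would first set $H = 0$ in equation \ref{eq:hCS_twistor_action}, leaving
\begin{align*}
    2\int_{\PT}\Omega\,\Tr\bigl(\tilde{a}\,F^{0,2}(a)\bigr).
\end{align*}
After the identification $b=2\tilde{a}$ (and noting that $\Omega$ is a section of the correct line bundle so that the pairing against $\tilde{a}\in \Omega^{0,1}(\PT,\g\otimes \mathcal{O}(-4))$ takes values in $K_{\PT}$), this is precisely the holomorphic BF action on twistor space recalled in subsection \ref{subsec:sdym_review}.

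Next, I would invoke the classical Penrose--Ward transform, as stated in subsection \ref{subsec:sdym_review}, which identifies holomorphic BF on $\PT$ with SDYM on $\R^{4}$. This transform yields the action $\int_{\R^{4}}\Tr(BF(A))$ and nothing more, so the $H = 0$ truncation of the twistor theory descends to exactly the SDYM term on $\R^{4}$.

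The only point that needs care is that the Penrose transform commutes with the restriction $H = 0$: one must rule out that some diagram involving the $\tilde{a}^{3}H$ vertex could generate an $H$-independent 4d term through cancellations. This cannot happen because $H$ is a fixed polynomial in the spectral parameter $z$ multiplying the vertex, and $z$-integration against the volume form $\Omega$ can at most produce moments of $H$ in the 4d Lagrangian; it cannot annihilate the $H$ dependence. Hence every 4d vertex inherited from a diagram that uses the $\tilde{a}^{3}H$ vertex carries at least one explicit factor of $H$, and the only $H$-independent 4d vertex is the SDYM term $\Tr(BF(A))$, which is the content of the lemma.
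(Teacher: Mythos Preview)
Your proof is correct but follows a genuinely different route from the paper's. The paper argues entirely on the spacetime side: any $H$-independent term must be a dimension-4, gauge-invariant Lorentz scalar built from $B_{\alpha\beta}$, $D_{\alpha\dot\alpha}$, and $F_{\dot\alpha\dot\beta}$; the only candidates beyond $\Tr(BF)$ are $\Tr(B_{\alpha\beta}B^{\alpha\beta})$ and $\Tr(F_{\dot\alpha\dot\beta}F^{\dot\alpha\dot\beta})$, and these are excluded because the former spoils integrability (it deforms SDYM to full Yang--Mills) while the latter is topological. You instead work on the twistor side, setting $H=0$ in \eqref{eq:hCS_twistor_action}, recognising the result as holomorphic BF, and invoking the Penrose--Ward correspondence quoted in subsection \ref{subsec:sdym_review}. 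Your argument is more structural and is in the same spirit as the paper's own proof of Lemma \ref{lem:full_theory}, which likewise counts twistor-space vertices; it sidesteps the need to enumerate candidate 4d operators. The paper's argument, by contrast, is self-contained (it does not treat the Penrose--Ward result as a black box) and explicitly identifies which spacetime operators are being excluded and why. Your closing paragraph about cancellations could be tightened: the cleanest justification is that the components of the tensor $H$ enter the twistor action as coupling constants, so the 4d Lagrangian is a formal power series in them and its constant term is obtained by setting $H=0$ before compactifying.
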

\begin{proof}
    As discussed in the proof of the previous proposition, any term in the action must have dimension $4$. The only such gauge invariant quantities whose indices contract (hence being consistent with the symmetries of the theory) are
    \begin{align*}
        B_{\alpha\beta}B^{\alpha\beta}\quad \text{and}\quad F_{\dot{\alpha}\dot{\beta}}F^{\dot{\alpha}\dot{\beta}}.
    \end{align*}
    The first term is known to deform SDYM to full Yang-Mills, which is not integrable. The second term is topological and hence does not appear in perturbation theory.
\end{proof}

\subsection*{Constraining higher order vertices}

So far, I have only used spacetime arguments to constrain the vertices of the theory. It turns out that the form of all terms in the Lagrangian can be fixed if we appeal to twistor arguments! First, recall that the theory started life as the following twistor action
\begin{align*}
    \int_\PT \Omega\wedge\Tr \left(b\dbar a+\frac{1}{3}(ba^2+b^3H(z))\right)
\end{align*}
where $a\in\Omega^{0,1}(\PT,\mathcal{O})$ and $b\in\Omega^{0,1}(\PT,\mathcal{O}(-4))$. On spacetime, as has been discussed already, $a$ descends to a positive helicity gluon and $b$ to a negative helicity gluon. Building off of the work done in lemmas \ref{lem:lin_H} and \ref{lem:BF_only}, the only terms that must be constrained include powers of $H^n$ with $n>1$. 
\begin{figure}
    \centering
    \includegraphics[width=0.65\linewidth]{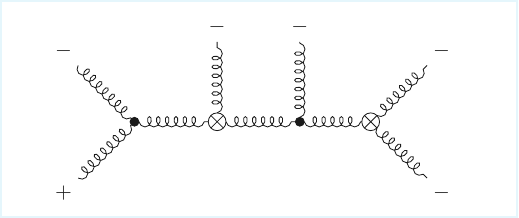}
    \caption{An example of a twistor space Feynman diagram with two powers of $H(z)$. I denote the $b^3H(z)$ by $\otimes$ and the $baa$ vertex by $\bullet$ in analogy with my 4d conventions.}
    \label{fig:twistor_H2_diagram}
\end{figure}

Before explaining the general case, let me explain the pedagogical example of determining the vertex that accompanies $H^2$. The only diagrams with two factors of $H$ that can be built on twistor space are ones with $5$ minus and an arbitrary number of positive helicity states. Examples of this are depicted in Fig. \ref{fig:twistor_H2_diagram}. The only such (gauge invariant) operators with dimension $12$ (which is needed as $|H^2|=-8$)  are
\begin{align*}
        B_{\omega\psi}B_{\phi\chi}B_{\alpha\beta}D_\gamma^{\dot{\alpha}}B_{\delta\rho}D_{\sigma\dot{\alpha}}B_{\zeta\eta} \quad\text{and}\quad B_{\omega\psi}B_{\phi\chi}B_{\alpha\beta}B_{\delta\rho}B_{\zeta\eta} F_{\dot{\alpha}\dot{\beta}}.
    \end{align*}
    But the second vertex above cannot be included, as there are free undotted indices that cannot contract against anything! Therefore, the only possible vertex of order $H^2$ is the first choice above, where two of the undotted indices on each factor of $H$ must contract into each other so that 12 free indices can contract with the operator. 

More generally, all possible vertices in the theory are described in
\begin{lemma}\label{lem:full_theory}
    The full theory is of the form
    \begin{align*}
        \int_{\R^4}\Tr(BF+BDBDBH+\text{terms of form $B^{2n-1}DBDBH^n$}).
    \end{align*}
    Where $n>1$, the form in which the indices contract on the higher order terms is unknown, and their coefficients are unknown.
\end{lemma}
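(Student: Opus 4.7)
The plan is to extend the twistor-diagram and dimensional arguments used for the $H^2$ discussion and in Lemma \ref{lem:lin_H} to all orders, in three logical steps: first pin down the number of $B$-fields at order $H^n$ via twistor Feynman rules, then fix the number of derivatives by dimensional analysis, and finally exclude operators built from $F_{\dot{\alpha}\dot{\beta}}$.

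For the counting step, I would exploit the fact that the twistor action \ref{eq:hCS_twistor_action} has only two interaction vertices: the cubic $\tilde{a}\,a^2$ (from $\tilde{a}F(a)$) and the cubic $\tilde{a}^3 H(z)$. Terms of order $H^n$ in the 4d Lagrangian descend from connected twistor tree subdiagrams containing exactly $n$ insertions of $\tilde{a}^3 H$ together with some number $m$ of $\tilde{a}a^2$ vertices. Writing $V = n+m$ for the total number of vertices and $I = V-1$ for the internal $\tilde{a}$--$a$ propagators, the external-leg tally is
\begin{align*}
E_{\tilde{a}} = 3n + m - I = 2n+1, \qquad E_{a} = 2m - I = m-n+1,
\end{align*}
so $m \geq n-1$ is forced and every order-$H^n$ vertex in the pushforward theory carries exactly $2n+1$ factors of $B$, dressed by $m-n+1$ factors of $A$ that the 4d theory organizes into the covariant derivatives $D = d + [A,\cdot]$.

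For the dimension step, I would set $|B|=2$, $|A|=1$, $|D|=1$, $|F_{\dot{\alpha}\dot{\beta}}|=2$ and $|H|=-4$ as fixed in Lemma \ref{lem:lin_H}. A candidate vertex with $2n+1$ factors of $B$, $d$ covariant derivatives, $f$ factors of $F_{\dot{\alpha}\dot{\beta}}$ and $n$ factors of $H$ has total dimension $2(2n+1) + d + 2f - 4n = d + 2f + 2$, and demanding this equal $4$ yields the Diophantine condition $d + 2f = 2$, with solutions $(d,f) \in \{(2,0),(0,1)\}$. The solution $(0,1)$ is ruled out because a lone $F_{\dot{\alpha}\dot{\beta}}$ carries two dotted indices that cannot be saturated: both $B_{\alpha\beta}$ and $H^{\alpha_1\cdots\alpha_{4n}}$ are purely undotted, while the symmetric identity $F^{\dot{\alpha}}{}_{\dot{\alpha}}\equiv 0$ forbids self-contraction. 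This leaves only $(d,f)=(2,0)$, and the two dotted indices on the pair of $D$'s must contract with each other, giving exactly the schematic form $B^{2n-1}\,DB\,DB\,H^n$ asserted by the lemma. The $4n+4$ undotted indices carried by $B^{2n-1}DBDB$ are distributed among the $4n$ slots of $H^n$ with the remaining four contracted pairwise among the $B$'s, which is precisely the residual freedom acknowledged as unknown in the statement.

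The main obstacle is justifying the subdiagram identification in step one with sufficient care: one must verify that every order-$H^n$ contribution to the 4d Lagrangian really does arise from twistor tree subgraphs containing exactly $n$ insertions of $\tilde{a}^3 H$, and that inequivalent tree topologies sharing the same $(E_{\tilde{a}},E_{a})$ profile only produce different index contractions within the single schematic class $B^{2n-1}DBDBH^n$, rather than genuinely new operator classes. Given this, steps two and three are essentially automatic, and combined with Lemmas \ref{lem:lin_H} and \ref{lem:BF_only} they exhaust the possible vertices of the theory at every order.
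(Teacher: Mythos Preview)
Your approach is essentially the same as the paper's: count external $\tilde a$ legs on twistor trees to fix the number of $B$'s at order $H^n$, then use dimensional analysis and the impossibility of saturating dotted indices on $F_{\dot\alpha\dot\beta}$ to force exactly two covariant derivatives. Your Euler-characteristic leg count $E_{\tilde a}=2n+1$, $E_a=m-n+1$ is in fact more explicit than the paper's bare assertion that order-$H^n$ diagrams have $2n+1$ minus legs, and your Diophantine formulation $d+2f=2$ cleanly isolates the two candidate solutions before killing $(0,1)$.

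One slip: in your final paragraph you write that $H^n$ has $4n$ slots and that the leftover four undotted indices contract among the $B$'s. In the genus~3 setting each $H$ carries eight undotted indices, so $H^n$ has $8n$ slots while $B^{2n-1}DBDB$ supplies only $4n+4$; the surplus is on the $H$ side, and it is $4(n-1)$ of the $H$ indices that must contract pairwise among themselves (as the paper notes already at $n=2$). This is a bookkeeping error and does not affect your core argument, but it is worth correcting since it inverts which object carries the residual contraction freedom.
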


\begin{proof}
    By lemmas \ref{lem:lin_H} and \ref{lem:BF_only}, we need to find the spacetime vertex accompanying $H^n$ for $n>1$. The twistor action had vertices of the form $ba^2$ and $b^3H(z)$. The only diagrams one can build on twistor space from these that contain a power of $H^n$ are $2n+1$ minus, $m$ plus diagrams for arbitrary $m$. Therefore, the vertex accompanying $H^n$ must contain $2n+1$ negative helicity gluons. The only spacetime operator with these properties and the correct dimension is
    \begin{align*}
        B^{2(n-1)}BDBDB
    \end{align*}
    which has $4n+4$ free undotted indices. Notice that on twistor space, we can always draw diagrams of order $H^n$ with $2n+1$ negative helicity gluons and arbitrarily many positive helicity gluons but none of these diagrams can correspond to a spacetime operator as introducing powers of $F_{\dot{\alpha}\dot{\beta}}$ would increase the dimension. If one were to replace the two derivatives by a factor of $F_{\dot{\alpha}\dot{\beta}}$ so that the operator was of the form $B^{2(n-1)}BBBF_{\dot{\alpha}\dot{\beta}}$, then this would have the correct dimension but the dotted indices could never contract meaning that the vertex violates the symmetries of the theory.
\end{proof}

\begin{remark}
At this point, I want to point out the interesting fact that we already have enough to write down the exact Lagrangian in the degenerate case $H(z) = z^8$. One can visualize this geometry as two Riemann spheres joined at the origin, where there is an order-4 singularity. In this case, the tensor defining the polynomial, $H(z)$, factorizes as
\begin{align*}
    H^{\alpha_1,...,\alpha_8} = \lambda_0^{\alpha_1}...\lambda_0^{\alpha_8}\quad\text{where}\quad \lambda_0 = \begin{pmatrix}
        1\\0
    \end{pmatrix}.
\end{align*}
This means that if $H$ contracts with itself on any index, then the term vanishes. It is quite easy to see by counting indices that there is at least one contraction between factors of $H$ in any vertex of the form $BDBDBH(HB^2)^n$ for $n>0$. In other words, the integrable model truncates at 
\begin{align*}
    \int_{\R^4}\Tr(BF+BDBDBH).
\end{align*}
This is a new integrable model whose spectral parameter lives on the highly degenerate curve described above.

One can take another perspective on this example by starting on the double cover $\PT_\Sigma$ and interpreting this degeneration as transforming this genus 3 fibration into two copies of twistor space joined at the origin. One then considers two copies of holomorphic Chern-Simons theory on $\PT\sqcup \PT$ with appropriate boundary conditions on the gauge fields at the origin. One should then be able to recover the model above using standard gauge fixing methods on twistor space with this setup. 

Another interesting setup that was studied in \cite{Cole_2024} is when the curve degenerates into one of the form $H(z) = z^4(z-z_0)^4$, which corresponds to merging 4 branch points into one. This geometry looks like two Riemann spheres glued at the points $0,z_0$ with second-order singularities at each point. In this case, I unfortunately do not know whether the theory truncates to some finite order in my description, like in the $H(z) = z^8$ case. The authors of \cite{Cole_2024} took a different approach to compactifying holomorphic Chern-Simons theory on this degenerate curve to find a 4d gauged WZW model. It would be interesting to relate their description to the deformation of SDYM I have found in this paper when the tensor $H$ degenerates.
\end{remark}

The main result of the section, theorem \ref{thm:4d_action}, is then a simple consequence of lemma \ref{lem:full_theory}. 
\begin{proof}[Proof of theorem \ref{thm:4d_action}]
    Lemma \ref{lem:full_theory} guarantees that the theory is of the form
    \begin{align*}
        \int_{\R^4}\Tr(BF+BDBDBH+\text{terms of form $B^{2n-1}DBDBH^n$}).
    \end{align*}
    Any vertex of the form $B^{3}DBDBH^2$ introduces $5$-minus contact terms (since there are possibly multiple distinct ways the indices could contract with the factor of $H^2$). Integrability demands that all tree-level amplitudes vanish and the only way this can happen is if the $5$-minus diagrams built from the $\Tr(BF+BDBDBH)$ part of the action exactly cancels the contact terms. Therefore, setting the $5$-minus amplitude to zero fixes the coefficients of the order $H^2$ vertices. At order $n>2$, we do the same thing by computing the $n$-minus amplitude with arbitrary coefficients for the order $H^n$ vertices and then choose the coefficients that make the amplitude vanish.
\end{proof}

\section{Bootstrapping vertices part II: The genus 1 deformation}\label{sec:genus1_bootstrap}

Recall the genus 1 deformation of SDYM coupled to a complex scalar
\begin{align*}
\int_{\mathbb{R}^4}\Tr(BF(A)+\tilde{\Phi}D*D\Phi+B_{\alpha\beta}D_{\dot{\alpha}\delta}\Phi D^{\dot{\alpha}}_\gamma\Phi H^{\alpha\beta\delta\gamma})+\text{higher order terms} 
\end{align*}
derived in section \ref{sec:model_build}. In this section, I will provide a similar analysis to the previous section and prove 

\begin{theorem}\label{thm:genus1_full_theory}
    The 4d theory descending from the genus 1 deformed twistor theory in equation \ref{eq:genus1_twistor_action} is determined to all orders as
    \begin{align*}
        \int_{\mathbb{R}^4}\Tr(BF(A)+\tilde{\Phi}D*D\Phi+BD\Phi D\Phi H+\text{terms of form $B\Phi^{2n-2}D\Phi D\Phi H^n$}).
    \end{align*}
    Where the form of the index contractions and coefficients of the higher order terms are fixed by integrability.
\end{theorem}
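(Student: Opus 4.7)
The proof plan would follow the template of Theorem \ref{thm:4d_action} for the genus 3 case, adapted to account for the scalar fields $\Phi, \tilde\Phi$ present in the genus 1 deformation. I would establish three structural lemmas parallel to Lemmas \ref{lem:lin_H}, \ref{lem:BF_only}, and \ref{lem:full_theory}, then close the argument by the same recursive integrability argument used in the genus 3 case: at each order in $H$, the $(2n+1)$-point tree amplitude with one external $B$ and $2n$ external $\Phi$'s must vanish on the support of momentum conservation, and this produces a linear system with a unique solution for the coefficients and index contractions at that order.

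The first step is a dimensional argument for vertices linear in $H$. Since $\deg H = 4$, the tensor $H^{\alpha\beta\gamma\delta}$ has dimension $-2$, so any companion operator must have dimension $6$ and carry four free undotted indices. The on-shell relation $F_{\alpha\beta}=0$ reduces the gauge-invariant building blocks to $B_{\alpha\beta}$, $\Phi$, $\tilde\Phi$, $D_{\alpha\dot\alpha}$, and $F_{\dot\alpha\dot\beta}$. An enumeration, combined with an integrability check along the lines of Subsection \ref{subsec:form_factor}, would leave $B_{\alpha\beta} D_{\dot\alpha\gamma}\Phi D^{\dot\alpha}_\delta \Phi H^{\alpha\beta\gamma\delta}$ as the unique linear-in-$H$ vertex. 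A parallel enumeration of $H$-independent dimension-four operators should leave only $\Tr(BF(A))$ and $\Tr(\tilde\Phi D \ast D\Phi)$, since $\Tr(B^2)$ deforms to non-integrable Yang-Mills and alternatives such as $\Tr(\tilde\Phi\Phi)$ are mass terms incompatible with the twistor origin.

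The second step is a twistor-space leg-counting argument that bootstraps the form of the $H^n$ vertex. The twistor interaction vertices are $ba^2$ (used $q$ times), $a\tilde a \tilde b$ (used $p$ times), and $b\tilde a^2 H$ (used $n$ times), with propagators pairing $a$-$b$ and $\tilde a$-$\tilde b$. The Euler tree relation $n + p + q = i + j + 1$ combined with the leg counts yields the identity (external $b$) $+$ (external $\tilde b$) $= 1$, so every tree diagram at order $H^n$ splits into two branches. In the branch with one external $b$ and zero external $\tilde b$ the leg counts force $2n$ external $\tilde a$'s together with a variable number of external $a$'s, and these $a$-legs get absorbed into the covariant derivatives $D = d + [A,\cdot]$, producing the single covariant operator $B\Phi^{2n-2}D\Phi D\Phi H^n$. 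The other branch would produce covariantized operators of the form $\tilde\Phi \Phi^{2n-1} D\Phi D\Phi H^n$; excluding these is the delicate part of the argument.

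Ruling out the $\tilde\Phi$-branch vertices is the main obstacle. I would attempt this in one of two ways: either show that any such vertex is removable by a field redefinition of the form $\tilde\Phi \mapsto \tilde\Phi + X[\Phi,H]$ that leaves the rest of the Lagrangian unchanged up to terms already accounted for by the $B$-branch, or exhibit a tree amplitude with one external $\tilde\Phi$ and $2n+1$ external $\Phi$'s (analogous to the $3$-minus-$n$-plus amplitudes of Subsection \ref{subsec:form_factor}) whose integrability constraint forces the coefficient of the candidate vertex to vanish identically. Once this exclusion is in hand, the coefficients and index structure of the surviving vertices are fixed inductively exactly as in the proof of Theorem \ref{thm:4d_action}, completing the bootstrap.
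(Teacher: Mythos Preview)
Your proposal follows the paper's route closely: a dimensional enumeration of candidate vertices (the paper's Lemma \ref{lem:genus1_dim_vertices}), twistor-space tree-diagram counting to prune the list, and the same recursive integrability argument to fix coefficients order by order in $H$.

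The one place you diverge is in how the $\tilde\Phi$-branch is excluded. You flag it as the main obstacle and propose a field-redefinition or amplitude-vanishing argument. The paper instead dispatches it at the spacetime enumeration stage: a dimension-$6$ operator built only from scalars and covariant derivatives that contracts all four undotted indices of $H$ must carry four derivatives, and since $H$ is totally symmetric this forces the combination $D_{(\alpha}^{\dot\alpha}D_{\beta)\dot\alpha}$, which reduces to $F_{\alpha\beta}=0$ on the self-dual shell. At order $H^n$ the paper then asserts directly from the twistor vertex content that the external legs are one $b$ and $2n$ $\tilde a$'s, with any alternative forcing extra external $a$'s and hence excess spacetime dimension. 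So your treatment of this branch is more cautious than the paper's, and your Euler-type leg count $(\text{external }b)+(\text{external }\tilde b)=1$ makes explicit something the paper leaves implicit, but it is the same strategy rather than a different one.
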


The analysis I did in section \ref{sec:bootstrap} for the genus 3 deformation was presented slowly and explicitly. Since bootstrapping the vertices in the genus 1 setting is so similar, this section will be more streamlined.

\subsection{Verifying basic integrability} 

In a similar spirit to the story told in section \ref{sec:bootstrap}, I will check that the form factor
\begin{align*}
    \langle B_{\alpha\beta}D_{\dot{\alpha}\delta}\Phi D^{\dot{\alpha}}_\gamma\Phi H^{\alpha\beta\delta\gamma}|\tilde{J}^{a_1}(\tilde{\lambda}_1;z_1)\mathcal{O}^{a_2}(\tilde{\lambda}_2;z_2)\mathcal{O}^{a_3}(\tilde{\lambda}_3;z_3)J^{a_4}(\tilde{\lambda}_4;z_4)...J^{a_n}(\tilde{\lambda}_n;z_n)\rangle
\end{align*}
vanishes on the support of momentum conservation. Note that as I did in the previous section, all form factor computations will only be concerned with colour-ordered terms.

\subsection*{The 4-point form factor}

Let me start with the base case, that is, the 4-point form factor. The first ingredient we need is the conformal block
\begin{align*}
    \langle B_{\alpha\beta}D_{\dot{\alpha}\delta}\Phi D^{\dot{\alpha}}_\gamma\Phi H^{\alpha\beta\delta\gamma}|\tilde{J}^{a_1}(\tilde{\lambda}_1;z_1)&\mathcal{O}^{a_2}(\tilde{\lambda}_2;z_2)\mathcal{O}^{a_3}(\tilde{\lambda}_3;z_3)\rangle\\
    &= [12]\langle1^32|H\rangle+[23]\langle1^223|H\rangle+[31]\langle1^33|H\rangle.
\end{align*}
We can then insert a positive helicity gluon and compute OPEs to find that
\begin{align*}
    \langle B_{\alpha\beta}D_{\dot{\alpha}\delta}\Phi D^{\dot{\alpha}}_\gamma\Phi H^{\alpha\beta\delta\gamma}|\tilde{J}^{a_1}[1](z_1)\mathcal{O}^{a_2}[1](z_2)\mathcal{O}^{a_3}(z_3)J^{a_4}(z_4)\rangle & = \frac{\langle31\rangle[12]\langle1^32|H\rangle}{\langle34\rangle\langle41\rangle}\\
    \langle B_{\alpha\beta}D_{\dot{\alpha}\delta}\Phi D^{\dot{\alpha}}_\gamma\Phi H^{\alpha\beta\delta\gamma}|\tilde{J}^{a_1}[1](z_1)\mathcal{O}^{a_2}(z_2)\mathcal{O}^{a_3}[1](z_3)J^{a_4}(z_4)\rangle & = \frac{\langle31\rangle[31]\langle1^33|H\rangle}{\langle34\rangle\langle41\rangle}\\
    \langle B_{\alpha\beta}D_{\dot{\alpha}\delta}\Phi D^{\dot{\alpha}}_\gamma\Phi H^{\alpha\beta\delta\gamma}|\tilde{J}^{a_1}(z_1)\mathcal{O}^{a_2}[1](z_2)\mathcal{O}^{a_3}[1](z_3)J^{a_4}(z_4)\rangle & = \frac{\langle31\rangle[23]\langle1^223|H\rangle}{\langle34\rangle\langle41\rangle}\\
    \langle B_{\alpha\beta}D_{\dot{\alpha}\delta}\Phi D^{\dot{\alpha}}_\gamma\Phi H^{\alpha\beta\delta\gamma}|\tilde{J}^{a_1}[1](z_1)\mathcal{O}^{a_2}(z_2)\mathcal{O}^{a_3}(z_3)J^{a_4}[1](z_4)\rangle & = \frac{[41]\langle1^33|H\rangle}{\langle34\rangle}\\
    \langle B_{\alpha\beta}D_{\dot{\alpha}\delta}\Phi D^{\dot{\alpha}}_\gamma\Phi H^{\alpha\beta\delta\gamma}|\tilde{J}^{a_1}(z_1)\mathcal{O}^{a_2}[1](z_2)\mathcal{O}^{a_3}(z_3)J^{a_4}[1](z_4)\rangle & = \frac{[24]\langle1^223|H\rangle}{\langle34\rangle}+\frac{[42]\langle1^32|H\rangle}{\langle41\rangle}\\
    \langle B_{\alpha\beta}D_{\dot{\alpha}\delta}\Phi D^{\dot{\alpha}}_\gamma\Phi H^{\alpha\beta\delta\gamma}|\tilde{J}^{a_1}(z_1)\mathcal{O}^{a_2}(z_2)\mathcal{O}^{a_3}[1](z_3)J^{a_4}[1](z_4)\rangle & = \frac{[34]\langle1^33|H\rangle}{\langle41\rangle}.
\end{align*}
So the full form factor is
\begin{align*}
    \langle B_{\alpha\beta}&D_{\dot{\alpha}\delta}\Phi D^{\dot{\alpha}}_\gamma\Phi H^{\alpha\beta\delta\gamma}|\tilde{J}^{a_1}(\tilde{\lambda}_1;z_1)\mathcal{O}^{a_2}(\tilde{\lambda}_2;z_2)\mathcal{O}^{a_3}(\tilde{\lambda}_3;z_3)J^{a_4}(\tilde{\lambda}_4;z_4)\rangle\\
    &= (\langle31\rangle[12]+\langle34\rangle[42])\frac{\langle1^32|H\rangle}{\langle34\rangle\langle41\rangle}+(\langle31\rangle[31]+\langle41\rangle[41]+\langle34\rangle[34])\frac{\langle1^33|H\rangle}{\langle34\rangle\langle41\rangle}\\
    &\quad +(\langle31\rangle[23]+\langle41\rangle[24])\frac{\langle1^223|H\rangle}{\langle34\rangle\langle41\rangle}.
\end{align*}
This vanishes on the support of momentum conservation. To see this, notice that
\begin{align*}
    \langle31\rangle[12]+\langle34\rangle[42] & = \langle3|\sum_ip_i|2] = 0\\
    \langle31\rangle[23]+\langle41\rangle[24] &=[2|\sum_ip_i|1\rangle=0 \\
    \langle31\rangle[31]+\langle41\rangle[41]+\langle34\rangle[34] & = \langle31\rangle[31]+\langle41\rangle[41]+\langle21\rangle[21] = -\langle1|\sum_ip_i|1] = 0
\end{align*}
using the fact that $2\langle34\rangle[34]=(p_1+p_2)^2 = 2\langle21\rangle[21]$ in the last line above.

\subsection*{The $n$-point form factor}

The formula for the colour-ordered $n$-point form factor is
\begin{align}\label{eq:genus1_npt_formfac}
    \langle B_{\alpha\beta}&D_{\dot{\alpha}\delta}\Phi D^{\dot{\alpha}}_\gamma\Phi H^{\alpha\beta\delta\gamma}|\tilde{J}^{a_1}(\tilde{\lambda}_1;z_1)\mathcal{O}^{a_2}(\tilde{\lambda}_2;z_2)\mathcal{O}^{a_3}(\tilde{\lambda}_3;z_3)J^{a_4}(\tilde{\lambda}_4;z_4)...J^{a_4}(\tilde{\lambda}_n;z_n)\rangle\\\nonumber
    & = \left(\langle31\rangle[12]+\sum_{i=4}^n\langle3i\rangle[i2]\right)\frac{\langle1^32|H\rangle}{\langle34\rangle...\langle n-1,n\rangle\langle n1\rangle}+\left(\langle31\rangle[23]+\sum_{i=4}^n\langle i1\rangle[2i]\right)\frac{\langle1^223|H\rangle}{\langle34\rangle...\langle n-1,n\rangle\langle n1\rangle}\\\nonumber
    &\quad+\left(\langle31\rangle[31]+\sum_{i=4}^n\left(\langle i1\rangle[i1]+\langle3i\rangle[3i]+\sum_{j>i}^n\langle ij\rangle[ij]\right)\right)\frac{\langle1^33|H\rangle}{\langle34\rangle...\langle n-1,n\rangle\langle n1\rangle}
\end{align}
which I verify with
\begin{proposition}
    The formula for the $n$-point colour-ordered $-\Phi\Phi+...+$ form factor is given by equation \ref{eq:genus1_npt_formfac}.
\end{proposition}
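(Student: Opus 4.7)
The plan is to mirror the proof of Proposition \ref{prop:genus3_npt_formfactor} in the genus 3 case, proceeding by induction on $n$. The base case $n=4$ has just been verified explicitly in the preceding subsection. For the inductive step, assume the formula $F_n$ given by equation \ref{eq:genus1_npt_formfac} holds at $n$ points and compute the $(n{+}1)$-point form factor by inserting one additional positive helicity gluon $J^{a_{n+1}}(\tilde{\lambda}_{n+1};z_{n+1})$ into the colour-ordered correlator, then performing successive chiral algebra OPEs using the relations established in subsection \ref{subsec:genus1_OPE} to reduce down to the three-point conformal block $\langle B_{\alpha\beta}D_{\dot{\alpha}\delta}\Phi D^{\dot{\alpha}}_\gamma\Phi H^{\alpha\beta\delta\gamma}|\tilde{J}\mathcal{O}\mathcal{O}\rangle$.

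The contributions naturally split according to the $\tilde{\lambda}$-weight of the inserted state. First, when the $\tilde{\lambda}_{n+1}$-independent part of $J^{a_{n+1}}$ (call it $J^{a_{n+1}}[0]$) appears, its OPE with the adjacent $J^{a_n}$ and $\tilde{J}^{a_1}$ insertions effectively shifts nothing but the colour-ordered denominator, producing $\frac{\langle n1\rangle}{\langle n,n+1\rangle\langle n+1,1\rangle}F_n$ as in the genus 3 argument. Second, when $J^{a_{n+1}}[1]$ is active, one sums over OPEs that paste the new $\tilde{\lambda}_{n+1}$ onto $\tilde{J}^{a_1}[1]$, $\mathcal{O}^{a_2}[1]$, $\mathcal{O}^{a_3}[1]$, or an existing $J^{a_i}[1]$ with $4\le i\le n$. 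Each reduction produces exactly one of the six building blocks already catalogued in the four-point calculation (with labels shifted to $n{+}1$), with prefactors $[n{+}1,1]$, $[2,n{+}1]$, $[3,n{+}1]$, and $[i,n{+}1]\langle i,n{+}1\rangle$ respectively, each weighted by the appropriate $H$-contraction.

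To assemble $F_{n+1}$, collect all contributions by the three distinct $H$-contractions $\langle 1^3 2|H\rangle$, $\langle 1^2 23|H\rangle$, and $\langle 1^3 3|H\rangle$. The bulk piece $\frac{\langle n1\rangle}{\langle n,n+1\rangle\langle n+1,1\rangle}F_n$ supplies sums running up to $i = n$, while the new OPE pieces extend these sums by appending the $i = n{+}1$ term; after using $\langle n1\rangle/\langle n,n+1\rangle\langle n+1,1\rangle$ to telescope the denominator into $\langle n,n+1\rangle\langle n+1,1\rangle$ form, the result is manifestly $F_{n+1}$ with summation ranges now extending to $n{+}1$. The analogous telescoping in the genus 3 case in the proof of Proposition \ref{prop:genus3_npt_formfactor} provides a direct template.

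The main obstacle is bookkeeping: one must make sure every OPE reduction channel between $J^{a_{n+1}}$ and the hard states is accounted for exactly once, and that the three distinct $H$-tensor contractions are cleanly separated. A minor secondary point is that unlike the genus 3 case, the $(1,n{+}1)$ and $(2,n{+}1)$ channels pass through the scalar OPEs $\mathcal{O}\mathcal{O}$ and $\mathcal{O}\tilde{\mathcal{O}}$ computed in subsection \ref{subsec:genus1_OPE} rather than through $\tilde{J}\tilde{J}$ as before, so one must confirm that the prefactors emerging from these scalar OPEs still match the pattern. Once this matching is checked — which is straightforward since the $H(z_2)$-weighted scalar OPEs reproduce exactly the same kinematic factors as the $\tilde{J}\tilde{J}$ OPE in the genus 3 story — the induction closes without further complication.
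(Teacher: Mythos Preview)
Your approach is correct and is exactly what the paper does: it simply states that the proof is nearly identical to that of Proposition~\ref{prop:genus3_npt_formfactor}, i.e.\ the same induction on $n$ with the same decomposition into the bulk piece $\tfrac{\langle n1\rangle}{\langle n,n{+}1\rangle\langle n{+}1,1\rangle}F_n$ and the new $J^{a_{n+1}}[1]$ contributions.

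One small correction to your secondary point: the deformed scalar OPEs $\mathcal{O}\mathcal{O}\sim H\,J$ and $\mathcal{O}\tilde{J}\sim H\,\tilde{\mathcal{O}}$ from subsection~\ref{subsec:genus1_OPE} do \emph{not} enter this computation. In the colour-ordered correlator the newly inserted state is always a positive-helicity gluon $J^{a_{n+1}}$, and all the reductions proceed via the standard $J\cdot X\sim\tfrac{1}{\langle\cdot\,\cdot\rangle}X$ OPEs (for $X=J,\tilde{J},\mathcal{O}$), which carry no factor of $H$. The only way $H$ appears is through the three-point conformal block itself. So the genus~1 reduction is genuinely identical in structure to the genus~3 one---there is no extra check to perform.
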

\begin{proof}
    The proof is nearly identical to the proof of proposition \ref{prop:genus3_npt_formfactor}.
\end{proof}
Integrability is then verified with 
\begin{proposition}
    The $n$-point $-\Phi\Phi+...+$ form factor vanishes on the support of momentum conservation.
\end{proposition}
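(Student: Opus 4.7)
The plan is to mirror the argument used for Proposition \ref{prop:genus3_integrability}, since the three coefficient polynomials appearing in equation \ref{eq:genus1_npt_formfac} are formally identical to those appearing in the genus 3 form factor of equation \ref{eq:gen_formfactor}. It therefore suffices to show that each of the three kinematic prefactors multiplying $\langle 1^3 2|H\rangle$, $\langle 1^2 23|H\rangle$ and $\langle 1^3 3|H\rangle$ vanishes on the support of momentum conservation $\sum_{i=1}^n|i\rangle[i| = 0$, so that the combined colour-ordered form factor vanishes independently of the particular symmetric tensor $H^{\alpha\beta\gamma\delta}$ defining the elliptic curve.

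First I would treat the coefficient of $\langle 1^3 2|H\rangle$, namely
\begin{align*}
    \langle 31\rangle [12] + \sum_{i=4}^{n}\langle 3i\rangle[i2],
\end{align*}
by sandwiching the momentum conservation relation between $\langle 3|$ and $|2]$; this produces $\sum_{i\ne 2,3}\langle 3i\rangle[i2] = 0$, which rearranges to exactly this expression. Next, the coefficient of $\langle 1^2 23|H\rangle$,
\begin{align*}
    \langle 31\rangle [23] + \sum_{i=4}^{n}\langle i1\rangle[2i],
\end{align*}
vanishes by the analogous contraction of momentum conservation with $|1\rangle$ and $[2|$.

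The coefficient of $\langle 1^3 3|H\rangle$ is the delicate one and will be the main obstacle, although it already appeared in the genus 3 setting. The strategy is to separately evaluate $\langle 31\rangle[31]+\sum_{i=4}^n(\langle i1\rangle[i1]+\langle 3i\rangle[3i])$ by contracting momentum conservation with $|1\rangle,[1|$ and with $\langle 3|,[3|$ respectively, obtaining
\begin{align*}
    \langle 31\rangle[31] + \sum_{i=4}^{n}\bigl(\langle i1\rangle[i1] + \langle 3i\rangle[3i]\bigr) = -\tfrac{1}{2}(p_1+p_2+p_3)^2.
\end{align*}
The remaining piece $\sum_{i=4}^{n}\sum_{j>i}\langle ij\rangle[ij]$ equals $\tfrac{1}{2}(p_4+\cdots+p_n)^2$, and overall momentum conservation forces $(p_1+p_2+p_3)^2 = (p_4+\cdots+p_n)^2$, so the two contributions cancel. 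The only subtlety is tracking signs and coincident-index terms carefully when expanding the squared momenta; this is routine and proceeds exactly as in the genus 3 proof. Once all three coefficients are shown to vanish, the form factor itself vanishes on shell regardless of $H$, which is the desired statement.
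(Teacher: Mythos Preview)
Your proposal is correct and matches the paper's own proof essentially verbatim: the paper simply observes that the three kinematic coefficients in equation \ref{eq:genus1_npt_formfac} are identical to those appearing in the genus 3 form factor and refers back to the proof of Proposition \ref{prop:genus3_integrability}. You have recapitulated that argument in slightly more detail, but the approach is the same.
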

\begin{proof}
The coefficients that need to vanish are     
\begin{align*}
        \langle31\rangle[12]&+\sum_{i=4}^n\langle 3i\rangle[i2]\\ \langle31\rangle[23]&+\sum_{i=4}^n\langle i1\rangle[2i]\\\langle31\rangle[31]&+\sum_{i=4}^n\left(\langle i1\rangle[i1]+\langle3i\rangle[3i]+\sum_{j>i}^n\langle ij\rangle[ij]\right).
    \end{align*}
    These were shown to vanish on the support of momentum conservation in the proof of proposition \ref{prop:genus3_integrability}.
\end{proof}

\subsection{Bootstrapping the full theory}

In section \ref{sec:bootstrap}, I did a slow and careful analysis in both the 4d theory and on twistor space to bootstrap the full 4d theory for the genus 3 deformation. This gets a little more complicated in this setting due to the extra scalars that enter the story, but the overall theme of the analysis is the same. Since the arguments will be analogous to the ones done in the genus 3 case, I will bypass some of the 4d arguments by relying more on twistor arguments.

\subsection*{Fixing linear in $H$ terms} I will now consider the space of possible linear in $H$ diagrams one can draw on twistor space and the possible spacetime vertices these can descend to.

\begin{lemma}\label{lem:genus1_dim_vertices}
    The only gauge invariant, dimensionless vertices that are compatible with the symmetries and linear in $H$ are
    \begin{align*}
        B^3H\quad B^2(a_1\Phi+a_2\tilde{\Phi})(b_1\Phi+b_2\tilde{\Phi})H\quad DBDBH\quad BD\Phi D\Phi H\quad BD\tilde{\Phi}D\tilde{\Phi}H\quad BD\Phi D\tilde{\Phi}H
    \end{align*}
    modulo topological terms. Note that the indices are free to contract in any way possible, and $a_1,a_2,b_1,b_2$ are constants.
\end{lemma}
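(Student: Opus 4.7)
The plan is to adapt the dimension-counting argument used in Lemma \ref{lem:lin_H} to this field-rich setting. I first read off $|H|=-2$ from the cubic vertex $BD\Phi D\Phi H$ in the leading Lagrangian, using $|B|=2$, $|\Phi|=|\tilde{\Phi}|=1$, $|D|=1$. Hence any dimensionless vertex linear in $H$ must multiply an operator of dimension $6$ built from the on-shell gauge-invariant building blocks $B_{\alpha\beta}$, $\Phi$, $\tilde{\Phi}$, $F_{\dot{\alpha}\dot{\beta}}$, and the covariant derivative $D_{\alpha\dot{\alpha}}$ (recalling $F_{\alpha\beta}=0$ on shell, as in the genus 3 case). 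The four undotted indices of $H^{\alpha\beta\gamma\delta}$ must be saturated by exactly four free undotted indices on the operator, while every dotted index must contract internally.

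Next I would enumerate such operators by their content $(n_B,n_\Phi,n_{\tilde{\Phi}},n_D,n_F)$, subject to the three constraints
\begin{align*}
    2n_B + n_\Phi + n_{\tilde{\Phi}} + n_D + 2n_F = 6,\qquad n_D+2n_F\ \text{even},\qquad 2n_B + n_D \geq 4.
\end{align*}
For $n_F=0$, the case $n_D=0$ yields $B^3$ (with one pair of undotted indices contracted internally) and $B^2$ times an arbitrary bilinear in the scalars, i.e.\ $B^2(a_1\Phi+a_2\tilde{\Phi})(b_1\Phi+b_2\tilde{\Phi})$; the case $n_D=2$ yields $DBDB$ together with the three scalar bilinears $BD\Phi D\Phi$, $BD\Phi D\tilde{\Phi}$, $BD\tilde{\Phi}D\tilde{\Phi}$ once integration by parts is used to push both derivatives onto the scalars or onto the two $B$'s; the case $n_D=4$ gives operators of schematic type $D^4(\Phi,\tilde{\Phi})^2$ or $D^4 B$-terms, which by repeated IBP together with the on-shell equations $D^{\alpha\dot{\alpha}}B_{\alpha\beta}=0$ and $D^{\alpha\dot{\alpha}}D_{\alpha\dot{\alpha}}\Phi=0$ reduce to the operators just listed modulo total derivatives.

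For $n_F\geq 1$ the analysis collapses quickly. If $n_F=1$ and $n_D=0$, the only candidate is $B^2 F_{\dot{\alpha}\dot{\beta}}$, whose two dotted indices have no partners except each other, and $F^{\dot{\alpha}}{}_{\dot{\alpha}}=0$ by antisymmetry. If $n_F=1$ and $n_D\geq 2$, the on-shell Yang--Mills relation $D^{\beta\dot{\alpha}}F_{\dot{\alpha}\dot{\gamma}}=0$ (the $A$-equation of motion) combined with IBP either annihilates the term or reduces it to a topological density, which is discarded as per the statement. Finally $n_F\geq 2$ forces all other counts to vanish, leaving only $F_{\dot{\alpha}\dot{\beta}}F^{\dot{\alpha}\dot{\beta}}$, which is topological and carries no undotted indices, so it cannot pair with $H^{\alpha\beta\gamma\delta}$ at all. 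Gauge invariance is automatic throughout, since all derivatives are covariant.

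The main obstacle is the IBP bookkeeping in the derivative-heavy cases $n_D\in\{2,4\}$: one must verify that every rearrangement of covariant derivatives reduces, modulo total derivatives, equations of motion and the total symmetry of $H^{\alpha\beta\gamma\delta}$, to one of the operators in the stated list. The total symmetry of $H$ is essential here, as it kills any antisymmetric combinations of free undotted indices that would otherwise masquerade as independent vertex structures.
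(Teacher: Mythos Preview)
Your proposal is correct and follows essentially the same dimension-and-index counting argument as the paper, only organized more systematically via the constraint tuple $(n_B,n_\Phi,n_{\tilde\Phi},n_D,n_F)$ rather than the paper's case split by the number of $B$'s. One small correction: the relation $D^{\beta\dot\alpha}F_{\dot\alpha\dot\gamma}=0$ you invoke is not the $A$-equation of motion here (varying $A$ in $BF+\tilde\Phi D{*}D\Phi$ gives $D^{\alpha\dot\alpha}B_{\alpha\beta}+\text{scalar contributions}=0$); it is instead the Bianchi identity combined with the on-shell self-duality $F_{\alpha\beta}=0$, so the relation does hold and your argument goes through unchanged.
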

\begin{proof}
The gauge invariant quantities we have are $\Phi,\tilde{\Phi},D_{\alpha\dot{\alpha}},B_{\alpha\beta},F_{\dot{\alpha}\dot{\beta}}$ with dimensions
\begin{align*}
    |\Phi|=|\tilde{\Phi}|=|D_{\alpha\dot{\alpha}}|=1\quad |B_{\alpha\beta}|=|F_{\dot{\alpha}\dot{\beta}}|=2.
\end{align*}
Any operator accompanying $H$, must have dimension $6$ to keep the Lagrangian dimensionless since $H$ has dimension $-2$. To be consistent with the symmetries, any of our candidate operators must have 4 free undotted indices to contract against $H$. 

Let me first eliminate any terms involving $F_{\dot{\alpha}\dot{\beta}}$. Notice first that at most one copy of $F_{\dot{\alpha}\dot{\beta}}$ can appear in a dimension 6 operator with 4 free undotted indices. The operator will then have to involve two covariant derivatives to contract against the two dotted indices. So, the candidate operator so far looks like $DDF$ where we need to insert something of dimension 2 that introduces the correct number of free undotted indices. There is only one such choice:
\begin{align*}
    \Tr(D_{\dot{\alpha}}^\alpha B^{\beta\gamma}D_{\dot{\beta}}^\delta F^{\dot{\alpha}\dot{\beta}}H_{\alpha\beta\gamma\delta})
\end{align*}
up to moving the derivatives around using integration by parts. This term is clearly exact!

Next, let's consider operators with factors of $B$. If 3 factors of $B$ are present, then no other objects can appear since it already has dimension 6. Operators of this nature take the form
\begin{align*}
    \Tr(B_{\alpha\beta} B_{\gamma\rho}B_\delta^\rho H^{\alpha\beta\gamma\delta}).
\end{align*}
If there are two factors of $B$, then there are many options that work:
\begin{align*}
    \Tr(B^2H\cdot (a_1\Phi+a_2\tilde{\Phi})(b_1\Phi+b_2\tilde{\Phi}))\quad \Tr(DBDBH)
\end{align*}
where the indices contract in any way possible and $a_1,a_2,b_1,b_2$ are constants.

If there is one factor of $B$, then there must be at least two factors of $D$ to get the correct number of undotted indices. Notice that we cannot have a term with three derivatives (breaks symmetry), and four would involve a term where the derivative hits $B$, which vanishes on shell. So the only possibilities with one factor of $B$ are 
\begin{align*}
    BD\Phi D\Phi H\quad BD\tilde{\Phi}D\tilde{\Phi}H\quad BD\Phi D\tilde{\Phi}H.
\end{align*}
Finally, notice that there are no terms that only involve derivatives and scalars, as they vanish on shell.
\end{proof}

We can throw away most of the candidate vertices in lemma \ref{lem:genus1_dim_vertices} by appealing to twistor arguments:

\begin{lemma}
    The only vertex that can appear in the 4d theory of the genus 1 deformation is
    \begin{align*}
        BD\Phi D\Phi H.
    \end{align*}
\end{lemma}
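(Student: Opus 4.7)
My plan is to cross-reference the list of candidate vertices from Lemma \ref{lem:genus1_dim_vertices} against the external-leg structure allowed by twistor diagrammatics for the theory \eqref{eq:genus1_twistor_action}. Any tree-level twistor diagram contributing to a 4d vertex at linear order in $H$ must contain exactly one $b\tilde{a}^2 H$ vertex, together with some number of the remaining cubic interactions $baa$ and $a\tilde{a}\tilde{b}$; propagators pair $a$ with $b$ and $\tilde{a}$ with $\tilde{b}$.

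First I would do the combinatorial counting. Let $k$ and $l$ be the numbers of $baa$ and $a\tilde{a}\tilde{b}$ vertices and $P_{ab}, P_{\tilde{a}\tilde{b}}$ the numbers of the two propagator types. The tree condition $V-E=1$ gives $P_{ab}+P_{\tilde{a}\tilde{b}}=k+l$; combined with the legwise bounds $P_{ab}\leq k+1$ and $P_{\tilde{a}\tilde{b}}\leq l$, this forces $P_{\tilde{a}\tilde{b}}\in\{l-1,l\}$. Reading off the remaining external-leg counts produces exactly two possible 4d field signatures, modulo external $A$ legs which get absorbed into covariant derivatives:
\begin{align*}
    \text{(A)}\quad 1\ B,\ 2\ \Phi,\ 0\ \tilde{\Phi},\qquad \text{(B)}\quad 0\ B,\ 3\ \Phi,\ 1\ \tilde{\Phi}.
\end{align*}

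Next I would eliminate case (B) by a dimension and index count in the spirit of Lemma \ref{lem:genus1_dim_vertices}. A covariant vertex with scalar content $\Phi^3\tilde{\Phi}$ must have dimension $6$ and carry four free undotted indices to contract with $H^{\alpha\beta\gamma\delta}$. Since the scalars carry no spinor indices and $F_{\dot{\alpha}\dot{\beta}}$ carries only dotted ones, the four undotted indices must all come from covariant derivatives $D_{\alpha\dot{\alpha}}$; but four derivatives would push the operator dimension to $4+4=8$, exceeding the required $6$. So case (B) admits no admissible vertex.

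Finally, in case (A) the covariant 4d vertex has field content $B\Phi^2$. Scanning the candidates produced by Lemma \ref{lem:genus1_dim_vertices}, the operators $B^3H$, $B^2(a_1\Phi+a_2\tilde{\Phi})(b_1\Phi+b_2\tilde{\Phi})H$, $DBDBH$, $BD\tilde{\Phi}D\tilde{\Phi}H$, and $BD\Phi D\tilde{\Phi}H$ all fail to match this signature, each containing either the wrong number of $B$'s or at least one $\tilde{\Phi}$. Only $BD\Phi D\Phi H$ survives, proving the lemma. I expect the main obstacle to be justifying cleanly that the additional external $A$ legs of each twistor diagram truly organise into the covariant derivatives of a single underlying 4d operator, so that classifying vertices really does reduce to identifying the correct $(B,\Phi,\tilde{\Phi})$ signature; this is implicit in gauge invariance of the Penrose transform but deserves a careful statement.
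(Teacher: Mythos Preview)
Your proof is correct and somewhat more systematic than the paper's. The paper argues by hand: it observes that the only 3-point twistor diagram with an $H$ insertion is $b\tilde a\tilde a$ (ruling out $B^3H$, $DBDBH$, $BD\tilde\Phi D\tilde\Phi H$, $BD\Phi D\tilde\Phi H$), then enumerates the three possible 4-point twistor diagrams with a single $H$ and notes that none have two external $b$ legs, eliminating $B^2(a_1\Phi+a_2\tilde\Phi)(b_1\Phi+b_2\tilde\Phi)H$. Your approach instead does a uniform tree-level leg count for all orders at once, extracting the two admissible external signatures (A) and (B) and eliminating (B) by dimension. This buys you a single argument that handles arbitrary numbers of $baa$ and $a\tilde a\tilde b$ insertions simultaneously, whereas the paper's enumeration is tied to the 3- and 4-point cases. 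Note that your separate dimension argument killing case (B) is already implicit in Lemma~\ref{lem:genus1_dim_vertices}, since no candidate there has zero $B$'s; it does no harm to repeat it, but you could also just observe that (B) matches nothing on the list. The caveat you flag at the end---that external $a$ legs organise into covariant derivatives so that the classification reduces to the $(B,\Phi,\tilde\Phi)$ signature---is exactly the point the paper treats as implicit, so you are not missing anything the paper supplies.
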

\begin{proof}
    Recall the twistor action that our 4d theory descended from
    \begin{align*}
        \int_{\mathbb{PT}}\Tr(b\dbar a +\tilde{b}\dbar\tilde{a} + baa + a\tilde{a}\tilde{b}+b\tilde{a}\tilde{a}H(z)).
    \end{align*}
    It is clear that on twistor space, the only 3-point Feynman diagrams with a linear factor of $H$ descends to a $-\Phi\Phi$ diagram. There are also no kinetic terms with a factor of $H$ so we are left with the following candidate vertices
    \begin{align*}
        B^2(a_1\Phi+a_2\tilde{\Phi})(b_1\Phi+b_2\tilde{\Phi})H\quad BD\Phi D\Phi H
    \end{align*}
    by lemma \ref{lem:genus1_dim_vertices}. The second vertex above is the one we want. As for the 4-point vertex, notice that on twistor space, there are only three 4-point diagrams with a linear factor of $H$. They are shown in Fig. \ref{fig:genus1_twistor_feyn} ($\tilde{a}\tilde{a}\to ba$, $b\tilde{a}\to a\tilde{a}$, $\tilde{a}\tilde{a}\to \tilde{a}\tilde{b}$). Notice that none of them involve two external $b$ gluons ($B$ on spacetime), so we can deduce that the $B^2(a_1\Phi+a_2\tilde{\Phi})(b_1\Phi+b_2\tilde{\Phi})H$ vertex cannot appear in the 4d theory.
    \begin{figure}
        \centering
        \includegraphics[width=0.85\linewidth]{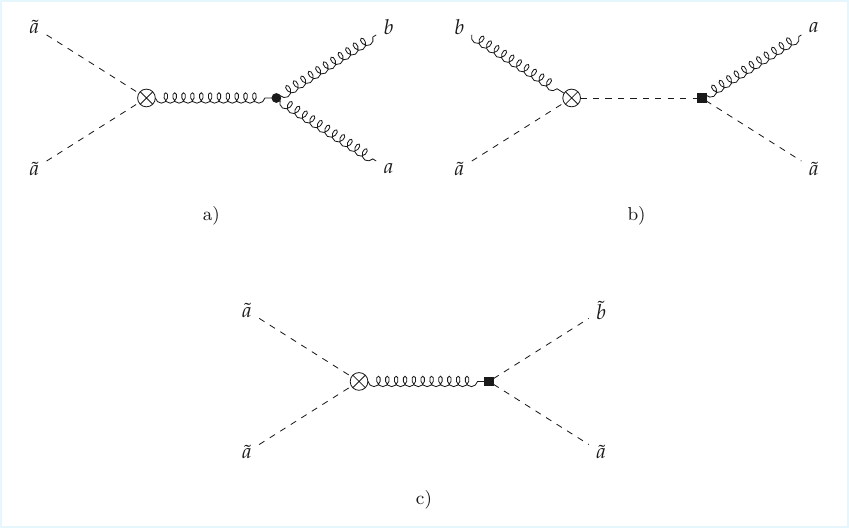}
        \caption{The possible 4-point Feynman diagrams one can draw on twistor space that have a single factor of $H(z)$. Note that the $baa$ and $a\tilde{a}\tilde{b}$ vertices are labelled by $\bullet$ and $\scalebox{0.6}{$\blacksquare$}$ respectively while the $b\tilde{a}\tilde{a}H(z)$ vertex is labelled by $\otimes$.}
        \label{fig:genus1_twistor_feyn}
    \end{figure}
\end{proof}

Let me also fix any terms that do not contain a factor of $H$:
\begin{lemma}
    The only vertices without a factor of $H$ have been accounted for.
\end{lemma}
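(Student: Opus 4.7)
The plan is to mirror the argument used in Lemma \ref{lem:BF_only} for the genus 3 case, adapting it to accommodate the two extra fields $\Phi, \tilde\Phi$ that appear here, and then to prune the list using twistor reasoning based on equation \ref{eq:genus1_twistor_action}.

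First, I would enumerate all dimension-four, gauge-invariant local operators that can be written from the on-shell building blocks $B_{\alpha\beta}$, $F_{\dot\alpha\dot\beta}$, $\Phi$, $\tilde\Phi$ and $D_{\alpha\dot\alpha}$, subject to the constraint that all indices contract (no tensor $H$ is available to absorb free indices, so the operator must be fully contracted with $\epsilon_{\alpha\beta}$ and $\epsilon_{\dot\alpha\dot\beta}$). Using the assignments $|B|=|F|=2$, $|\Phi|=|\tilde\Phi|=|D|=1$, the candidates fall into four families: the purely gauge terms $\Tr(B_{\alpha\beta}B^{\alpha\beta})$ and $\Tr(F_{\dot\alpha\dot\beta}F^{\dot\alpha\dot\beta})$; the $B$-linear terms such as $\Tr(B_{\alpha\beta}\Phi\tilde\Phi \cdot \text{stuff})$, which force an extra derivative or field to soak up the two free undotted indices; the scalar kinetic terms $\tilde\Phi D_{\alpha\dot\alpha}D^{\alpha\dot\alpha}\Phi$ and the symmetric variants $D\Phi D\Phi$, $D\tilde\Phi D\tilde\Phi$, $D\Phi D\tilde\Phi$; and the pure quartic scalar potentials built from products of $\Phi$ and $\tilde\Phi$ only.

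Next I would eliminate the first family exactly as in Lemma \ref{lem:BF_only}: $\Tr(B^2)$ is the Chalmers--Siegel deformation that turns SDYM into full Yang--Mills, which is not integrable, while $\Tr(F_{\dot\alpha\dot\beta}F^{\dot\alpha\dot\beta})$ is a topological density that drops out of the perturbative expansion. The scalar kinetic family collapses onto $\Tr(\tilde\Phi D_A *D_A\Phi)$ after integration by parts and using the reality/holomorphy structure inherited from $\tilde b\dbar\tilde a$ on twistor space; the other two combinations either vanish by antisymmetry or are proportional to total derivatives.

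The main step is the twistor elimination of the remaining candidates. The twistor action \ref{eq:genus1_twistor_action} only contains the $H$-independent interaction vertices $\Tr(baa)$ and $\Tr(a\tilde a\tilde b)$, so any genuinely new $H$-independent spacetime interaction would have to be produced by a tree built entirely from these two cubic vertices on $\PT_\Sigma$. Writing down the external-leg content of such trees shows they can only descend to contributions of $\Tr(BF(A))$ and $\Tr(\tilde\Phi D_A *D_A\Phi)$: $\tilde a$ never self-couples without $H$, so the quartic and higher scalar potentials $\Phi^n \tilde\Phi^m$ and the mixed $B^2\Phi\tilde\Phi$ candidates have no twistor origin. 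This accounts for everything already in the Lagrangian and rules out the rest.

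The main obstacle is making the enumeration of dimension-four contractions genuinely exhaustive while correctly using on-shell and integration-by-parts identities to identify which of the apparently independent operators are actually redundant; in particular, one must be careful that no mixed $B$-scalar contraction hides a new vertex, and that the potential-type candidates really admit no twistor ancestor in \ref{eq:genus1_twistor_action}.
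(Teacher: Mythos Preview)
Your proposal is essentially correct but follows a different and more roundabout route than the paper. The paper argues entirely from the twistor side: it notes that the only $H$-free cubic twistor vertices are $baa$ and $a\tilde a\tilde b$, which are already accounted for; then it enumerates the possible $H$-free four-point twistor diagrams ($aa\to\tilde a\tilde b$ and $aa\to ba$), observes that the first is absorbed into $\tilde\Phi D*D\Phi$ while the second would correspond to a spacetime vertex of dimension greater than four; and finally remarks that any five-point or higher twistor diagram would likewise descend to an operator of too high dimension. No spacetime enumeration, no integrability or topological arguments are invoked.

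By contrast, you begin with a full spacetime enumeration in the spirit of Lemma~\ref{lem:BF_only} and Lemma~\ref{lem:genus1_dim_vertices}, then prune. This works, but it generates extra cases you then have to dismiss: your claim that $D\Phi D\Phi$ and $D\tilde\Phi D\tilde\Phi$ ``vanish by antisymmetry or are total derivatives'' is not quite right as stated (e.g.\ $\Tr(D_{\alpha\dot\alpha}\Phi D^{\alpha\dot\alpha}\Phi)$ is a genuine kinetic-type term), and the quartic scalar potentials $\Phi^m\tilde\Phi^{4-m}$ require the twistor argument anyway. Your final twistor step does close these gaps, so the proof is sound, but the paper's approach of going straight to the twistor diagram census plus dimensional counting is both shorter and avoids the shaky intermediate claims.
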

\begin{proof}
    By looking at the twistor action, we see that we have accounted for both 3-point (with no factor of $H$) vertices that can appear. The only 4-point diagrams with no $H(z)$ dependence we can draw on twistor space are $aa\to\tilde{a}\tilde{b}$ and $aa\to ba$. The first diagram is already accounted for in the $\tilde{\Phi}D*D\Phi$ vertex. The second diagram cannot appear as a 4-point vertex in 4d, as it would have dimension greater than 4. 

    Finally, all 5-point and higher diagrams on twistor space would correspond to vertices in 4d of too high dimension. This completes the proof.
\end{proof}

\subsection*{Vertices with higher powers of $H$}

Twistor space Feynman diagrams tell us that the 4d vertex accompanying $H^n$ must have at least $2n$ scalars $\Phi$ and 1 negative helicity gluon $B$. The vertices are of the form $BD\Phi D\Phi \Phi...\Phi H^n$. This is the only possible vertex with a factor of $H^n$ since any other diagram we could draw on twistor space would introduce extra positive helicity gluons, which would lead to a vertex with dimension greater than 4 on spacetime.

We are now ready to wrap up the section with
\begin{proof}[Proof of theorem \ref{thm:genus1_full_theory}.]
So far, we have shown that the full 4d action takes the form
\begin{align*}
    \int_{\mathbb{R}^4}\Tr(BF(A)+\tilde{\Phi}D*D\Phi+BD\Phi D\Phi H+\text{terms of form $B\Phi^{2n-2}D\Phi D\Phi H^n$}).
\end{align*}
One can then recursively find the coefficients of all the higher vertices by computing amplitudes order by order and tuning the coefficients so that each amplitude vanishes. Thus, the full theory is determined by integrability.
\end{proof}

\section{Symmetry reduction and 4d Chern-Simons theory for the genus 3 deformation}\label{sec:sym_red}

In a series of papers \cite{Costello:2013zra,Costello:2013sla,Costello:2017dso,Costello:2018gyb,Costello:2019tri}, a 4-dimensional Chern-Simons type theory was introduced in the context of 2d integrability. There is an interesting connection between this theory and the work done in this paper, which I will now discuss. 

Since 2d integrability is not the main focus of this paper, I only give a brief outline of the 4d Chern-Simons story here and how it connects to the genus 3 case in this paper. I refer curious readers to the original papers or the nice review \cite{Lacroix:2021iit}.

For an arbitrary Riemann surface $\Sigma$, the field of the 4d theory is a Lie algebra-valued 1-form
\begin{align*}
    \mathcal{A} = \mathcal{A}_xdx+\mathcal{A}_ydy+\mathcal{A}_0d\bar{z}\in\Omega^1(\R^2_{x,y}\times\Sigma,\g)
\end{align*}
which is holomorphic in the $\Sigma$ direction. The action is then given by
\begin{align*}
    \int_{\R^2\times\Sigma}\omega\wedge\Tr\left(\mathcal{A}d\mathcal{A}+\frac{2}{3}\mathcal{A}^3\right)
\end{align*}
for a meromorphic 1-form $\omega$ in the coordinate $z\in\Sigma$ and here $d$ is the exterior derivative on $\R^2\times\Sigma$. Note that the introduction of the 1-form $\omega$ was necessary as the Chern-Simons 3-form cannot be integrated on its own over a 4-manifold. Note also that requiring the gauge field to be a $(0,1)$-form was not strictly required but a consequence of the fact that wedging with $\omega$ kills any anti-holomorphic legs of $\mathcal{A}$.

The key feature of this action is that compactifying the $\Sigma$ direction yields a 2d integrable model with spectral parameter $z\in\Sigma$. The pole structure of $\omega$, along with the boundary conditions on the gauge field, is what determines the type of 2d integrable model one finds. Notice that boundary conditions must be imposed on $\mathcal{A}$ to keep the action regular. This construction and methods for compactifying the $\Sigma$ direction when $\Sigma = \CP^1$ are pedagogically presented in \cite{Lacroix:2021iit}. 

This story is reminiscent of the twistor correspondence. Indeed, twistor theory provides a way of deriving 4d integrable models similarly to how 4d Chern-Simons theory derives 2d integrable models (at least in the case where the curve is rational $\Sigma=\CP^1$). These theories were linked explicitly through symmetry reduction in \cite{Bittleston:2020hfv}. Symmetry reduction kills two real dimensions of a theory by demanding that the fields be orthogonal to two chosen vector fields. \cite{Bittleston:2020hfv} showed that symmetry reduction reduced an example of holomorphic Chern-Simons theory on twistor space to a 4d Chern-Simons theory of the form described above. They also showed that compactifying on twistor space or in the 4d Chern-Simons theory commutes with symmetry reduction (see Fig. 1 in \cite{Bittleston:2020hfv}).

In the rest of this section, I consider symmetry reductions of holomorphic Chern-Simons theory on $\PT_\Sigma$ in the genus 3 case and the 4d theory I derived in section \ref{sec:bootstrap}. This is to achieve the following:

\begin{enumerate}
    \item[1.] Make a connection between the "higher genus" twistor space construction on $\PT_\Sigma$ and 4d Chern-Simons theory on $\R^2\times\Sigma$.
    \item[2.] Derive the 2d hyperelliptic integrable model corresponding to the deformation of SDYM found in section \ref{sec:bootstrap}.
\end{enumerate}

\subsection{Symmetry reduction on $\R^4$}\label{subsec:sym_red}

Recall the 4d action I derived in section \ref{sec:bootstrap}
\begin{align*}
    \int_{\R^4}\Tr(BF+BDBDBH+...)
\end{align*}
where we ignore higher-order terms for now.

Let's now choose vector fields to reduce along. Two generators of the translation group are given by
\begin{align}\label{eq:vec_f_sym}
    V_1 = c_1\frac{\partial}{\partial\bar{u}_{\dot{1}}}+c_2\frac{\partial}{\partial u_{\dot{1}}}\qquad V_2 = c_3\frac{\partial}{\partial\bar{u}_{\dot{2}}}+c_4\frac{\partial}{\partial u_{\dot{2}}}
\end{align}
for constants $c_i$ and using the coordinates on $\R^4$ introduced in section \ref{sec:bgrnd}. To ease the computation of the 2d action, it will be convenient to introduce the following coordinates
\begin{align}\label{eq:s_t_coords}
    s_1 &= \frac{c_1\bar u_{\dot 1}+c_2u_{\dot 1}}{c_1^2+c_2^2}\qquad
t_1 = \frac{-c_2\bar u_{\dot 1}+c_1u_{\dot 1}}{c_1^2+c_2^2} \\\nonumber
s_2 &= \frac{c_3\bar u_{\dot 2}+c_4u_{\dot 2}}{c_3^2+c_4^2}\qquad 
t_2 = \frac{-c_4\bar u_{\dot 2}+c_3u_{\dot 2}}{c_3^2+c_4^2}.
\end{align}

So that the vector fields simply take the form
\begin{align*}
    V_1=\frac{\partial}{\partial s_1}\qquad V_2=\frac{\partial}{\partial s_2}.
\end{align*}

The main step in the symmetry reduction is to demand the vanishing of the Lie derivatives
\begin{align}\label{eq:Lie_der}
    \mathcal{L}_{V_1}B = \mathcal{L}_{V_2}B = \mathcal{L}_{V_1}A = \mathcal{L}_{V_2}A = 0.
\end{align}
Technically, one should be careful about how the action of these translations lift to the $G$-bundle. It is important to choose a frame for the $G$-bundle that is invariant under the lifted action. 

It will be useful to note that the self-dual 2-forms in the new coordinates are given by
\begin{align*}
    \omega_1 := ds_1\wedge ds_2 + dt_1\wedge dt_2,\qquad
\omega_2 := ds_1\wedge dt_1 + ds_2\wedge dt_2,\qquad
\omega_3 := ds_1\wedge dt_2 - ds_2\wedge dt_1.
\end{align*}

To continue, let $T$ be the plane spanned by $t_1,t_2$. Then we can see the effect of the vanishing Lie derivatives directly on the gauge fields:
\begin{align*}
    A = A_T+\phi_1ds_1+\phi_2ds_2\quad B = b_1\omega_1 + b_2\omega_2 + b_3\omega_3
\end{align*}
where all fields are $\g$ valued and depend only on $t_i$ due to equation \ref{eq:Lie_der} and $A_T\in\Omega^1(T,\g)$. Next, let $d_T$ be the exterior derivative on $T$ and $D_T = d_T+[A_T,\cdot]$ so that we have the 2d curvature $F_T = D_TA_T$. Also, define $D_{t_i}\phi = d_{t_i}\phi+[A_{t_i},\phi]$. Then, keeping only the $d^2t\wedge d^2s$ terms in the $BF$ part of the action leaves
\begin{align*}
    \int_{T}
\mathrm{Tr}\Big(
 b_1\big(F_T + [\phi_1,\phi_2]\big)
 +
 b_2\big(D_{t_1}\phi_1 + D_{t_2}\phi_2\big)
 +
 b_3\big(D_{t_2}\phi_1 - D_{t_1}\phi_2\big)
\Big).
\end{align*}
Where we have normalized by a factor of the volume of the group of translations generated by $V_1,V_2$ to avoid dealing with compactification subtleties.

\begin{remark}
    The equations of motion for this theory are 
    \begin{align*}
        F_T + [\phi_1,\phi_2] = 0\quad
D_{t_1}\phi_1 + D_{t_2}\phi_2 = 0\quad
D_{t_2}\phi_1 - D_{t_1}\phi_2 = 0.
    \end{align*}
    Which can be recast as
    \begin{align*}
        D_T\Phi=0\quad F_T+[\phi_1,\phi_2] = 0
    \end{align*}
    by defining $\Phi = \phi_1dt_1+\phi_2dt_2$. These are exactly Hitchin's equations \cite{Hitchin:1986vp} as expected for a symmetry reduction of SDYM.
\end{remark}
Now, let's discuss what happens to the $BDBDBH$ vertex. To complete the reduction of this term we will need to split the 4d covariant derivative into the 2d covariant derivatives
\begin{align}\label{eq:2d_cov_der}
    \mathcal{D}_1 = D_{t_1}\quad \mathcal{D}_2 = D_{t_2}\quad \mathcal{D}_3 = [\phi_1,\cdot]\quad \mathcal{D}_4 = [\phi_2,\cdot].
\end{align}
The 4d covariant derivative is then given by
\begin{align*}
    D_{\alpha\dot{\alpha}} = \sigma^i_{\alpha\dot{\alpha}}\mathcal{D}_i
\end{align*}
for a choice of Pauli matrices $\sigma^i$. The vertex is then given by 
\begin{align*}
    \Tr(B_{\alpha\beta}D_\gamma^{\dot{\alpha}}B_{\delta\rho}D_{\sigma\dot{\alpha}}B_{\zeta\eta}H^{\alpha\beta\gamma\delta\rho\sigma\zeta\eta}) = \Tr(b_a\sigma^{i\dot{\alpha}}_\gamma \mathcal{D}_ib_b\sigma^j_{\sigma\dot{\alpha}}\mathcal{D}_jb_c)\omega^a_{\alpha\beta}\omega^b_{\delta\rho}\omega^c_{\zeta\eta}H^{\alpha\beta\gamma\delta\rho\sigma\zeta\eta}
\end{align*}
which we can write as 
\begin{align*}
    \Tr(b_a \mathcal{D}_ib_b\mathcal{D}^ib_c)\tilde{H}^{abc}
\end{align*}
by defining $\tilde{H}^{abc} = \omega^a_{\alpha\beta}\sigma^{i\dot{\alpha}}_\gamma\omega^b_{\delta\rho}\sigma_{i\sigma\dot{\alpha}}\omega^c_{\zeta\eta}H^{\alpha\beta\gamma\delta\rho\sigma\zeta\eta}$. So, the full 2d theory is
\begin{align}\label{eq:genus3_Hitchin}
    \int_{T}\Tr\left(
 b_1(F_T + [\phi_1,\phi_2])+
 b_2(D_{t_1}\phi_1 + D_{t_2}\phi_2)+
 b_3(D_{t_2}\phi_1 - D_{t_1}\phi_2\right)
+b_a \mathcal{D}_ib_b\mathcal{D}^ib_c\tilde{H}^{abc}+...).
\end{align}
Where again, we have normalized by a factor of the volume of the translation group. Finally, we note the schematic form of the higher-order vertices:
\begin{align*}
    \int_{\R^4}\Tr(B^{2n-1}DBDBH^n)\to \int_{\R^2}\Tr(b^{2n-1}\mathcal{D}b\mathcal{D}b\tilde{H}^n)
\end{align*}
where the indices on the factors of $\tilde{H}$ contract with the rest of the vertex in any way possible, and the coefficients are unknown but fixed by integrability. Note also that I have abused notation slightly here by writing $\tilde{H}^n$ since the way in which the indices contract with $H$ in these higher-order vertices may be different from the way I defined $\tilde{H}$ above.

Let me conclude this subsection about the computation of the symmetry-reduced action by highlighting that Hitchin's equations describe the moduli of $G$-bundles on our space $T$. Our deformation to a hyperelliptic model describes the moduli of $G$-bundles over the genus 3-curve. This suggests that adding in these higher-order vertices, depending on powers of $H$ can be interpreted as deforming the underlying surface that the Hitchin's equations are defined on to a higher genus curve. It would be interesting to explore this feature further.

\subsection{Symmetry reduction on $\PT_\Sigma$}\label{subsec:sym_red_PTSigma}
Recall that the starting theory is holomorphic Chern-Simons theory on $\PT_\Sigma$ with gauge field $\alpha\in\Omega^{0,1}(\PT_\Sigma,\mathcal{O}_\Sigma\otimes\g)$ (see the setup in section \ref{sec:model_build}). The vector fields we symmetry reduce by are the $\PT_\Sigma$ uplifts of the vector fields chosen in the 4d symmetry reduction (equation \ref{eq:vec_f_sym}). The vector fields on $\PT_\Sigma$ are
\begin{align*}
    V_1=(c_1+z c_2)\frac{\partial}{\partial v_{\dot{1}}}\quad V_2=(c_3+z c_4)\frac{\partial}{\partial v_{\dot{2}}}
\end{align*}
for constants $c_i$. The first step in the symmetry reduction is to demand that
\begin{align*}
    \mathcal{L}_{V_1}\alpha = \mathcal{L}_{V_2}\alpha = 0.
\end{align*}
To ease computations, it is convenient to treat $\alpha$ as an ordinary $1$ form while imposing the following
\begin{align*}
    \partial_{v_{\dot{1}}}\vee \alpha = \partial_{v_{\dot{2}}}\vee \alpha = \partial_{z}\vee \alpha = 0.
\end{align*}
We can then use gauge transformations to require 
\begin{align*}
    \text{Re}V_1\vee \tilde{\alpha} = \text{Re}V_2\vee \tilde{\alpha} = \dbar_z\vee \tilde{\alpha} = 0
\end{align*}
where $\tilde{\alpha}$ is the gauge fixed field. To continue, I will have to say something about the volume form on $\PT_\Sigma$. $\PT_\Sigma$ is Calabi-Yau so it has a holomorphic volume form $\Omega_\Sigma$. However, the coordinates in the $\pi^*\mathcal{O}(1)^2$ directions both had first-order poles at the two preimages of the point at infinity, which I call $p_1,p_2$. So, the 2 form $dv^{\dot{1}}dv^{\dot{2}}$ has a second order pole at these two points on $\Sigma$. So, we want to find a 1 form on $\Sigma$ that cancels these poles to wedge against $dv^{\dot{1}}dv^{\dot{2}}$. There are 3 natural holomorphic 1-forms on $\Sigma$,
\begin{align*}
    dz/w\quad zdz/w\quad z^2dz/w
\end{align*}
which have second, first, and no zero at $p_1,p_2$ respectively. Also, recall that the $\Sigma$ coordinates were related via $w^2 = H(z)$. The clear choice for our holomorphic volume form is then 
\begin{align*}
    \Omega_\Sigma = \frac{dzdv^{\dot{1}}dv^{\dot{2}}}{w}.
\end{align*}
The symmetry-reduced action then takes the form
\begin{align*}
    V_1\vee V_2\vee(\Omega_\Sigma\wedge hCS(\tilde{\alpha}) = \frac{(c_1+zc_2)(c_3+zc_4)}{(z-z_\infty)^2}\frac{dz}{w}\wedge \Tr\left(\tilde{\alpha}d\tilde{\alpha}+\frac{2}{3}\tilde{\alpha}^3\right).
\end{align*}
Define the 1-form
\begin{align*}
    \omega = \frac{(c_1+zc_2)(c_3+zc_4)}{(z-z_\infty)^2}\frac{dz}{w}.
\end{align*}
\begin{remark}
    The 1-form $\omega$ must satisfy
    \begin{align*}
        \#\{\text{zeros of }\omega\}-\#\{\text{poles of }\omega\} = 2g-2
    \end{align*}
    which is a consequence of the Riemann-Roch theorem. Let's check this in our case. First, notice that the factor of $1/(z-z_\infty)^2$ introduces a second-order pole at both $(z_\infty,w)$ and $(z_\infty,-w)$ (I called these $p_1$ and $p_2$ earlier), which are cancelled by the second-order zeros at $p_1,p_2$ in $dz/w$. Now, the factor of $(c_1+zc_2)(c_3+zc_4)$ has 4 zeros: $(-c_1/c_2,\pm w)$ and $(-c_3/c_4,\pm w)$. So, we see that
    \begin{align*}
        \#\{\text{zeros of }\omega\}-\#\{\text{poles of }\omega\} = 4 = 2g-2.
    \end{align*}
\end{remark}
To continue, we will have to find an explicit form for the gauge fixed field $\tilde{\alpha}$. Recall that we lifted the field to an honest 1-form in the $\pi^*\mathcal{O}(1)^2$ directions so that the field has $d\bar{v}^{\dot{\alpha}}$ legs. Call the fields in these directions $\bar{\alpha}_{\dot{\alpha}}$.

We then constrain the field in the following way:
\begin{align*}
    \text{Re}V_1\vee \alpha & = 0 \implies (\bar{c}_1+\bar{c}_2\bar{z})\bar{\alpha}_{\dot{1}}+(c_1+c_2z)\alpha_{\dot{1}} = 0\\
    \text{Re}V_2\vee \alpha & = 0 \implies (\bar{c}_3+\bar{c}_4\bar{z})\bar{\alpha}_{\dot{2}}+(c_3+c_4z)\alpha_{\dot{2}} = 0.
\end{align*}
The $\Sigma$ component of the gauge field is left untouched while we gauge fix the other directions to satisfy the constraints by
\begin{align*}
    \tilde{\alpha}|_{\pi^*\mathcal{O}(1)^2} = \left(d\bar{v}^{\dot{1}}-\frac{\bar{c}_1+\bar{c}_2\bar{z}}{c_1+c_2z}dv^{\dot{1}}\right)\bar{\alpha}_{\dot{1}}+\left(d\bar{v}^{\dot{2}}-\frac{\bar{c}_3+\bar{c}_4\bar{z}}{c_3+c_4z}dv^{\dot{2}}\right)\bar{\alpha}_{\dot{2}}
\end{align*}

We then use the new coordinate system $x = C_1v_{\dot{1}}+\bar{C}_1\bar{v}_{\dot{1}}$ and $y = C_2v_{\dot{2}}+\bar{C}_2\bar{v}_{\dot{2}}$ for constants $C_1,C_2$ subject to
\begin{align*}
    C_1(c_1+zc_2)+\bar{C}_1(\bar{c}_1+\bar{z}\bar{c}_2) = 0\quad C_2(c_3+zc_4)+\bar{C}_2(\bar{c}_3+\bar{z}\bar{c}_4) = 0.
\end{align*}
One choice that works is the following:
\begin{align*}
    C_1 = i(\bar{c}_1+\bar{z}\bar{c}_2)\quad C_2 = i(\bar{c}_3+\bar{z}\bar{c}_4).
\end{align*}
Then we see that
\begin{align*}
    \tilde{\alpha}|_{\pi^*\mathcal{O}(1)^2} = -i(C_1dv^{\dot{1}}+\bar{C}_1d\bar{v}^{\dot{1}})\frac{\alpha_{\dot{1}}}{\bar{c}_1+\bar{c}_2\bar{z}}-i(C_2dv^{\dot{2}}+\bar{C}_2d\bar{v}^{\dot{2}})\frac{\alpha_{\dot{2}}}{\bar{c}_3+\bar{c}_4\bar{z}} = \mathcal{A}_xdx+\mathcal{A}_ydy
\end{align*}
where we make the definition
\begin{align*}
    \mathcal{A}_x = -i\frac{\bar{\alpha}_{\dot{1}}}{c_1+c_2z}\quad\mathcal{A}_y = -i\frac{\bar{\alpha}_{\dot{2}}}{c_3+c_4z}.
\end{align*}
The $\Sigma$ direction of our gauge field is unchanged $\mathcal{A}_0 = \alpha_0$ (up to an implicit pullback of the inclusion $\{c_1+zc_2 = c_3+c_4z = 0\}\hookrightarrow\PT_\Sigma$).

Notice that we have introduced poles in our gauge field, which cancel the zeros in $\omega$. We have thus constructed our 4d Chern-Simons theory with action
\begin{align}\label{eq:4dCS_action_Sigma}
    \int_{\Sigma\times\mathbb{R}^2}\omega\wedge CS(\mathcal{A}).
\end{align}

\subsection{Compactifying the 4d Chern-Simons theory}

In \cite{derryberry2021laxformulationharmonicmaps}, Derryberry outlines an abstract method for compactifying 4d Chern-Simons theories on higher genus curves. I will now outline his procedure to better understand the 2d integrable model associated with the 4d Chern-Simons example derived above (equation \ref{eq:4dCS_action_Sigma}).

Derryberry's method for constructing the 2d model involves deriving a metric and 3-form on the target space for a $\sigma$-model with action
\begin{align*}
S[\sigma] = \int_{\mathbb{R}^2} \| d\sigma \|^2 d\mathrm{vol}_{\mathbb{R}^2} 
+ \frac{1}{3} \int_{\mathbb{R}^2 \times \mathbb{R}_{\geq 0}} \tilde{\sigma}^*(\Omega).
\end{align*}
Here $||\cdot||$ is the metric and $\Omega$ is the 3-form. We have also made the standard extension $\tilde{\sigma}$ of $\sigma$ to $\R^2\times\R_{\geq0}$.

The target space $\mathcal{M}$ is a subspace of the moduli of $G$-bundles on $\Sigma$. To define the target space explicitly, we define the divisor 
\begin{align*}
    D_x = \sum_{\text{poles of }\mathcal{A}_x} p
\end{align*}
and similarly for $D_y$. We can then define the subspace $\text{Bun}_G(\Sigma|D_x)\subset \text{Bun}_G(\Sigma)$ to be the open subset of the moduli space of $G$-bundles that satisfies $H^\bullet(\Sigma,\g_P(D_x)) = 0$. Here $\g_P=(P\times \g)/G$ and $\g_P(D_x) = \g_P\otimes\mathcal{O}(D_x)$. We then take the $\sigma$-model target space to be
\begin{align*}
    \mathcal{M} = \text{Bun}_G(\Sigma|D_x).
\end{align*}
Notice that by Serre duality, $\text{Bun}_G(\Sigma|D_x)=\text{Bun}_G(\Sigma|D_y)$. 

The next step is to understand a certain kernel known as the Szeg\"o kernel \cite{derryberry2021laxformulationharmonicmaps,Costello:2019tri}. I did not manage to derive the Szeg\"o kernel for my divisors on the genus 3 curve, so I could not find an explicit description of the 2d action arising from this method. However, we can still gather some important information by studying the general form of this action.

The first thing we can do is count the degrees of freedom of this $\sigma$-model. We will do so by analyzing the target space. The moduli space of $G$-bundles on a genus $3$ curve is $\C^{2\dim\g}$. To see this, notice that locally near the point in the moduli space corresponding to the trivial $G$-bundle, the patch is the quotient of $H^{0,1}(\Sigma)\otimes\g\cong\C^{3\dim\g}$ by the adjoint action of $G$. Concretely, we are quotienting the set of holomorphic structures on the trivial bundle by constant gauge transformations that leave the trivial connections fixed. Quotienting by the adjoint action thus brings the dimension down by $\dim\g$, so we see that the moduli of $G$-bundles is $2\dim\g$ dimensional. So, our $\sigma$-model describes two adjoint valued scalars with an action involving two derivatives.

\subsection*{Evidence that symmetry reduction commutes with compactification}

I will now compare the 2d model given in equation \ref{eq:genus3_Hitchin} with the $\sigma$-model described above. While I could not prove an explicit match between the 2d deformation of Hitchin's equation and the $\sigma$-model (since I do not have an explicit description of the metric/3-form in the Derryberry picture or the higher-order vertex coefficients in the deformed Hitchin model picture), I gather basic evidence that the two theories are consistent.

Let me first describe the degrees of freedom arising from the 2d deformation of Hitchin's equations. Since the symmetry reduction does not change the number of fields, we can perform this analysis on the 4d action. Suppose we turned on a VEV for $B$ in the 4d theory so that $B\to B_0+B$ where $B_0$ is constant and $B$ is generic. Then the action up to quadratic factors is
\begin{align*}
    \int_{\R^4}\Tr\left(BdA+\frac{1}{2}B_0[A,A]+B_0dBdBH\right).
\end{align*}
This is a Higgs mechanism which introduced a mass term for $A$. We can then vary $A$ to find that
\begin{align*}
    dB = [A,B_0].
\end{align*}
This fixes the form of $A$ entirely in terms of $B$ and $B_0$. It also tells us that we can integrate away the mass term for $A$ and use integration by parts to see that the kinetic terms for our theory can be recast as (at least schematically as)
\begin{align*}
    \int_{\R^4}\Tr(dBdB+B_0dBdBH).
\end{align*}
We can then use gauge symmetry to gauge away one of the components of $B$ so that the only remaining degrees of freedom are two of the components of $B$. Our theory thus describes a pair of adjoint valued scalars with a scalar kinetic term deformed by $H$. This is consistent with the Derryberry description. We also see that the higher-order terms each contain two derivatives in $B$. This provides some basic evidence that the symmetry reduction commutes with the compactification on $\PT_\Sigma$.

\subsection{Comments on higher genus 4d Chern-Simons}

When $\Sigma$ is a genus $g>0$ surface, the 4d Chern-Simons compactification procedure is less understood, and only abstract characterizations such as the one discussed above are known. Recently, this was done for a genus 1 curve with gauge fields twisted by a carefully chosen $\sl(N)$ bundle \cite{Lacroix:2023qlz}. In a companion paper \cite{Jarov}, I build an analogous bundle on $\PT_\Sigma$ for $\Sigma$ a genus 1 curve and show that symmetry reducing the theory lands on the one described in \cite{Lacroix:2023qlz}.

\begin{figure}
    \centering
    \includegraphics[width=0.85\linewidth]{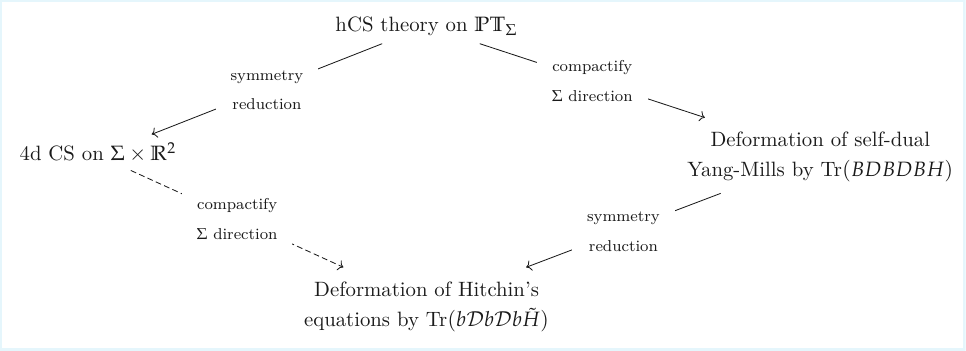}
    \caption{The diamond of theories discussed in this section, starting with the 6d theory at the top and the 4d theories it passes through, with the 2d theory at the bottom. I do not currently have an explicit description of the compactification of the 4d Chern-Simons theory, which is why I left that arrow broken. Note the abbreviations of Chern-Simons (CS) and holomorphic CS (hCS).}
    \label{fig:diamond_theories}
\end{figure}

Let me explain why compactifying on the genus 3 curve is tricky. In the rational ($g=0$) case, the usual first step in compactifying the 4d Chern-Simons action is to demand that the gauge field be pure gauge in the $\Sigma$ direction. When $\Sigma = \CP^1$, this is possible since $\CP^1$ has no first Dolbeault cohomology. In the genus 1 setting, the authors of \cite{Lacroix:2023qlz} were also able to set the $\Sigma$ direction of the field to be gauge trivial because the field was twisted by the $\sl(N)$ bundle, which had vanishing first Dolbeault cohomology. In the setting studied in this paper, $H_{\dbar}^1(\Sigma) = \C^3$, so obviously, there is an obstruction to trivializing the $\Sigma$ direction of the gauge field. My hope is that one could choose clever representatives of the cohomology classes to gauge fix the $\Sigma$ component of $\tilde{\alpha}$ in such a way that compactifying the action in equation \ref{eq:4dCS_action_Sigma} becomes tractable.

If one could fix the coefficients of the correction terms in the symmetry reduced action discussed in \ref{subsec:sym_red}, I believe that the theory should match the compactified 4d Chern-Simons model discussed in \ref{subsec:sym_red_PTSigma}. This correspondence of theories is illustrated in Fig. \ref{fig:diamond_theories}. This would be a genus 3 analog of the correspondence of theories that Bittleston and Skinner found in the genus 0 setting \cite{Bittleston:2020hfv}.

\section{Symmetry reduction for the genus 1 deformation}\label{sec:genus1_symred}

Recall the 4d action arising from the genus 1 deformation described in sections \ref{sec:model_build} and \ref{sec:genus1_bootstrap}
\begin{align*}
    \int_{\mathbb{R}^4}\Tr(BF(A)+\tilde{\Phi}D*D\Phi+B_{\alpha\beta}D_{\dot{\alpha}\delta}\Phi D^{\dot{\alpha}}_\gamma\Phi H^{\alpha\beta\delta\gamma}).
\end{align*}
I will symmetry reduce by the translation group generated by the vector fields given in equation \ref{eq:vec_f_sym}. Using the change of coordinates \ref{eq:s_t_coords}, we take our vector fields to be $\partial_{s_1},\partial_{s_2}$. Note that this setting differs from the genus 3 story since we are not studying a Chern-Simons theory on $\PT_\Sigma$ (meaning there is no 4d-Chern-Simons model we can look for to describe the symmetry reduced model), so there is no diamond correspondence of theories like the one in Fig. \ref{fig:diamond_theories}. This is still an interesting exercise as we expect to find some deformation of Hitchin's equations that describe the moduli of $G$-bundles on a genus 1 curve.

The first step in the reduction is then to demand the vanishing of
\begin{align*}
    \mathcal{L}_{V_i}B=\mathcal{L}_{V_i}A =\mathcal{L}_{V_i}\Phi=\mathcal{L}_{V_i}\tilde{\Phi} =0
\end{align*}
for $i = 1,2$. The treatment of the gauge fields is the same as in subsection \ref{subsec:sym_red} and I will use the same notation. The Lie-derivative vanishing condition on the scalars tells us that they are now independent of the $s_1,s_2$ directions. I will abuse notation slightly by leaving the symbol for the 2d scalars the same as the 4d scalars. 

We can then plug in the form of our fields back into the action. The $BF$ term reduces to the 2d action describing Hitchin's equation as described in subsection \ref{subsec:sym_red}. the scalar-scalar term can then be understand by first noting that
\begin{align*}
    D*D = D_T*D_T-[\phi_1,[\phi_1,\cdot]]-[\phi_2,[\phi_2,\cdot]].
\end{align*}
To understand the term involving $H$, we use the definition of the covariant derivative in equation \ref{eq:2d_cov_der} so that the vertex is given by
\begin{align*}
    \Tr(B_{\alpha\beta}D_{\dot{\alpha}\delta}\Phi D^{\dot{\alpha}}_\gamma\Phi H^{\alpha\beta\delta\gamma}) = \Tr(b_a\sigma^{i\dot{\alpha}}_\gamma \mathcal{D}_i\Phi\sigma^j_{\sigma\dot{\alpha}}\mathcal{D}_j\Phi\omega^a_{\alpha\beta}H^{\alpha\beta\gamma\delta}) = \Tr(b_a \mathcal{D}_i\Phi\mathcal{D}^i\Phi\tilde{H}^{a})
\end{align*}
using the definition $\tilde{H}^{a} = \omega^a_{\alpha\beta}\sigma^{i\dot{\alpha}}_\gamma\sigma_{i\sigma\dot{\alpha}}H^{\alpha\beta\gamma\delta}$. The full action is then
\begin{align*}
    \int_{T}\Tr&\left(
 b_1(F_T + [\phi_1,\phi_2])+
 b_2(D_{t_1}\phi_1 + D_{t_2}\phi_2)+
 b_3(D_{t_2}\phi_1 - D_{t_1}\phi_2\right)
\\
&\quad +\tilde{\Phi}(D_T*D_T-[\phi_1,[\phi_1,\Phi]]-[\phi_2,[\phi_2,\Phi]]) +b_a \mathcal{D}_i\Phi\mathcal{D}^i\Phi\tilde{H}^{a}+...).
\end{align*}

Where the higher-order vertices take the form
\begin{align*}
    \int_{\R^4}\Tr(B\Phi^{2n-2}D\Phi D\Phi H^n)\to \int_{\R^2}\Tr(b\Phi^{2n-2}\mathcal{D}\Phi\mathcal{D}\Phi \tilde{H}^n).
\end{align*}
Note the abuse of notation in writing $\tilde{H}$ above since the indices of these higher-order vertices may contract with $H$ in different ways.

As explained in the discussion of the genus 3 symmetry reduction, this is a profound finding as it suggests that we have deformed Hitchin's equations, which describe the moduli of $G$-bundles over the spacetime $T$, to a (leading order) theory describing the moduli of $G$-bundles over the elliptic curve $\Sigma$. It would be interesting to compute the coefficients of the correction terms to compare this theory with known 2d elliptic models.

\bibliographystyle{JHEP}

\bibliography{bib.bib}

\appendix

\section{Line bundle computations}\label{app:line_bdl}

Fix an elliptic genus 1 curve $\pi:\Sigma\to\CP^1$. This appendix is concerned with computing 
\begin{align*}
    \pi_*K_\Sigma\quad \pi_*\mathcal{O}_\Sigma
\end{align*}
where $K_\Sigma$ and $\mathcal{O}_\Sigma$ are the canonical and trivial bundle of an elliptic curve $\Sigma$ respectively. Note that this is an easy algebraic geometry exercise that I have repeated here in an effort to make the paper self-contained.

First, we note that both will be rank 2 vector bundles over $\CP^1$ as $\pi$ is a 2 to 1 map. This means that
\begin{align*}
    \pi_*K_\Sigma = \mathcal{O}_{\CP^1}(n)\oplus\mathcal{O}_{\CP^1}(m)\quad \pi_*\mathcal{O}_\Sigma=\mathcal{O}_{\CP^1}(\tilde{n})\oplus\mathcal{O}_{\CP^1}(\tilde{m})
\end{align*}
for some integers $n,m,\tilde{n},\tilde{m}$, as any vector bundle over the Riemann sphere is a sum of line bundles. Let's focus on the pushforward of the trivial bundle over $\Sigma$. The first component in the direct sum can easily be seen to be $\mathcal{O}_{\CP^1}$ (ie. $\tilde{n}=0$) as these correspond to functions that are unchanged under changing branches. As for $\tilde{m}$, we count dimensions
\begin{align*}
    g_\Sigma=\dim H^1(\Sigma,\mathcal{O}_\Sigma) = \dim H^1(\CP^1,\mathcal{O}_{\CP^1}\oplus\mathcal{O}_{\CP^1}(\tilde{m})) = -\tilde{m}-1.
\end{align*}
Where $g_\Sigma$ is the genus of $\Sigma$, and we have used the following facts
\begin{align*}
    H^1(M,A\oplus B)=H^1(M,A)\oplus H^1(M,B)\quad \text{and}\quad\dim H^1(\CP^1,\mathcal{O}_{\CP^1}(n))=\begin{cases}
        0\quad&\text{if $n>-2$}\\
        -n-1\quad&\text{else}.
    \end{cases}
\end{align*}
Plugging in $g_\Sigma=1$, we get that $\tilde{m}=-2$. We thus conclude that $\pi_*\mathcal{O}_\Sigma=\mathcal{O}_{\CP^1}\oplus\mathcal{O}_{\CP^1}(-2)$. As for the pushforward of the canonical bundle, we first note that 
\begin{align*}
    \dim H^0(\Sigma,K_\Sigma) = g_\Sigma\quad \text{and}\quad\dim H^1(\Sigma,K_\Sigma) = 1.
\end{align*}
This, along with the formula
\begin{align*}
    \dim H^0(\CP^1,\mathcal{O}_{\CP^1}(n)) = \begin{cases}
        0\quad&\text{if $n<0$}\\
        n+1\quad&\text{else}
    \end{cases}
\end{align*}
gives us the following relation on the integers defining the pushforward of the canonical bundle:
\begin{align*}
    1 = g_\Sigma = \begin{cases}
        n+1\quad&\text{if $n\geq0$ and $m<0$}\\
        m+1\quad&\text{if $m\geq0$ and $n<0$}\\
        n+m+2\quad&\text{if $n,m\geq0$}\\
        0\quad&\text{else}
    \end{cases}\quad1 = \begin{cases}
        -n-1\quad&\text{if $n\leq-2$ and $m>-2$}\\
        -m-1\quad&\text{if $m\leq-2$ and $n>-2$}\\
        -m-n-2\quad&\text{if $n,m\leq-2$}\\
        0\quad&\text{else}.
    \end{cases}
\end{align*}
The choice that satisfies these relations is $n=-2$ and $m=0$. This means that
\begin{align*}
    \pi_*K_\Sigma = \mathcal{O}_{\CP^1}(-2)\oplus\mathcal{O}_{\CP^1}\quad \pi_*\mathcal{O}_\Sigma=\mathcal{O}_{\CP^1}\oplus\mathcal{O}_{\CP^1}(-2).
\end{align*}

\section{OPE calculations}\label{app:OPE}

This appendix explicitly computes some of the OPEs appearing in section \ref{sec:celestial}.

\subsection{OPEs for the genus 1 deformation}

We compute the OPEs of the states defined in equation \ref{eq:chiral_alg_states}:
\begin{align*}
    J_{\pm,a}(\tilde{\lambda}_1;z_1)J_{\pm,b}(\tilde{\lambda}_2;z_2) =& J_{a}(\tilde{\lambda}_1;z_1)J_{b}(\tilde{\lambda}_2;z_2)+\frac{1}{H}\mathcal{O}_{a}(\tilde{\lambda}_1;z_1)\mathcal{O}_{b}(\tilde{\lambda}_2;z_2)\\
    &\quad \pm\frac{1}{\sqrt{H}}(J_{a}(\tilde{\lambda}_1;z_1)\mathcal{O}_{b}(\tilde{\lambda}_2;z_2)+\mathcal{O}_{a}(\tilde{\lambda}_1;z_1)J_{b}(\tilde{\lambda}_2;z_2))\\
    \sim & \frac{2}{\langle12\rangle}f_{ab}^c\left(J_{c}(\tilde{\lambda}_2;z_2))\pm\frac{1}{\sqrt{H}}\mathcal{O}_{c}(\tilde{\lambda}_2;z_2))\right) = \frac{2}{\langle12\rangle}f_{ab}^cJ_{\pm,c}(\tilde{\lambda}_1+\tilde{\lambda}_2;z_2)\\
    J_{+,a}(\tilde{\lambda}_1;z_1)J_{-,b}(\tilde{\lambda}_2;z_2) =& J_{a}(\tilde{\lambda}_1;z_1)J_{b}(\tilde{\lambda}_2;z_2)-\frac{1}{H}\mathcal{O}_{a}(\tilde{\lambda}_1;z_1)\mathcal{O}_{b}(\tilde{\lambda}_2;z_2)\\
    &\quad \pm\frac{1}{\sqrt{H}}(J_{a}(\tilde{\lambda}_1;z_1)\mathcal{O}_{b}(\tilde{\lambda}_2;z_2)-\mathcal{O}_{a}(\tilde{\lambda}_1;z_1)J_{b}(\tilde{\lambda}_2;z_2))
    \sim 0\\
    J_{\pm,a}(\tilde{\lambda}_1;z_1)\tilde{\mathcal{O}}_{\pm,b}(\tilde{\lambda}_2;z_2) = & J_{a}(\tilde{\lambda}_1;z_1)\tilde{\mathcal{O}}_{b}(\tilde{\lambda}_2;z_2)+\frac{1}{H}\mathcal{O}_{a}(\tilde{\lambda}_1;z_1)\tilde{J}_{b}(\tilde{\lambda}_2;z_2)\\
    &\quad \pm\frac{1}{\sqrt{H}}(\mathcal{O}_{a}(\tilde{\lambda}_1;z_1)\tilde{\mathcal{O}}_{b}(\tilde{\lambda}_2;z_2)+J_{a}(\tilde{\lambda}_1;z_1)\tilde{J}_{b}(\tilde{\lambda}_2;z_2))\\
    \sim&  \frac{2}{\langle12\rangle}f_{ab}^c\left(\tilde{\mathcal{O}}_{c}(\tilde{\lambda}_2;z_2))\pm\frac{1}{\sqrt{H}}\tilde{J}_{c}(\tilde{\lambda}_2;z_2))\right) = \frac{2}{\langle12\rangle}f_{ab}^c\tilde{\mathcal{O}}_{\pm,c}(\tilde{\lambda}_1+\tilde{\lambda}_2;z_2)\\
    J_{+,a}(\tilde{\lambda}_1;z_1)\tilde{\mathcal{O}}_{-,b}(\tilde{\lambda}_2;z_2) = &J_{a}(\tilde{\lambda}_1;z_1)\tilde{\mathcal{O}}_{b}(\tilde{\lambda}_2;z_2)-\frac{1}{H}\mathcal{O}_{a}(\tilde{\lambda}_1;z_1)\tilde{J}_{b}(\tilde{\lambda}_2;z_2)\\
    &\quad \pm\frac{1}{\sqrt{H}}(\mathcal{O}_{a}(\tilde{\lambda}_1;z_1)\tilde{\mathcal{O}}_{b}(\tilde{\lambda}_2;z_2)-J_{a}(\tilde{\lambda}_1;z_1)\tilde{J}_{b}(\tilde{\lambda}_2;z_2))\sim0.
\end{align*}

\subsection{OPEs for the genus 3 deformation}

This section computes OPEs for the states given in equation \ref{eq:genus3_OPE}. Note that there is a slight abuse in notation as the states of this theory share some notation with the ones in the previous part of this appendix. Here is the OPE:
\begin{align*}
    J_{\pm,a}(\tilde{\lambda}_1;z_1)J_{\pm,b}(\tilde{\lambda}_2;z_2) & \sim 2\frac{f_{ab}^c}{\langle12\rangle}J_c(\tilde{\lambda}_1+\tilde{\lambda}_2;z_2)\pm 2\frac{1}{\sqrt{H(z_2)}}\frac{f_{ab}^c}{\langle12\rangle}\tilde{J}_c(\tilde{\lambda}_1+\tilde{\lambda}_2;z_2)\\
    &=2\frac{f_{ab}^c}{\langle12\rangle}J_{\pm,c}(\tilde{\lambda}_1+\tilde{\lambda}_2;z_2)\\
    J_{\pm,a}(\tilde{\lambda}_1;z_1)J_{\mp,b}(\tilde{\lambda}_2;z_2) & \sim \frac{f_{ab}^c}{\langle12\rangle}J_c(\tilde{\lambda}_1+\tilde{\lambda}_2;z_2)-\frac{f_{ab}^c}{\langle12\rangle}J_c(\tilde{\lambda}_1+\tilde{\lambda}_2;z_2)\pm \frac{1}{\sqrt{H(z_2)}}\frac{f_{ab}^c}{\langle12\rangle}\tilde{J}_c(\tilde{\lambda}_1+\tilde{\lambda}_2;z_2)\\
    &\quad \mp \frac{1}{\sqrt{H(z_2)}}\frac{f_{ab}^c}{\langle12\rangle}\tilde{J}_c(\tilde{\lambda}_1+\tilde{\lambda}_2;z_2)\\
    & = 0.
\end{align*}

\section{The 3-minus, 1-plus form factor via Feynman diagrams}\label{app:4pt_ff}

This appendix provides a sanity check for the chiral algebra computation of the 4-point form factor done in subsection \ref{subsec:form_factor}. To make the comparison, I just need to find the colour-ordered term in the form factor that closely follows the computation of the 4-point amplitude. The only difference is that we must be careful about moving the derivatives in the vertex on each of the factors of $B$ explicitly, since we cannot apply momentum conservation. With that in mind, the diagrams we need to account for are the $s$ and $t$-channel diagrams shown in Fig. \ref{fig:channel_diagrams} and the contact term in Fig. \ref{fig:contact_term} (but with different terms from each way the derivatives can hit the legs). I will use the same propagator, vertex, and external leg rules I introduced in section \ref{sec:bootstrap}. In particular, we introduce a reference spinor $q^\alpha$ to gauge fix the positive helicity gluons.

The $t$-channel diagram contributes the following terms to the form factor
\begin{align*}
    \frac{\langle q3\rangle[21]}{\langle 4q\rangle\langle43\rangle}\langle1^32^33^2|H\rangle+\frac{\langle q3\rangle}{\langle4q\rangle\langle43\rangle}([32]\langle1^22^33^3|H\rangle+[42]\langle1^22^33^24|H\rangle)\\+\frac{\langle q3\rangle}{\langle4q\rangle\langle43\rangle}([13]\langle1^32^23^3|H\rangle +[14]\langle1^32^23^24|H\rangle)
\end{align*}
while the $s$-channel contributes
\begin{align*}
    \frac{\langle q1\rangle[23]}{\langle4q\rangle\langle41\rangle}\langle1^22^33^3|H\rangle+\frac{\langle q1\rangle}{\langle4q\rangle\langle41\rangle}([31]\langle1^32^23^3|H\rangle+[34]\langle1^22^23^34|H\rangle)\\
    +\frac{\langle q1\rangle}{\langle4q\rangle\langle41\rangle}([12]\langle1^32^33^2|H\rangle+[42]\langle1^22^33^24|H\rangle).
\end{align*}

For each channel diagram, we see three terms corresponding to the position of the derivative on the three copies of $B$ in the $BDBDBH$ vertex.

Adding terms between the $t$ and $s$-channel contributions that share the same contractions with $H$ (eg. we match terms that have a factor of $\langle1^32^33^2|H\rangle$ and terms with a factor of $\langle1^32^23^3|H\rangle$, etc.), along with repeated use of the Schouten identity to remove the gauge fixing spinor $q$, gives us
\begin{align*}
    \frac{[12]\langle13\rangle}{\langle41\rangle\langle34\rangle}\langle1^32^33^2|H\rangle+&\frac{[23]\langle13\rangle}{\langle41\rangle\langle34\rangle}\langle1^22^33^3|H\rangle+\frac{[31]\langle13\rangle}{\langle41\rangle\langle34\rangle}\langle1^32^23^3|H\rangle\\
    &+\frac{[24]}{\langle41\rangle}\langle1^32^33^2|H\rangle+\frac{[24]}{\langle43\rangle}\langle1^22^33^3|H\rangle+\frac{[41]}{\langle43\rangle}\langle1^32^23^3|H\rangle+\frac{[43]}{\langle41\rangle}\langle1^32^23^3|H\rangle.
\end{align*}
\end{document}